\newcommand{\one}{\mathbbm{1}}
\newcommand{\dd}{\mathrm{d}}
\newcommand{\R}{\mathbb{R}}
\newcommand{\Z}{\mathbb{Z}}
\newcommand{\Ibog}{I_d^{\rm{Bog}}}
\newcommand{\f}{\lambda_d^{\rm{LHY}}}
\numberwithin{equation}{section}
\titleformat{\section}{\centering\Large\bfseries}{\thesection \ --}{0.7em}{\Large\bfseries #1}
\titleformat{\subsection}{\centering\large\bfseries}{\thesubsection \ --}{0.4em}{\large\bfseries #1}
\titleformat{\subsubsection}{\centering\bfseries}{\thesubsubsection \ --}{0.4em}{\bfseries #1}
\theoremstyle{definition} \newtheorem{definition}{Definition}[section]
\theoremstyle{plain} \newtheorem{theorem}[definition]{Theorem}
\theoremstyle{plain} 
\theoremstyle{plain} \newtheorem{proposition}[definition]{Proposition}
\theoremstyle{plain} \newtheorem{lemma}[definition]{Lemma}
\theoremstyle{plain} 
\theoremstyle{plain} \newtheorem{remark}[definition]{Remark}
\theoremstyle{definition}
\begin{document}

\markboth{Fournais et al.}{Lower bounds on the energy of the Bose gas}

\title{Lower bounds on the energy of the Bose gas}

\author[1]{Søren Fournais\thanks{fournais@math.au.dk}}

\author[2]{Theotime Girardot\thanks{theotime.girardot@math.au.dk}}

\author[3]{Lukas Junge\thanks{junge@math.au.dk}}

\author[4]{Leo Morin\thanks{leo.morin@math.au.dk}}

\author[5]{Marco Olivieri\thanks{marco.olivieri@math.au.dk}}

\affil[1,2,3,4,5]{Department of Mathematics, Aarhus University, Ny Munkegade 118,  DK-8000 Aarhus C, Denmark}

\maketitle

\begin{abstract}
We present an overview of the approach to establish a lower bound to the ground state energy for the dilute, interacting Bose gas in a periodic box. In this paper the size of the box is larger than the Gross-Pitaevski length scale.
The presentation includes both the 2 and 3 dimensional cases, and catches the second order correction, i.e. the Lee-Huang-Yang term. The calculation on a box of this length scale is the main step to calculate the energy in the thermodynamic limit. However, the periodic boundary condition simplifies many steps of the argument considerably compared to the localized problem coming from the thermodynamic case.
\end{abstract}

\section{Introduction and Main Results}
\subsection{Introduction}
The understanding of the ground state of a Bose gas is of major interest in many-body quantum theory, especially since the first experimental observation of Bose-Einstein condensates \cite{AndEnsMatWieCor_95}. It is a very challenging problem to find properties of this ground state, and the mathematical proof of condensation in the thermodynamic limit is still out of reach. In this paper, we focus on the asymptotic behaviour of the ground state {\it energy} in the dilute limit, both in dimensions $d=2$ and $3$. 

To state the results, we consider a gas of $N$ bosons in a box $\Omega$, in the thermodynamic limit $\lvert \Omega \rvert \rightarrow \infty$, with fixed density $\rho = N / \lvert \Omega\rvert$. The first terms of the expansion of the ground state energy density of such a gas depend only on the scattering length $a$ of the inter-particle potential (as defined in Section \ref{Sec:scatt} below) and the density $\rho$. In the $3$-dimensional case, the ground state energy density has the following expansion in dilute limit $\rho a^3 \rightarrow 0$,
\begin{align}\label{eq:LHY3D}
	e^{\rm{3D}}(\rho) = 4\pi \rho^2 a  \Big( 1 + \frac{128}{15\sqrt{\pi}} \sqrt{\rho a^3} \Big) + o\big(  \rho^2 a\sqrt{\rho a^3} \big).
\end{align}
The leading term of this asymptotic formula was first derived in \cite{Lenz}, and the second term, the Lee-Huang-Yang term, was given in \cite{Bog,LHY}. 
Mathematical proofs of the leading order term were given in \cite{dyson} for the upper bound and in \cite{LY} for the lower bound.
The first upper bound to LHY precision was given in \cite{ESY}, and the correct constant in \cite{YY} with recent improvements in \cite{BCS}, for sufficiently regular potentials.
The matching lower bounds were given in \cite{FS, FS2} including the crucial case of hard core potentials.
The upper bound in the case of potentials with large $L^1$-norm, such as the hard core interactions, is still an open problem. However, the reader may find recent improvements in \cite{BCOPS}.

In the $2$-dimensional case, the asymptotic formula is
\begin{align}\label{eq:MainHC}
	e^{\rm{2D}}(\rho)= 4\pi \rho^2 \delta \Big(1 + \Big[2\Gamma + \frac{1}{2} + \log(\pi) \Big] \delta	\Big) +o(\rho^2 \delta^2),
\end{align}
where $\Gamma \simeq 0.57$ is the Euler-Mascheroni constant and $\delta$ is a small logarithmic parameter given by
\begin{align}\label{eq:def_delta}
	\delta := \frac{1}{|\log(\rho a^2 |\log(\rho a^2)|^{-1})|}.
\end{align}
This formula was first given in \cite{PhysRevE.64.027105,HINES197812,PhysRevLett.102.180404,Yang_2008}, and the leading order was first proven in \cite{MR1827922}. Both upper and lower bounds to second order precision were recently proved in \cite{2DLHY}, and they include the case of hard core interactions.
We refer to \cite{BasCarCen_22,Sch_22} for overview articles.
Similar expansions for Bose gases in $2D$ were obtained, in the Gross-Pitaevskii regime in \cite{CCS2D} or in different regimes, see \cite{MR3986933}.

The case of interacting Fermi gases is equally interesting and has seen major progress in recent years, see for instance
\cite{LSSfermions, Hainzl1, Hainzl2, Hainzl3, benedikter_optimal_2020, benedikter_correlation_2021, falconi_dilute_2021, GiacoFermi, GiacoFermi2, LauritsenFermi, LauritsenFermi2}.

The purpose of the present paper is to explain the proof of lower bounds in \cite{FS} and \cite{2DLHY} for the $3D$ and the $2D$ case, respectively, which are similar in many aspects. The very first step, both in $2D$ and $3D$, is to reduce the problem to lengthscales $\ell$ which are much smaller than the thermodynamic length $L$ but larger than the Gross-Pitaevski length scale. This localization procedure is now quite standard \cite{brietzke_second-order_2020}, but gives rise to technical complications. Mainly, the kinetic energy is inconveniently modified, including localization functions which affect the algebra of calculations and require more involved estimates. For this reason, we decide here to directly consider a gas of bosons on a periodic box of the right $\rho$-dependent length scale and to carry out all the analysis in this setting omitting the localization step. Since many terms are simpler and many errors vanish, this should help the interested reader understand the general strategy of lower bounds for Bose gases.\\

Before introducing the energy and the associated result we need to recall some basic facts about the scattering equation.
\subsection{Scattering length}\label{Sec:scatt}

An important difference between 2 and 3 dimensions concerns the properties of scattering solutions, which can be found in \cite[Appendix A]{greenbook}. We recall here the main definitions, and fix notations.

In this paper we will only consider radial, compactly supported and positive potentials $v:\mathbb{R}^d\rightarrow [0,\infty]$, with $R>0$ such that $\textrm{supp} (v) \subseteq B^d(0,R)$, where we denote by $B^d(y,r)$ the ball of radius $r$ centered in $y$ in $\mathbb{R}^d$.

Let us consider the minimization problem, for an arbitrary $\widetilde{R} > R$,
\begin{equation}\label{variational definition of the scattering length}	\mathrm{E}_{d}(v,\widetilde{R})=\inf_{\varphi}\int_{B^{d}(0,\widetilde{R})}\Big(\vert \nabla \varphi \vert^2+\frac{1}{2}v\varphi^2 \Big)\dd x,
\end{equation}
where the infimum is taken over $\varphi \in H^1(B^{d}(0,\widetilde{R}))$ such that $ \varphi \vert_{\partial B^{d}(0,\widetilde{R})}=1$. 
We define the scattering length $a=a(v)$ by
\begin{equation}
	\mathrm{E}_{2}(v,\widetilde{R})=\frac{2\pi}{\log (\frac{\widetilde R}{a})},\qquad \text{and}\qquad \mathrm{E}_{3}(v,\widetilde{R})=\frac{4\pi a}{1-a/\widetilde R}.
\end{equation}
It is a well-known result that $a$ is independent of $\widetilde{R}>R$.
The associated minimizers are of the form
\begin{equation}
	\varphi_{\mathbb{R}^d}=
	\begin{dcases}
		\frac{1}{\log( \tilde R / a)}\varphi^{0}_{\mathbb{R}^d}, & \text{ if } d=2,\\
		\frac{1}{1-a/\widetilde R} \varphi^{0}_{\R^d}, & \text{ if } d=3,
	\end{dcases}
\end{equation}
where $\varphi^0_{\mathbb R^d}$ solves the scattering equation
\begin{equation} \label{scattering_equation}
	-\Delta \varphi^{0}_{\mathbb{R}^d}+\frac{1}{2}v\varphi_{\mathbb{R}^d}^{0} =0,
\end{equation}
in a distributional sense. The solution is such that, for $\lvert x \rvert \geq R$, we have the explicit form
\begin{align}
	\varphi^{0}_{\mathbb{R}^2}(x) =\log\Big(\frac{\lvert x \rvert}{a}\Big), \quad \text{and} \quad
	\varphi^{0}_{\mathbb{R}^3}(x) :=1-\frac{a}{\lvert x \rvert}.
\end{align}
If $d=3$, we choose $\widetilde R = \infty$ so that $\varphi^0_{\R^3} = \varphi_{\R^3}$. The logarithm in the 2D-scattering solution is clearly unbounded for large values of $\lvert x \rvert$. This is a major difference to the 3D behaviour.
Therefore the length $\widetilde{R}$ is of much greater importance. In this paper, when $d=2$, we choose
\begin{equation}\label{eq:delta}
	\widetilde{R}=ae^{\frac{1}{2\delta}}, \qquad  \text{i.e.}  \qquad \delta=\frac{1}{2}\log\Big(\frac{\widetilde{R}}{a}\Big)^{-1},
\end{equation}
so that \begin{equation}
	\varphi_{\mathbb{R}^2}:=2\delta \varphi^{0}_{\mathbb{R}^2}
\end{equation}
is then normalized to $1$ at distance $\widetilde R$, with $\delta$ given in \eqref{eq:def_delta}.

\subsection{Main result}
We consider $N$ interacting bosons on the torus of unit cell $\Lambda = \big[ - \frac \ell 2, \frac \ell 2 \big]^d$. We define the associated Hamiltonian with periodic boundary conditions
\begin{equation}\label{def:HN}
	\mathcal H_N = \sum_{j=1}^N - \Delta_j + \sum_{1 \leq i < j \leq N} v(x_i-x_j), 
\end{equation}
acting on the space of symmetric square integrable functions $L^2_{\rm{sym}}(\Lambda^N)$, where $-\Delta$ is the periodic Laplacian on $\Lambda$ and the potential depends on $ (x_i - x_j)^{*}$, the distance between particle $i$ and $j$ on the torus. More precisely, we define $x^{*}\in \R$ by
\begin{equation}
	x^{*}=\min_{z\in \mathbb{Z}^{d}}\vert x-z\ell\vert ,
\end{equation} 
and 
\begin{equation}
	v(x)=v_{\R^{d}}(x^{*}), \quad \text{ with } \quad v_{\R^{d}}:\R^{d}\to \R_{+}.
\end{equation}

We assume $v_{\R^{d}}$ to be a \textit{positive, radially symmetric interaction with support in the ball of radius} $R\leq \ell/4$. This condition on the support will be made precise later, all we need for now is the support of $v$ to fit in the box. 
We have here committed a mild abuse of notation using that $v_{\mathbb{R}^d}$ is radially symmetric.  Using the positivity of the potential it is standard that $\mathcal{H}_{N}$ defines a self-adjoint operator.
If $\varphi_{\R^{d}}$ is the scattering solution associated to $v_{\R^{d}}$, we define
\begin{equation}
	\omega_{\R^{d}} := 1- \varphi_{\R^{d}}, \qquad g_{\R^{d}} := v_{\R^{d}}(1-\omega_{\R^{d}})=v_{\R^{d}}\varphi_{\R^{d}},
\end{equation}
and their periodic versions
\begin{equation}
	\omega(x) :=	\omega_{\R^{d}}(x^{*}), \qquad g(x) :=g(x^{*}), \qquad x\in \Lambda.
\end{equation}
Note that we dropped the dependence on $d$ in the notation. 
The function $g$ has a specific role in the analysis, and its Fourier transform satisfies, through a manipulation of the scattering equation \eqref{scattering_equation}, the relation
\begin{equation}\label{def:g00}
	\widehat g(0) = \begin{cases}
		8 \pi \delta, &\text{if } d=2,\\
		8 \pi a, &\text{if } d=3.
	\end{cases}
\end{equation}
Notice that since $R\leq \ell/4$, we have that the Fourier transforms and Fourier coefficients agree at zero, $\mathrm{i.e.}$ $\widehat{g}(0)=\widehat{g}_{\R^{d}}(0)$.
We scale the system in the following way: for a given density $\rho$ we define
\begin{equation}
	\ell:=\frac{K_{\ell}}{\sqrt{\rho \widehat{g}(0)}}
\end{equation}
where $K_{\ell}\gg 1$ is a large $\rho$-dependent parameter chosen in \eqref{condition.on.Kl}. This scaling has to be understood under the dilute regime assumption, that is $\rho a^{d}\leq C^{-1}$ for a large enough constant $C$. The regime $K_{\ell}=1$ corresponds to the well-known Gross-Pitaevskii regime. In this paper, the particular choice $K_{\ell}\gg 1$ is needed to control the errors obtained at the different steps of the proof, as the c-number substitution of Section \ref{Sec:cn} and to go from sums to integrals at a negligible cost, in particular to get the correct LHY constant.

The number $N$ of particles in the box is defined through 
\begin{equation*}
N= \rho \ell^d.
\end{equation*}

We can observe using \eqref{scattering_equation} that the Fourier transform $\widehat{g \omega}(0)$ can be written by means of an auxiliary function
\begin{equation}\label{defG}
	\widehat{g\omega}(0) = \frac{1}{(2\pi)^d} \int_{\mathbb{R}^d} G_d(k) \dd k, \qquad G_d(k) = \frac{\widehat g_{\R^{d}}(k)^2 - \widehat g_{\R^{d}}(0) ^2\one_{d}(\ell_{\delta} k)}{2 k^2},
\end{equation}
where we introduced the cut-off
\begin{equation}
	\one_{d} (t) := 
		\delta_{d,2}\one_{\{|t|\leq 1\}} (t), 
\end{equation}
with $\delta_{i,j}$ being the Kronecker delta, to deal with the 2D case where the Fourier transform of a logarithm involves a renormalization around zero. This renormalization is done at the scale
\begin{equation}
	\ell_{\delta}= \frac{a}{2} e^{\frac{1}{2\delta}}e^{\Gamma} =\frac{1}{2\sqrt{\rho\delta}}e^{\Gamma} (1 + o(1)),
\end{equation}
where we recall that $\Gamma$ is the Euler-Mascheroni constant.
We define the \textit{Lee-Huang-Yang energy} in dimension $d$ as
\begin{equation}\label{def:LHY-2ndenergy}
	E^{\text{LHY}}_d(\rho, \Lambda) := \frac{\rho^2}{2}|\Lambda|\widehat{g}(0)  \f I_d^{\text{Bog}},
\end{equation}
where  
\begin{equation}
	\f = \begin{dcases}  
		\sqrt{\rho a^3}, \quad &\text{if } d=3,\\
		\delta, \quad &\text{if } d=2, 
	\end{dcases}
\end{equation}
is the Lee-Huang-Yang correction order, and 
\begin{equation}\label{eq:ibog}
	I^{\text{Bog}}_{d}:=\Big(\frac{2}{\pi}\Big)^{d/2} \int_{\mathbb{R}^d}  \sqrt{(t^2 + 1)^2 - 1} - t^2 - 1 + \frac{ 1}{2 t^2}\big(1 + \one_d(\sqrt{2\pi} e^{\Gamma} t) \big) \,\dd t,
\end{equation}
is the Bogoliubov integral of dimension $d$.

We also define the LHY error in dimension $d$ denoted $o^{\mathrm{LHY}}_{d}$ as a quantity of smaller order than the LHY precision in term of the small parameter of the dilute regime $\rho a^{d}$. For any error term $\mathcal{E}$ we write $\mathcal{E}=  o^{\mathrm{LHY}}_{d} $ if there exist constants $C >0$ and $\eta >0$ such that
\begin{equation}
	\vert \mathcal{E}\vert \leq \begin{cases}
		C\rho^{2}|\Lambda| \delta^{2+\eta}, & \text{if } d=2, \\
		C \rho^{2} |\Lambda|a\big(\rho a^{3}\big)^{\frac 1 2 + \eta}, &\text{if } d=3.
	\end{cases}
\end{equation}

Let us recall the expressions of $\mathcal{H}_N$ and $\Lambda$ below, for reader's convenience:
\begin{align*}
&\mathcal{H}_N = \sum_{j=1}^N -\Delta_j + \sum_{1\leq i<j \leq N} v(x_i-x_j), \\
&\Lambda = \Big[-\frac{\ell}{2}, \frac{\ell}{2}\Big]^d, \qquad \ell = \frac{K_{\ell}}{\sqrt{\rho \widehat{g}(0)}}.
\end{align*}
We can now state the main theorem of the paper. 

\begin{theorem}\label{thm:main} There exists $C>0$, such that, if $v\in L^2(\Lambda)$ is a positive, spherically symmetric, compactly supported potential with scattering length $a >0$ and if $\rho > 0$ is such that $\rho a^d \leq C^{-1}$, then for any bosonic, normalized state $\Psi$ in the domain of $\mathcal{H}_{N}$ we have
	\begin{equation*}
		\langle \Psi,  \mathcal H_N  \Psi \rangle \geq \frac{1}{2}\rho^{2}\vert \Lambda \vert\widehat g(0)+ E^{\rm{LHY}}_{d}(\rho, \Lambda)+o^{\mathrm{LHY}}_{d}.
	\end{equation*}
I.e. inserting the values of $\widehat g(0)$, $E_d^{\rm{LHY}}$ and $\Ibog$,
\begin{equation}
\inf \mathrm{Spec}(\mathcal{H}_N) \geq \begin{dcases}
4 \pi \rho^2 \lvert \Lambda\rvert a \Big( 1+ \frac{128}{15\sqrt{\pi}}\sqrt{\rho a^3}\Big) + o\big( \rho^2 \lvert \Lambda \rvert a \big), \quad &\text{if } d=3,\\
4 \pi \rho^2 \lvert \Lambda\rvert \delta \Big( 1+ \Big[ 2 \Gamma + \frac{1}{2} + \log(\pi)\Big]\delta \Big) + o\big( \rho^2 \lvert \Lambda \rvert \delta \big), \quad &\text{if } d=2.
\end{dcases}
\end{equation}
\end{theorem}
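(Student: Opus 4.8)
The plan is to follow the standard Bogoliubov strategy for lower bounds, exploiting the fact that the periodic box removes the localization errors that dominate the technical difficulty in the thermodynamic setting. First I would pass to the second-quantized formulation, writing $\mathcal H_N$ in terms of the momentum-mode operators $a_k,a_k^*$ with $k\in\frac{2\pi}{\ell}\mathbb Z^d$, and isolate the condensate mode $k=0$. Because $v$ is singular (and possibly hard core), one cannot work with $\widehat v$ directly: the essential first step is the scattering renormalization. Using the scattering equation \eqref{scattering_equation}, which yields $g=v\varphi=-2\Delta\omega$ and hence the identities \eqref{def:g00} and \eqref{defG}, I would carry out a Dyson-type splitting of the interaction so that the effective two-body coupling becomes $\widehat g$ rather than $\widehat v$. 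This is what converts the naive mean-field constant $\tfrac12\rho^2|\Lambda|\widehat v(0)$ into the correct leading term $\tfrac12\rho^2|\Lambda|\widehat g(0)$ and fixes the coupling constants of the quadratic form in terms of the scattering length.

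The second step is to establish a priori bounds on the number of excited particles $\mathcal N_+=\sum_{k\neq 0}a_k^* a_k$ for any low-energy state. Comparing $\langle\Psi,\mathcal H_N\Psi\rangle$ against a trial upper bound forces $\langle\mathcal N_+\rangle$ to be small relative to $N$, and iterating to control $\langle\mathcal N_+^2\rangle$ and the cross moment with the kinetic energy supplies what is needed to discard the higher-order terms. These estimates also justify the c-number substitution of Section \ref{Sec:cn}: I would replace $a_0,a_0^*$ by a complex number $z$ with $|z|^2\approx N$ via coherent states and the Berezin--Lieb inequality, reducing the problem to a $z$-dependent Hamiltonian in the excitation variables alone, at the cost of an error controlled by $\langle\mathcal N_+\rangle$ and the large parameter $K_\ell$.

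After these reductions the energy splits as $\tfrac12\rho^2|\Lambda|\widehat g(0)+\mathcal Q+\mathcal E_{\mathrm{cubic}}+\mathcal E_{\mathrm{quartic}}$, with
\begin{equation*}
\mathcal Q=\sum_{k\neq 0}\Big[(k^2+\rho\widehat g(k))\,a_k^* a_k+\tfrac12\rho\widehat g(k)\big(a_k^* a_{-k}^*+a_k a_{-k}\big)\Big].
\end{equation*}
I would bound the cubic and quartic remainders by $o^{\mathrm{LHY}}_d$ using Cauchy--Schwarz together with the moment bounds above, absorbing pieces into the kinetic energy at small relative cost, and then diagonalize $\mathcal Q$ by a Bogoliubov transformation. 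Its ground-state energy, made convergent by the renormalization counterterm supplied by $\widehat{g\omega}(0)$ through \eqref{defG}, is the sum
\begin{equation*}
\tfrac12\sum_{k\neq 0}\Big(\sqrt{(k^2+\rho\widehat g(k))^2-(\rho\widehat g(k))^2}-k^2-\rho\widehat g(k)+\tfrac{(\rho\widehat g(k))^2}{2k^2}\Big).
\end{equation*}
Replacing $\widehat g(k)$ by $\widehat g(0)$ up to controlled errors and rescaling $k=\sqrt{\rho\widehat g(0)}\,t$ turns the integrand into $\sqrt{(t^2+1)^2-1}-t^2-1+\tfrac{1}{2t^2}$, identifying the sum with $E^{\mathrm{LHY}}_d(\rho,\Lambda)=\tfrac{\rho^2}{2}|\Lambda|\widehat g(0)\,\f\,\Ibog$.

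The final step is quantitative: since $K_\ell\gg1$ the momentum lattice is fine on the Bogoliubov scale, so the Riemann sum converges to the integral $\Ibog$ of \eqref{eq:ibog} with negligible error, and all remaining discrepancies — sum-to-integral, the substitution $\widehat g(k)\to\widehat g(0)$, the renormalization cutoffs $\one_d$, and the c-number and commutator errors — are collected into $o^{\mathrm{LHY}}_d$. I expect the main obstacle to be the simultaneous control of the cubic terms and the sharpness of the constant: one must show the condensate--excitation cubic interaction is genuinely of lower order while retaining enough of $\mathcal Q$ to produce the exact LHY coefficient. The $d=2$ case needs extra care throughout, because the logarithmic scattering solution forces the renormalization scale $\ell_\delta$ and the cutoff $\one_2$ to be tracked through every estimate.
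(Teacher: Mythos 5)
Your scaffolding (renormalization so that the coupling is $\widehat g$, a priori bounds on excitation numbers, c-number substitution via coherent states, Bogoliubov diagonalization with the $\widehat{g\omega}(0)$ counterterm, and sum-to-integral comparison) matches the paper's Sections 2--4, but your plan fails at the single step that constitutes the real difficulty of this theorem: the cubic terms. You propose to ``bound the cubic and quartic remainders by $o^{\mathrm{LHY}}_d$ using Cauchy--Schwarz together with the moment bounds,'' and in your closing paragraph you assert that one must show the cubic interaction ``is genuinely of lower order.'' That statement is false. For the relevant low-energy states the expectation of $\mathcal Q_3^{\mathrm{ren}}$ is negative and of LHY order: any Cauchy--Schwarz bound of the form $\pm\mathcal Q_3^{\mathrm{ren}}\leq \varepsilon\,\mathcal Q_4^{\mathrm{ren}}+C\varepsilon^{-1}\widehat g(0)\,n_0 n_+/\vert\Lambda\vert$ leaves an error of size $\rho\,\widehat g(0)\langle n_+\rangle\sim \rho^2\vert\Lambda\vert\, a\sqrt{\rho a^3}$ in $3D$ (respectively $\rho^2\vert\Lambda\vert\delta^2$ in $2D$), i.e.\ exactly LHY size with a constant that cannot be made small, and the quartic term it is weighed against is itself of LHY size by \eqref{eq:estimateprioriQ4}. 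Your approach would therefore reproduce the leading order but yield a wrong second-order constant. Physically, the soft pairs --- two high momenta $k$, $p-k$ scattering into the condensate and a low momentum $p$ --- carry a genuine LHY-order energy; this is precisely why the upper-bound construction of \cite{YY} must build cubic correlations into the trial state.

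What the paper does, and what is missing from your proposal, is a cancellation rather than an estimate. First, Proposition~\ref{prop:Q3loc} reduces $\mathcal Q_3^{\mathrm{ren}}$ to the soft-pair operator $\mathcal Q_3^{\mathrm{soft}}$ at the price of $\frac14 \mathcal Q_4^{\mathrm{ren}}$ plus small errors. Then, in Proposition~\ref{prop:Q3z} (via Lemmas~\ref{lem.Q31} and~\ref{lem.Q32}), $\mathcal Q_3^{\mathrm{soft}}(z)$ is combined with the \emph{high-momentum part} $\mathcal K_H^{\mathrm{diag}}$ of the diagonalized quadratic Hamiltonian by completing a square; the square is discarded for the lower bound, and the left-over commutator term $\mathcal T_{\mathrm{com}}\approx -2\rho_z\widehat{g\omega}(0)\sum_{p\in\mathcal P_L}a_p^\dagger a_p$ cancels the positive quadratic remainder $\mathcal Q_2^{\mathrm{ex}}(z)=\rho_z\sum_{k\neq 0}\big(\widehat{g\omega}(k)+\widehat{g\omega}(0)\big)a_k^\dagger a_k$. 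Note that $\mathcal Q_2^{\mathrm{ex}}$ does not even appear in your quadratic form $\mathcal Q$: your diagonal coefficient is only $k^2+\rho\widehat g(k)$, whereas the renormalization of Lemma~\ref{lem:potential_splitting} produces additional $\widehat{g\omega}$ diagonal terms, which are themselves of LHY order and are exactly what the soft pairs must cancel. This three-way bookkeeping between $\mathcal Q_3^{\mathrm{soft}}$, $\mathcal K_H^{\mathrm{diag}}$ and $\mathcal Q_2^{\mathrm{ex}}$ is also what forces the low/high momentum splitting $\mathcal P_L,\mathcal P_H$, the restriction to states with $n_+^L\leq\mathcal M$ via localization of large matrices (\ref{sec:largeM}), and the separate analysis of the regime $\vert\rho_z-\rho\vert\geq\rho\varepsilon_+$ (\ref{app:rho_far}); none of this can be replaced by moment bounds on $n_+$ alone.
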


\begin{remark}[Bogoliubov integral]
	The integral \eqref{eq:ibog} can be explicitly calculated and provides the expected coefficients for the LHY corrections
	\begin{equation}\label{def:Idbog}
		I^{\rm{Bog}}_{d}=\begin{dcases}
			2\Gamma +\frac{1}{2}+\log \pi, \quad &  \text{if } d=2, \\
			\frac{128}{15\sqrt{\pi}}, \quad & \text{if } d=3.
		\end{dcases}
	\end{equation}
	Notice furthermore, that the whole second order term $E_d^{\text{LHY}}$ of the energy comes from the calculation of the integral 
	\begin{equation}
		\frac{|\Lambda|}{2(2\pi)^d} \int_{\mathbb{R}^d} \Big( \sqrt{k^4 + 2 k^2 \rho \widehat g(k)} - k^2 - \rho \widehat g(k) + \rho^2  G_d(k) \Big) \dd k,
	\end{equation}
	from which we recover \eqref{def:LHY-2ndenergy} thanks to a change of variables $k \mapsto \sqrt{\rho \widehat{g}(0)}k$, and a passage to the limit $\rho a^{d}\to 0$.
	
\end{remark}
\begin{remark}[Assumptions on the potential]
The $L^2$ assumption on $v$ in Theorem~\ref{thm:main} is technical and not needed in the actual papers dealing with the thermodynamic limit \cite{FS2,2DLHY}, where $L^1$ suffices. In the present paper we need this assumption in the comparison between the discrete sums over the dual lattice and the corresponding continuous integrals (see \eqref{eq:BadEstimate} and the proof of Proposition~\ref{prop:BogDiag}). Actually, for this point the assumption $v \in L^{p}(\Lambda)$ for any $p>6/5$ would suffice.

	These $L^{p}$-assumptions on the potential $v$ exclude the hard core case. These assumptions are actually also not necessary. Indeed, the inequalities of the proof in the thermodynamic setting allow for a large $L^{1}$-norm. This is enough to extend the result to the hard core case approximating it through a sequence of growing $L^{1}$-potentials. See \cite[Theorem 1.6]{FS} and \cite[ Section 3.3]{2DLHY} for the $3D$ and $2D$-case respectively.
	
	The compact support assumption on the potential $v$ can also be relaxed in the thermodynamic regime. We can allow for a tail under a proper decay assumption provided that, avoiding the contribution from the tail does not affect the scattering length too much. See \cite[Theorem 1.6]{FS} and \cite[ Section 3.2]{2DLHY}.
\end{remark}
\begin{remark}
	The present article reviews, in the simpler setting of the periodic box, results stated in \cite[Theorem 1.3]{FS} and \cite[Theorem 2.3]{2DLHY} for the $3D$, $2D$-case respectively, neglecting the complications derived from the double localization for the thermodynamic limit. Nevertheless we included an original bound on the number of high momentum excitations \eqref{bound_n_+H}. Similar results in three dimension were proven in \cite{BrennSchlein} with different methods. 
\end{remark}
\begin{remark}
As already mentioned, the purpose of the present paper is mainly expository. The main ideas of \cite{FS, FS2, 2DLHY} are clearest in the periodic setting, which is the setting of this paper. To prove the analogous lower bound in the thermodynamic setting one would first need to localize in such periodic boxes, but it is not clear how to make such a localization with the right precision. Indeed, in \cite{FS, FS2, 2DLHY}, the localization is done by a sliding technique which produces a much more complicated kinetic energy in the boxes. 

In the papers \cite{BocSeir, HainzlNamTriay} the corresponding localization procedure is done by imposing Neumann bounday conditions which also introduces substantial technical difficulties compared to the periodic case. 
\end{remark}

\subsection{Strategy of the proof}

\begin{enumerate}
	\item \textbf{Splitting of the potential and renormalization.}
	We expect the ground state of our operator to exhibit condensation, meaning that most particles should have zero momentum. This is why we start by decomposing the potential energy according to creation or annihilation of bosons with zero and non-zero momenta. We define the following operators on $L^2(\Lambda)$, denoting by $|1\rangle$ the function which has constant value $1$ on $\Lambda$,
	\[ P = \vert \Lambda \vert^{-1} | 1 \rangle \langle 1| , \qquad Q = \one - P = \one_{(0,\infty)}( \sqrt{- \Delta}), \]
	projecting on the condensate and on excitations respectively. We recall that here $-\Delta $ is the periodic Laplacian on $\Lambda$. With this notation, the number of particles in the condensate $n_0$, and the number of excited particles $n_+$ are given by
	\[ n_0 := \sum_{j=1}^N P_j, \qquad n_+ := \sum_{j=1}^N Q_j = N - n_0, \]
	where $P_j$ and $Q_j$ denotes $P$ and $Q$ acting on the $j$-th variable. We insert these projections in the potential energy,
	\begin{equation}
		\sum_{i<j}v(x_i-x_j) = \sum_{k=0}^4 \mathcal Q_k,
	\end{equation}
	where $\mathcal Q_k$ contains precisely $k$ occurences of $Q$'s. For instance, 
	\begin{equation}
		\mathcal Q_0 = \sum_{i<j} P_i P_j v(x_i-x_j) P_j P_i.
	\end{equation}
	One should also note that $\mathcal Q_1 = 0$ by momentum conservation.

We need terms to depend on $g$ instead of $v$ in order for the scattering length to appear. We are able to overcome this problem modifying each $\mathcal{Q}_j$ into $\mathcal{Q}^{\mathrm{ren}}_j$ and collecting in the last term $\mathcal{Q}_4^{\text{ren}}$, which is positive, all the error terms produced by the renormalization. 
For a lower bound $\mathcal{Q}_4^{\text{ren}}$ can be discarded.
	\item \textbf{c-number substitution.}
	From this point on, we work in momentum space and second quantization; the operator can be rewritten in terms of creation and annihilation operators of plane waves $a_k^\dagger$, $a_k$ (Proposition \ref{prop:Hsecondquant}). The next step is a rigorous justification of the so-called \emph{c-number substitution}, which is given by expanding the operator on projectors on coherent states living in the $0$-momentum space. This allows us to replace $a_0$ and $a_0^\dagger$ by their actions as multiplication by complex numbers $z$ on the coherent states (Proposition \ref{prop:projonz}). This amounts to consider the condensate of $0$-momenta particles having fixed density $\rho_z= \vert z \vert^2 \vert \Lambda \vert^{-1}$ and to only work on the remaining degrees of freedom in the space of excitations.
	\item \textbf{Bogoliubov diagonalization.}
	We first focus on $\mathcal{Q}_{0}^{\mathrm{ren}} $ and the quadratic excitation operator $\mathcal Q_2^{\rm{ren}}$. The sum of these with the kinetic energy produces a $\mathcal K(z)$ that can be diagonalized, as in the standard Bogoliubov theory. This procedure gives rise to the Bogoliubov integral $I_d^{\rm{Bog}}$, times the LHY order, which is the second order term of the energy, together with a positive diagonal operator $\mathcal K^{\rm{diag}}$ (Proposition \ref{prop:BogDiag}). The remaining quadratic terms have to be bounded by the contribution given by the soft-pairs in $\mathcal{Q}_3^{\text{ren}}$, introduced in the next step.
	\item \textbf{Localization of 3Q terms.}
	One of the major difficulties is to deal with the 3Q terms $\mathcal Q^{\rm{ren}}_3$. These terms can be interpreted as the energy generated by one pair of excited momenta, interacting to give one zero and one excited momentum or the other way around. The upper bound calculations of \cite{YY} show that such pairs are crucial to find the correct energy to LHY precision, and especially the \emph{soft pairs}. Those pairs have high momentum, and interact to create one zero momentum and one low momentum. In fact, we show in Proposition \ref{prop:Q3loc} that $\mathcal Q_3^{\rm{ren}}$ gives almost the same contribution to the energy as the analogue soft pairs operator $\mathcal Q_3^{\rm{soft}}$.
	\item \textbf{The energy of soft pairs.}
	Section \ref{Sec:softpair} is dedicated to the bounds on $\mathcal Q_3^{\rm{soft}}$. It absorbs the remaining part of the quadratic energy $\mathcal Q_2^{\rm{ex}}$, using the high momenta part of $\mathcal K^{\rm{diag}}$. The precise understanding of the $\mathcal Q_3^{\rm{soft}}$ is a key calculation in our approach.
	\item \textbf{Bounds on the number of excitations.}
	Most of our bounds require estimates on the number of excited particles $n_+$, the number of high-momenta excited particles $n_+^H$ and the number of low momenta excited particles $n_+^L$. In \ref{sec:largeM}, we use the technique called \emph{localization of large matrices} to show that we can restrict to states having bounded $n_+^L$. In \ref{app:C}, we directly get bounds on $n_+$ and $n_+^H$, i.e., \emph{condensation estimates} on $\Lambda$.
	\item \textbf{Conclusion.}
	In the final Section~\ref{sec:concl} we combine all the estimates to finish the proof of Theorem~\ref{thm:main}.

\end{enumerate}

The proof depends on several parameters that have to be suitably tuned. These parameters and their relations are collected in \ref{app:parameters}.
\section{Splitting of the Potential Energy and Renormalization}
By means of the projectors onto and outside the condensate, we split the potential in a sum of operators by expanding
\begin{equation}
	v(x_{i}-x_{j})=(P_{i}+Q_{i})(P_{j}+Q_{j})v(x_{i}-x_{j})(P_{j}+Q_{j})(P_{i}+Q_{i})\nonumber
\end{equation}
and reorganize it as a sum of $\mathcal{Q}_j$, where in each $\mathcal{Q}_j$, the projector $Q$ is present $j$ times. An idea similar to this already appeared in the early work \cite{GrafSol}. We then renormalize   the $\mathcal{Q}_j$ to obtain $\mathcal{Q}_j^{\rm{ren}}$ where $v$ has been replaced by $g$. More precisely we have
\begin{lemma}\label{lem:potential_splitting}
The following algebraic identity holds
\begin{equation}
\frac{1}{2}\sum_{i\neq j} v(x_i - x_j) = \sum_{j=0}^4 {\mathcal Q}_j^{\rm ren},
\end{equation}
where
\begin{align}
0 \leq {\mathcal Q}_4^{\rm ren}&:=
\frac{1}{2} \sum_{i\neq j} \Big[ Q_i Q_j + \left(P_i P_j + P_i Q_j + Q_i P _j\right)\omega(x_i-x_j) \Big] v(x_i-x_j) \nonumber \\
&\,\qquad \qquad \times
\Big[ Q_j Q_i + \omega(x_i-x_j) \left(P_j P_i + P_j Q_i + Q_j P_i\right)\Big],\label{eq:SF_DefQ4}\\
{\mathcal Q}_3^{\rm ren}&:=
\sum_{i\neq j} P_i Q_j g(x_i-x_j) Q_j Q_i + h.c.,
 \label{eq:SF_DefQ3}\\
{\mathcal Q}_2^{\rm ren}&:=
\sum_{i\neq j} P_i Q_j (g+ g\omega)(x_i-x_j)P_j Q_i   + \sum_{i\neq j} P_i Q_j (g+g\omega)(x_i-x_j)Q_j P_i   \nonumber\\
&\quad+\frac{1}{2}\sum_{i\neq j} P_iP_j g(x_i-x_j) Q_j Q_i + h.c.,
 \label{eq:SF_DefQ2}\\
{\mathcal Q}_1^{\rm ren}&:=\sum_{i,j} \big(Q_i P_j (g + g \omega)(x_i-x_j) P_j P_i  + h.c.\big ) = 0,  \label{eq:SF_DefQ1}
\intertext{and}
{\mathcal Q}_0^{\rm ren}&:= \frac{1}{2} \sum_{i \neq j} P_i P_j (g+ g\omega)(x_i-x_j) P_j P_i. \label{eq:SF_DefQ0}
\end{align}
\end{lemma}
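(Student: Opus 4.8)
The plan is to prove this as a purely algebraic identity, with the positivity of $\mathcal{Q}_4^{\rm ren}$ as a separate observation. The starting point is the exact operator identity for each pair $(i,j)$ obtained by inserting $P+Q = \one$ on both sides of $v(x_i-x_j)$:
\begin{equation*}
v(x_i-x_j) = (P_i+Q_i)(P_j+Q_j)\, v(x_i-x_j)\, (P_j+Q_j)(P_i+Q_i).
\end{equation*}
Expanding the product gives sixteen terms, which I would sort by the total number of $Q$ factors into the raw pieces $\mathcal{Q}_0,\dots,\mathcal{Q}_4$ (before renormalization), noting $\mathcal{Q}_1=0$ by momentum conservation since $v$ only couples momenta that sum to zero and a single $Q$ leaves an uncancelled nonzero momentum acting against three $P$'s. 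So far no scattering physics has entered; this is bookkeeping.

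First I would perform the renormalization, which is the substance of the lemma: replacing $v$ by $g$ wherever the scattering length should appear. The mechanism is the defining relation $g_{\R^d} = v_{\R^d}\varphi_{\R^d} = v_{\R^d}(1-\omega_{\R^d})$, equivalently $v = g + v\omega$. The idea is to write $v = g + v\omega$ in each raw $\mathcal{Q}_k$ and move the leftover $v\omega$ terms ``upward'' so that everything not matching the clean $g$-form in $\mathcal{Q}_0,\dots,\mathcal{Q}_3$ is swept into a single positive remainder $\mathcal{Q}_4^{\rm ren}$. Concretely, one inserts $1 = \varphi + \omega$ (i.e.\ $1 = (1-\omega)+\omega$) on each side of $v$; the factor $v\varphi = g$ produces the renormalized terms, while each factor of $\omega$ paired with a $v$ produces a term that, after regrouping, completes a perfect square. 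The bracketed structure in \eqref{eq:SF_DefQ4} is exactly such a square: it has the schematic form $\tfrac12\sum_{i\neq j} A_{ij}^\dagger\, v(x_i-x_j)\, A_{ij}$ with
\begin{equation*}
A_{ij} = Q_jQ_i + \omega(x_i-x_j)\big(P_jP_i + P_jQ_i + Q_jP_i\big),
\end{equation*}
so that $\mathcal{Q}_4^{\rm ren} = \tfrac12\sum_{i\neq j} A_{ij}^\dagger\, v(x_i-x_j)\, A_{ij}\geq 0$ follows immediately from $v\geq 0$. The task is then to verify that expanding this square and recombining with the explicit $g$-dependent $\mathcal{Q}_0^{\rm ren},\mathcal{Q}_2^{\rm ren},\mathcal{Q}_3^{\rm ren}$ reproduces the original $\tfrac12\sum_{i\neq j}v(x_i-x_j)$ term by term. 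I would organize this by the number of $Q$'s, checking the $0Q$, $2Q$, $3Q$, and $4Q$ sectors separately; in each sector the coefficient of every $P/Q/\omega/v$ monomial must match, using $v\omega = v - g$ to convert between the two potentials and the symmetry $x_i-x_j \leftrightarrow -(x_i-x_j)$ together with the evenness of the radial functions to pair a term with its hermitian conjugate.

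The main obstacle will be the combinatorial discipline of the $2Q$ and $3Q$ sectors, where cross terms from the square in \eqref{eq:SF_DefQ4} must exactly cancel the $v\omega$ remainders left behind when $\mathcal{Q}_2$ and $\mathcal{Q}_3$ are rewritten in terms of $g$. In particular the appearance of $g+g\omega$ (rather than just $g$) in $\mathcal{Q}_2^{\rm ren}$ is the signature of one surviving $\omega$-factor, and I would need to track carefully which of the two $\omega$-insertions survives into $\mathcal{Q}_2^{\rm ren}$ versus which is absorbed into the square; likewise the asymmetric placement $P_iQ_j\cdots Q_jQ_i$ in $\mathcal{Q}_3^{\rm ren}$ must be matched against the corresponding off-diagonal entries of $A_{ij}^\dagger v A_{ij}$. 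Because all operators here are multiplication or projection operators that commute in the relevant combinations (the only non-commutativity is between $P_i,Q_i$ on the same particle, where $P_iQ_i=0$), the verification is finite and mechanical once the monomials are enumerated, so I expect no genuine analytic difficulty — only the need to be scrupulous with hermitian conjugates and with the factor $\tfrac12$ arising from $\sum_{i\neq j}$ versus $\sum_{i<j}$.
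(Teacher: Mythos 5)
Your proposal is correct and follows essentially the same route as the paper: the paper's proof is exactly the algebraic verification you outline, using $g = v(1-\omega)$ (equivalently $v\omega = v-g$ and $v - \omega v \omega = g + g\omega$) to match the raw $Q$-sectors against the renormalized ones, with positivity of $\mathcal{Q}_4^{\rm ren}$ read off from its perfect-square form $\tfrac12\sum_{i\neq j}A_{ij}^{\dagger}\,v(x_i-x_j)\,A_{ij}$ and $v\geq 0$. The only difference is cosmetic: for $\mathcal{Q}_1^{\rm ren}=0$ the paper computes directly that $Q_i P_j f(x_i-x_j) P_j P_i = \lvert\Lambda\rvert^{-1}\|f\|_{L^1}\, Q_i P_j P_i = 0$ for any $f\in L^1(\Lambda)$, which is the position-space version of your momentum-conservation argument.
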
 

\begin{proof}
The lemma is proven by algebraic computations using that $g = v (1 - \omega)$, and $\mathcal{Q}_1^{\rm ren}$ is zero because, for any $f \in L^1(\Lambda)$,
\begin{equation*}
Q_i P_j f(x_i-x_j) P_j P_i =\frac{1}{\lvert \Lambda \rvert} \|f\|_{L^1} Q_i P_i = 0. 
\end{equation*}
\end{proof}

We continue our analysis in momentum space considering the second quantization of the Hamiltonian. Let us introduce 
\begin{equation}
a_k^{\dagger}:= \frac{1}{|\Lambda|^{1/2}}a^{\dagger}(e^{ikx} ), \qquad a_k:= \frac{1}{|\Lambda|^{1/2}}a(e^{ikx} ),
\end{equation}
$\mathrm{i.e.}$ the usual bosonic creation and annihilation operators of bosons with momentum $k \in \Lambda^* = \frac{2 \pi }{\ell}\mathbb{Z}^d$. Note that for zero momentum, $a_0^\dagger$ creates the function $1$, the \emph{condensate} in $\Lambda$. 
The operator $\mathcal{H}_N$ can be written, by abuse of notation, as the action on the $N-$boson space of a second quantized Hamiltonian acting on the Fock space $\mathscr{F}_s(L^2(\Lambda)) = \bigoplus_{N=0}^{\infty} L_s^2(\Lambda^N)$ involving $a_k$ and $a^{\dagger}_k$. We can write the number operators as
\begin{equation}
n_0 = a^{\dagger}_0 a_0, \qquad n_+= \sum_{k \in \Lambda^*} a^{\dagger}_k a_k.
\end{equation}
\begin{proposition}\label{prop:Hsecondquant}
The Hamiltonian $\mathcal H_N$ acts on $L_{s}^2(\Lambda^N)$ as
\begin{align}\label{eq:second_quant_Ham}
\mathcal H_N &= \sum_{k \in \Lambda^*}k^2 a_k^\dagger a_k +\frac{1}{2 \vert \Lambda \vert} \big( \widehat g (0) + \widehat{g \omega}(0) \big) a_0^\dagger a_0^\dagger a_0 a_0  \nonumber \\
& \quad + \frac{1}{|\Lambda|} \sum_{k \in \Lambda^*}\Big( (\widehat g(k) + \widehat{g \omega}(k)) a_0^\dagger a_k^\dagger a_k a_0 + \frac 1 2 \widehat g(k) (a_0^\dagger a_0^\dagger a_k a_{-k} + h.c.) \Big)  \nonumber \\
& \quad + \Big( \widehat g(0)   +\widehat{g \omega}(0)    \Big) \frac{n_0 n_+}{\vert \Lambda \vert } + \mathcal Q_3^{\rm{ren}} + \mathcal Q_4^{\rm{ren}}.
\end{align}
\end{proposition}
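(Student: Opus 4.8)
The plan is to second-quantize the kinetic and potential parts of $\mathcal H_N$ separately. Since the normalized plane waves $u_k = |\Lambda|^{-1/2}e^{ikx}$, $k\in\Lambda^*$, form an orthonormal eigenbasis of the periodic Laplacian with $-\Delta u_k = k^2 u_k$, and since $a_k^\dagger$ creates $u_k$, the kinetic energy $\sum_{j=1}^N -\Delta_j$ is the second quantization of a one-body operator diagonal in this basis and becomes $\sum_{k\in\Lambda^*} k^2 a_k^\dagger a_k$ directly. For the interaction I would start from the algebraic splitting $\frac12\sum_{i\neq j}v = \sum_{j=0}^4\mathcal Q_j^{\rm ren}$ of Lemma~\ref{lem:potential_splitting} and translate each $\mathcal Q_j^{\rm ren}$ into creation and annihilation operators. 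The terms $\mathcal Q_3^{\rm ren}$ and $\mathcal Q_4^{\rm ren}$ are simply carried over unchanged, $\mathcal Q_1^{\rm ren}=0$, and all the real work is in $\mathcal Q_0^{\rm ren}$ and $\mathcal Q_2^{\rm ren}$.

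The single computation behind the whole translation is the second quantization of a two-body multiplication operator: for radial $f$,
\[
\frac12\sum_{i\neq j} f(x_i-x_j) = \frac{1}{2|\Lambda|}\sum_{k,p,q\in\Lambda^*}\widehat f(k)\, a_{p+k}^\dagger a_{q-k}^\dagger a_q a_p,
\]
where $\widehat f$ is the Fourier coefficient and $k$ is the transferred momentum (momentum conservation being automatic). The effect of the projections $P$ and $Q$ inside each $\mathcal Q_j^{\rm ren}$ is to restrict the corresponding created/annihilated momenta to $0$ (for $P$) or to $\Lambda^*\setminus\{0\}$ (for $Q$); these restrictions fix the transferred momentum $k$ and hence the Fourier coefficient out front. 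Thus $\mathcal Q_0^{\rm ren}$, with all four momenta zero, gives $\frac{1}{2|\Lambda|}(\widehat g(0)+\widehat{g\omega}(0)) a_0^\dagger a_0^\dagger a_0 a_0$. The three sub-terms of $\mathcal Q_2^{\rm ren}$ give, in order: the exchange term $\frac{1}{|\Lambda|}\sum_k(\widehat g(k)+\widehat{g\omega}(k))a_0^\dagger a_k^\dagger a_k a_0$ (transfer $k\neq0$); the diagonal term, for which the transferred momentum is forced to vanish so that the coefficient is $\widehat{(g+g\omega)}(0)$ and the operator collapses to $\frac{1}{|\Lambda|}(\widehat g(0)+\widehat{g\omega}(0))n_0 n_+$; and the pairing term $\frac{1}{2|\Lambda|}\sum_k\widehat g(k)(a_0^\dagger a_0^\dagger a_k a_{-k}+h.c.)$. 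Summing these with the kinetic term and the untouched $\mathcal Q_3^{\rm ren}+\mathcal Q_4^{\rm ren}$ reproduces \eqref{eq:second_quant_Ham}.

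The calculation is essentially routine, so the main obstacle is bookkeeping rather than anything conceptual. One must keep track of the factor $\tfrac12$ (present for $\mathcal Q_0^{\rm ren}$ and for the pairing sub-term, but absent for the first two sub-terms of $\mathcal Q_2^{\rm ren}$, which carry $\sum_{i\neq j}$ and hence an extra factor $2$), use the evenness of $\widehat g$ and $\widehat{g\omega}$ to reindex $k\mapsto -k$ freely, and use the commutation $a_0^\dagger a_k^\dagger a_0 a_k = a_0^\dagger a_k^\dagger a_k a_0$ for $k\neq0$. The one genuinely important point to get right is that in the second sub-term of $\mathcal Q_2^{\rm ren}$ (the $P_iQ_j\,\cdot\,Q_jP_i$ pattern) the two projections attached to each particle force zero momentum transfer, which is precisely what turns it into the diagonal $n_0n_+$ interaction with the $k$-independent coefficient $\widehat{(g+g\omega)}(0)$, as opposed to the $k$-dependent exchange term produced by the $P_iQ_j\,\cdot\,P_jQ_i$ pattern. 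Finally I would note, as in the text, that the hypothesis $R\le\ell/4$ guarantees that the torus Fourier coefficients appearing here coincide with the full-space Fourier transforms of $g$ and $g\omega$, so that the coefficients $\widehat g(0)$ and $\widehat{g\omega}(0)$ are the ones recorded in \eqref{def:g00} and \eqref{defG}.
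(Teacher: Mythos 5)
Your proposal is correct and follows essentially the same route as the paper: second-quantize the kinetic term directly, start from the splitting of Lemma~\ref{lem:potential_splitting}, carry $\mathcal Q_3^{\rm ren}$, $\mathcal Q_4^{\rm ren}$ over unchanged, and translate $\mathcal Q_0^{\rm ren}$ and the three sub-terms of $\mathcal Q_2^{\rm ren}$ into creation/annihilation operators, with the $P_iQ_j\,\cdot\,Q_jP_i$ pattern collapsing to the $n_0 n_+$ term. The only (immaterial) difference is presentational: the paper evaluates the $P\cdots P$ sandwiches in position space via the identity \eqref{eq:PgPformula} and works one momentum-space example, whereas you apply the general plane-wave matrix-element formula uniformly to all terms.
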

\begin{proof}
The first term of \eqref{eq:second_quant_Ham} is obtained by a simple application of the second quantization to the Laplacian. 
The other terms require some manipulations with the $\mathcal Q^{\rm ren}_j$. We observe that 
\begin{equation}\label{eq:PgPformula}
\sum_{j=1}^n P_j g(x_i- x_j) P_j = \frac{1}{\vert \Lambda \vert} \sum_{j=1}^n P_j \int_{\Lambda} g(x_i-y) dy = \frac{n_0}{|\Lambda|} \widehat g(0).
\end{equation}
In particular $\mathcal{Q}_0^{\rm ren}$ is
\begin{equation}
\mathcal{Q}_0^{\rm ren} = \frac{n_0(n_0 -1)}{2|\Lambda|} (\widehat{g}(0) + \widehat{g\omega}(0)), 
\end{equation}
and by the second quantization we get the second term in \eqref{eq:second_quant_Ham}.
For $\mathcal{Q}_2^{\rm ren}$, we use \eqref{eq:PgPformula} for 
\begin{equation}\label{eq:PQPQ_n+estimate}
\sum_{i\neq j} P_i Q_j (g+g\omega)(x_i-x_j)Q_j P_i  =  \Big( \widehat g(0)   +\widehat{g \omega}(0)    \Big) \frac{n_0 n_+}{\vert \Lambda \vert }.
\end{equation}
The second quantization of the whole $\mathcal{Q}_2^{\rm ren}$ is obtained by a standard calculation which provides the third and fourth terms of \eqref{eq:second_quant_Ham}. We only provide here an example of this calculation for the term
\begin{equation}
\mathcal{Q}_2^{1}:=\sum_{i\neq j}P_{i}Q_{j}g(x_{i}-x_{j})P_{j}Q_{j}.
\end{equation}
We denote the basis elements $e_{p}(x)=\frac{e^{ipx}}{\sqrt{\vert\Lambda\vert}}$ and write a $\Psi\in L^{2}(\Lambda^{N})$ as
$$\Psi =\sum_{p,k}c_{pk}e_{p}(x_{j})e_{k}(x_{i})\quad \text{with}\quad c_{pk}=\frac{1}{\sqrt{N(N-1)}}a_{p}a_{k}\Psi .$$ 
We can then compute
\begin{align}
	\mathcal{Q}_2^{1}\Psi&=\frac{1}{\vert \Lambda\vert}\sum_{k\neq 0}\widehat{g}(k)\sum_{i\neq j}e_{k}(x_{j})e_{0}(x_{i})a_{0}a_{k}\Psi\\
	&=\frac{1}{\vert \Lambda\vert}\sum_{k\neq 0}\widehat{g}(k)a_{k}^{\dagger}a_{0}^{\dagger}a_{0}a_{k}\Psi.
\end{align}
\newline
\end{proof}

\section{c-Number Substitution}\label{Sec:cn}

Now that the operator is written in second quantization, as stated in Proposition~\ref{prop:Hsecondquant}, we proceed to the $c$-number substitution. Thanks to this procedure, we can turn the action of the $a_0$'s into multiplication by complex numbers $z$. It amounts to consider the condensate of $0$-momentum particles as having a fixed density $\rho_z = \vert z \vert^2 \vert \Lambda \vert^{-1}$, and only deal with excitations. This is done by diagonalizing $a_0$ in the following way. The decomposition $L^{2}(\Lambda)=\mathrm{Ran}P \oplus \mathrm{Ran}Q$ leads to the splitting of the bosonic Fock space $\mathscr{F}_s(L^{2}(\Lambda))=\mathscr{F}_s(\mathrm{Ran}P)\otimes\mathscr{F}_s(\mathrm{Ran}Q)$.
	Denoting by $\Omega$ the vacuum vector, we introduce the class of coherent states in $\mathscr F_s(\mathrm{Ran}P)$, labeled by $z \in \mathbb{C}$,
		\begin{equation}\label{eq:coherent_z}
		|z\rangle = e^{-\big(\frac{|z|^2}{2} + z a^{\dagger}_0\big)}\, \Omega,
	\end{equation}
	which are eigenvectors for the annihilation operator of the condensate. It is simple to show that
	\begin{equation}\label{eq:coherent_prop}
		a_0 |z\rangle = z\, |z \rangle \;\;\;\;\text{and}\;\;\;\;	1 = \frac{1}{\pi} \int_{\mathbb{C}} |z\rangle \langle z| \, \dd z.
	\end{equation}

Here $\langle z |$ is the partial trace along $\mathscr{F}_s(\mathrm{Ran}P)$. Thus, for any $\Psi \in \mathscr{F}_s(L^2(\Lambda))$ the state $\Phi(z) = \langle z | \Psi \rangle$ is in $\mathscr{F}_s(\mathrm{Ran} Q)$.

\begin{proposition}\label{prop:projonz}
	For $z \in \mathbb C$, set $\rho_z = \vert z \vert^2\vert \Lambda \vert^{-1}$. The Hamiltonian $\mathcal H_N$ acts on $L_{\rm{sym}}^2(\Lambda^N)$ as
	\begin{equation}\label{eq:c-numberHamiltonian}
		\mathcal H = \frac{1}{\pi}\int_{\mathbb{C}}   \mathcal K(z) |z\rangle \langle z| \dd z +\mathcal Q_3^{\rm{ren}} + \mathcal Q_4^{\rm{ren}}  + \mathcal R_0,
\end{equation}
	where the $z-$dependent Hamiltonian is 
	\begin{align}
		\mathcal K(z) &:= \mathcal Q(z) + \mathcal Q_2^{\rm{ex}}(z) + (\rho_z -\rho)n_+ \widehat{g}(0) - \rho \rho_z |\Lambda| \widehat{g}(0)+ \rho^2|\Lambda|\widehat{g}(0), \label{eqKZ0}
\end{align}
with 
\begin{equation}\label{def:Q}
\mathcal Q(z):= \frac{1}{2}    \rho_{z}^2 \vert\Lambda\vert (\widehat g(0) +    \widehat{ g\omega}(0)) + \mathcal{K}^{\rm{Bog}}, 
\end{equation}
where $\mathcal{K}^{\rm{Bog}}$ is a quadratic Hamiltonian in creation and annihilation operators that we call the \textit{Bogoliubov Hamiltonian}:
\begin{equation}
\mathcal{K}^{\rm{Bog}} =  \frac{1}{2} \sum_{k \neq 0} \mathcal{A}_k  \big( a_k^\dagger a_k + a_{-k}^\dagger a_{-k} \big)
  + \frac 
1 2 \sum_{k \neq 0} \mathcal{B}_k \big( a_k^\dagger a_{-k}^\dagger + a_k a_{-k}\big),
\end{equation}
with 
\begin{equation}
\mathcal{A}_k :=  k^2 + \rho_z \widehat{g}(k) , \qquad \mathcal{B}_k := \rho_z \widehat g(k).
\end{equation}
The remaining 2Q term is
\begin{align}
		\mathcal Q_2^{\rm{ex}}(z) &=  \rho_z  \sum_{k \neq 0} \big( \widehat{g\omega}(k) + \widehat{g \omega}(0) \big) a_k^\dagger a_k. \label{def:Q2ex}
	\end{align}
	Moreover, there exists a universal constant $C>0$ such that the error term satisfies
\begin{equation}\label{eq:R0_estimate}
\vert \langle \mathcal R_0 \rangle_\Psi \vert \leq C N\vert \Lambda\vert^{-1}\widehat{g}(0), \qquad \forall \Psi \in L_{\rm{sym}}^2(\Lambda^N) \quad \text{normalized.}
\end{equation}
\end{proposition}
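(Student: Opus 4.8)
The plan is to isolate the part of the second--quantized Hamiltonian \eqref{eq:second_quant_Ham} that is a polynomial in the condensate operators $a_0,a_0^\dagger$ and to rewrite precisely that part as a diagonal integral over the coherent states \eqref{eq:coherent_z}. Concretely, I would split $\mathcal H_N = \mathcal H_{\mathrm{sub}} + \mathcal Q_3^{\rm ren} + \mathcal Q_4^{\rm ren}$, where $\mathcal H_{\mathrm{sub}}$ gathers the kinetic term $\sum_{k\neq 0}k^2 a_k^\dagger a_k$ together with every $a_0$--dependent term of \eqref{eq:second_quant_Ham} (the quartic condensate term, the $a_0^\dagger a_k^\dagger a_k a_0$ and pairing contributions, and the $n_0 n_+/|\Lambda|$ term), leaving $\mathcal Q_3^{\rm ren},\mathcal Q_4^{\rm ren}$ untouched since they already appear on the right--hand side of \eqref{eq:c-numberHamiltonian}. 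The statement then reduces to proving $\mathcal H_{\mathrm{sub}} = \tfrac1\pi\int_{\mathbb C}\mathcal K(z)\,|z\rangle\langle z|\,\dd z + \mathcal R_0$ with $\mathcal R_0$ satisfying \eqref{eq:R0_estimate}.

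The engine is the diagonal (Glauber--Sudarshan) representation of monomials in $a_0,a_0^\dagger$, obtained from \eqref{eq:coherent_z}--\eqref{eq:coherent_prop} by elementary Gaussian integration on $\mathscr F_s(\mathrm{Ran}P)$, tensored with the identity on $\mathscr F_s(\mathrm{Ran}Q)$. I would first record
\[
\tfrac1\pi\!\int_{\mathbb C}\!|z|^2\,|z\rangle\langle z|\,\dd z = n_0+1,\qquad \tfrac1\pi\!\int_{\mathbb C}\!|z|^4\,|z\rangle\langle z|\,\dd z = (n_0+1)(n_0+2),\qquad \tfrac1\pi\!\int_{\mathbb C}\!\bar z^{\,2}\,|z\rangle\langle z|\,\dd z = (a_0^\dagger)^2,
\]
together with their conjugates. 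Let $\mathcal S(z)$ denote the \emph{termwise substitution} symbol of $\mathcal H_{\mathrm{sub}}$, i.e.\ the expression obtained by replacing $a_0^\dagger a_0^\dagger a_0 a_0\mapsto |z|^4$, $a_0^\dagger a_k^\dagger a_k a_0\mapsto |z|^2 a_k^\dagger a_k$, $n_0\mapsto |z|^2$ and $a_0^\dagger a_0^\dagger\mapsto \bar z^{\,2}$. Integrating $\mathcal S(z)$ against $\tfrac1\pi|z\rangle\langle z|\,\dd z$ and using the identities above returns $\mathcal H_{\mathrm{sub}}$ exactly, \emph{up to} the normal--ordering shifts $|z|^4\mapsto(n_0+1)(n_0+2)$, $|z|^2\mapsto n_0+1$; each such shift is a lower--order operator of schematic size $\widehat g(0)|\Lambda|^{-1}(n_0+n_++1)$, and these are the first contributions to $\mathcal R_0$.

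The only genuinely structural point is the $\rho$--dependent block $(\rho_z-\rho)n_+\widehat g(0)-\rho\rho_z|\Lambda|\widehat g(0)+\rho^2|\Lambda|\widehat g(0)$ in \eqref{eqKZ0}. A short computation shows that $\mathcal K(z)$ differs from $\mathcal S(z)$ only through this block, and precisely by
\[
\mathcal K(z)-\mathcal S(z) = -\rho\,\widehat g(0)\big(n_+ + |z|^2 - N\big).
\]
Read off termwise this is alarming: it contains $\rho\,\widehat g(0)\,n_+$, which at this stage of the proof (before any condensation estimate is available) could be as large as $\rho\,\widehat g(0)N\sim N^2|\Lambda|^{-1}\widehat g(0)$ and would destroy \eqref{eq:R0_estimate}. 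The resolution is that this difference must be integrated, not read termwise: using $\tfrac1\pi\int|z|^2|z\rangle\langle z|\,\dd z = n_0+1$ and the fixed--particle--number constraint $n_0+n_+=N=\rho|\Lambda|$ on $L^2_{\rm sym}(\Lambda^N)$,
\[
\tfrac1\pi\!\int_{\mathbb C}\!\big(\mathcal K(z)-\mathcal S(z)\big)\,|z\rangle\langle z|\,\dd z = -\rho\,\widehat g(0)\big(n_++(n_0+1)-N\big) = -\rho\,\widehat g(0),
\]
a harmless constant of size exactly $N|\Lambda|^{-1}\widehat g(0)$. This collapse of an apparently $N^2|\Lambda|^{-1}$--sized term to an admissible $N|\Lambda|^{-1}$ one, via the coherent--state average combined with the number constraint, is the crux of the argument and dictates the particular $\rho$--dependent form chosen for $\mathcal K(z)$.

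Collecting everything, $\mathcal R_0$ consists of the normal--ordering shifts of the second paragraph together with a constant of size $\rho\,\widehat g(0)$; using $0\le n_0,n_+\le N$ and the elementary bounds $0\le\widehat g(k)\le\widehat g(0)$, $0\le\widehat{g\omega}(k)\le\widehat{g\omega}(0)\le C\widehat g(0)$ (consequences of $g,g\omega\ge 0$), every piece is bounded in expectation by $CN|\Lambda|^{-1}\widehat g(0)$, which is \eqref{eq:R0_estimate}. Two pieces of bookkeeping remain, and in my view they are the only real nuisances beyond the cancellation above: first, the $k=0$ contributions of the interaction sums in \eqref{eq:second_quant_Ham} must be separated off and merged into the quartic condensate term before substituting; second, the pairing symbol is genuinely $\bar z^{\,2}a_k a_{-k}$ rather than $|z|^2 a_k a_{-k}=\rho_z|\Lambda|\,a_k a_{-k}$, so the phase $e^{-2i\arg z}$ must be removed by the $U(1)$ gauge rotation $a_k\mapsto e^{-i\arg z}a_k$, which fixes every diagonal term. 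This is precisely what allows $\mathcal K(z)$ to be written as a function of $\rho_z=|z|^2|\Lambda|^{-1}$ alone, as in \eqref{def:Q}--\eqref{def:Q2ex}.
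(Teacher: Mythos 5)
Your proposal is correct and follows essentially the same route as the paper's proof: the coherent-state resolution of identity with anti-normal-ordering substitutions whose ordering corrections are collected into $\mathcal R_0$, the number constraint $n_0+n_+=N$ collapsing the $\rho$-dependent block to a harmless constant of size $\rho\,\widehat g(0)$, and the gauge rotation $z=\vert z\vert e^{i\phi}$ absorbing the phase into the $a_k$'s for the pairing terms. The only difference is organizational: the paper adds and subtracts $\rho^2\vert\Lambda\vert\widehat g(0)=\rho(n_0+n_+)\widehat g(0)$ \emph{before} substituting (working with $\mathcal H-\rho n_0\widehat g(0)$), whereas you substitute termwise first and then show that the difference $\mathcal K(z)-\mathcal S(z)$ integrates to $-\rho\,\widehat g(0)$, which is the same computation.
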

\begin{proof}
As a first step, we add and subtract in the Hamiltonian the term $\rho^2 |\Lambda| \widehat{g}(0)$, exploiting the identity on $ L_{\rm{sym}}^2(\Lambda^N)$
\begin{equation}
\rho^2 |\Lambda| \widehat{g}(0) = \rho (n_0 + n_+) \widehat{g}(0).
\end{equation}
We focus then on $\mathcal{H} - \rho n_0 \widehat{g}(0)$, and apply to this term the c-number substitution, briefly described below.
The expansion on coherent states allows to perform, for instance, the following formal substitutions in \eqref{eq:second_quant_Ham}
	\begin{equation}\label{eq:c-substitutions_a0}
		a_0^{\dagger}a_0^{\dagger}a_0a_0 \mapsto |z|^{4}-4|z|^{2}+2,  \qquad a^{\dagger}_0 a_0 \mapsto |z|^2-1.
	\end{equation}	
	We give an example of the rigorous derivation of the second term in \eqref{eq:c-substitutions_a0} as follows. For any $f,g\in \mathscr{F}_s(L^{2}(\Lambda))$, using \eqref{eq:coherent_prop},
	\begin{equation}
		\langle f|a_{0}^{\dagger}a_{0}g\rangle =\langle f|a_{0}a_{0}^{\dagger}g\rangle -\langle f|g\rangle =\frac{1}{\pi}\int_{\mathbb{C}}|z|^{2}\langle f|z\rangle \langle z|g\rangle \dd z -\frac{1}{\pi}\int_{\mathbb{C}}\langle f| z \rangle \langle z|g\rangle \dd z,
	\end{equation}
yielding
\begin{equation}
a^{\dagger}_0 a_0 = \frac{1}{\pi} \int_{\mathbb{C}} (|z|^2-1) \vert z \rangle \langle z\vert \dd z,
\end{equation}
and the other terms can be treated in a similar manner. We now prove how low order terms produced in the aforementioned substitution are actually errors. For instance, focusing again on the $|z|^2$ in the first term of \eqref{eq:c-substitutions_a0}, we have that
\begin{equation}
\frac{\widehat{g}(0)}{2\pi|\Lambda|}\int_{\mathbb{C}}|z|^2 \vert z \rangle \langle z \vert \dd z = \frac{ \widehat{g}(0)}{2|\Lambda|} a_0 a^{\dagger}_0 \geq - C \frac{n_0+1}{|\Lambda|} \widehat{g}(0) .
\end{equation}
The substitution step leads to the result, with 

	\begin{align}
		\mathcal R_0&=-\frac{1}{2 \vert \Lambda \vert} \big( \widehat g (0) + \widehat{g \omega}(0) \big)\big(4n_{0}-2\big)-\widehat{g\omega} (0)\frac{n_{+}}{\vert \Lambda\vert}\\
		& \quad - \frac{1}{\vert \Lambda \vert} \sum_{k \in \Lambda^*} \Big( (\widehat g(k) + \widehat{g \omega}(k))  a_k^\dagger a_k + \frac 1 2 \widehat g_k ( a_k a_{-k} + h.c.) \Big),
	\end{align}
	and bound the error term using a Cauchy-Schwarz on the $a_{k}a_{-k}$ terms to get
	\begin{equation*}
		\vert \mathcal R_0\vert \leq C\frac{n_{0}+n_{+}}{\vert \Lambda \vert} \widehat g(0)\leq  C \frac{N}{\vert \Lambda\vert} \widehat{g}(0).
	\end{equation*}

Notice that the substitutions of $a_{0}a_{0}$ and $a^{\dagger}_{0}a^{\dagger}_{0}$ should give a $z^{2}$ and a $\overline{z}^{2}$ in the definition of  $\mathcal{B}_k := \vert z\vert^{2} \vert\Lambda\vert^{-1}\widehat g(k)$. To circumvent this issue we write $z=\vert z\vert e^{i\phi}$ and absorb the phase in the $a_{k}$'s. This does not affect the later computations which only involve commutations of such $a_{k}$'s.
	\end{proof}

By Proposition~\ref{prop:projonz} we are reduced to study a Hamiltonian dependent on the free parameter $z\in \mathbb{C}$. The density $\rho_z$ describes the particles in the condensate, but we have no restriction on it. 
We expect to have full condensation, i.e. $\rho_z \simeq \rho$. In this regime we need to make very precise estimates which are established in the main part of the paper. The regime where $\rho_z$ is far from $\rho$ seems less physical and, in fact, there rougher bounds suffice.

We define the threshold magnitude for the densities 
\begin{equation}\label{eq:defe+}
\varepsilon_+ := \max \{K_{\ell}^2 K_L^{-1}, (\f)^{1/2}\},
\end{equation}

with $K_L$ being introduced in \eqref{def:PLPH} below (and fixed in \ref{app:parameters}).
In the following sections, we will study the regime 
\begin{equation}
|\rho_z -\rho | < \rho \varepsilon_+, \qquad 
\end{equation}
while we deal with the regime $|\rho_z -\rho | \geq \rho \varepsilon_+$ in \ref{app:rho_far}.

\section{Estimates for $\rho_z$ close to $\rho$}\label{sec:close}
\subsection{Diagonalization}
We apply the diagonalization procedure to the operator
\begin{equation}
	\mathcal Q(z)=\frac{\rho_z^2}{2} \vert \Lambda \vert (\widehat{g}(0) +\widehat{g\omega}(0))+\mathcal{K}^{\rm{Bog}}
\end{equation}
defined in \eqref{def:Q} and containing the LHY integral and a positive operator, diagonal in creation and annihilation of excitations.
\begin{proposition}\label{prop:BogDiag}
Let $\varepsilon_+$ be as in \eqref{eq:defe+} and
	assume the relations between the parameters in \ref{app:parameters}. For any $z \in \mathbb C$ such that $\vert\rho -\rho_{z} \vert\leq \rho\varepsilon_{+}$, the following equality holds:
	\begin{align*}
		\mathcal Q(z) = \frac{\rho_z^2}{2}\vert \Lambda \vert  \widehat g(0)  +  E^{\rm{LHY}}_d(\rho_z) +\mathcal{K}^{\rm{diag}}+  \mathcal R^{(d)}_1, 
	\end{align*}
	where we define the \emph{diagonalized Bogoliubov Hamiltonian} as
	\begin{equation}\label{eq:Dk}
		\mathcal K^{\rm{diag}} = \sum_{k \neq 0} \mathcal D_k b_k^\dagger b_k, \qquad \mathcal D_k = \sqrt{k^4 + 2k^2 \rho_z \widehat g(k)},
	\end{equation}
	where
	\begin{equation}
		b_k =  \frac{1}{\sqrt{1-\alpha_k^2}} \Big( a_k + \alpha_k a_{-k}^\dagger \Big),  \qquad
		\label{def:bk}
		\alpha_k = \frac{k^2 + \rho_z \widehat g(k) - \sqrt{k^4 + 2k^2 \rho_z \widehat g(k)}}{\rho_z \widehat g(k)},
	\end{equation}
	and where the error term $\mathcal{R}_1^{(d)}(\rho_z)$ satisfies
	\begin{equation}
	\vert	\mathcal R_1^{(d)}(\rho_z) \vert\leq \begin{cases}
			C|\Lambda|\rho_z^{2} \delta^{2}K_{\ell}^{-1}, \quad  & \text{if } d=2, \\
			C|\Lambda| \rho_z^{2} a \big(\rho_z a^{3}\big)^{\frac 1 2}\log(\rho_z) K_{\ell}^{-1}, \quad  &\text{if } d=3.
		\end{cases}
		\label{eq:R1bound}
	\end{equation}
	
\end{proposition}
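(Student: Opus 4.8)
The plan is to diagonalize the quadratic operator $\mathcal{K}^{\rm{Bog}}$ by the Bogoliubov rotation \eqref{def:bk}, read off the accompanying c-number, and then identify it — after replacing the lattice sum over $\Lambda^*$ by a continuous integral — with the Lee--Huang--Yang energy $E_d^{\rm{LHY}}(\rho_z)$. The algebraic diagonalization is classical; essentially all of the analytic work, and the origin of the error $\mathcal R_1^{(d)}$, lies in the sum-to-integral passage.

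First I would check that \eqref{def:bk} is a canonical transformation, i.e.\ that $|\alpha_k|<1$ and that $\mathcal D_k = \sqrt{\mathcal A_k^2 - \mathcal B_k^2}$ is real and positive for every $k \in \Lambda^*\setminus\{0\}$. Since $g = v\varphi \geq 0$ one has $|\widehat g(k)| \leq \widehat g(0)$; combined with the dilute assumption and $|\rho_z - \rho|\leq \rho\varepsilon_+$ this yields $k^2 + 2\rho_z\widehat g(k)>0$: for $|k|\lesssim R^{-1}$ one has $\widehat g(k)\simeq \widehat g(0)>0$, while for $|k|\gtrsim R^{-1}$ the term $k^2\gtrsim R^{-2}$ dominates $\rho_z|\widehat g(k)|\leq \rho_z\widehat g(0)\ll R^{-2}$. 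Then $\mathcal A_k \geq |\mathcal B_k|$ and $|\alpha_k| = |\mathcal B_k|/(\mathcal A_k + \mathcal D_k)<1$, so the $b_k$ are bosonic and diagonalize each $(k,-k)$ block. The standard identity for quadratic bosonic Hamiltonians then gives
\[
\mathcal{K}^{\rm{Bog}} = \mathcal{K}^{\rm{diag}} + \frac{1}{2}\sum_{k\neq 0}\big(\mathcal D_k - \mathcal A_k\big),
\]
with $\mathcal K^{\rm{diag}}$ as in \eqref{eq:Dk}.

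Next I would insert this into \eqref{def:Q}, split off $\tfrac{\rho_z^2}{2}|\Lambda|\widehat g(0)$, and group the remaining constant as $\tfrac{\rho_z^2}{2}|\Lambda|\widehat{g\omega}(0) + \tfrac12\sum_{k\neq 0}(\mathcal D_k - \mathcal A_k)$. Using \eqref{defG} to write $\tfrac{\rho_z^2}{2}|\Lambda|\widehat{g\omega}(0) = \tfrac{|\Lambda|}{2(2\pi)^d}\int_{\mathbb R^d}\rho_z^2 G_d(k)\,\dd k$, the task reduces to approximating $\tfrac12\sum_{k\neq 0}(\mathcal D_k - \mathcal A_k)$ by $\tfrac{|\Lambda|}{2(2\pi)^d}\int_{\mathbb R^d}(\mathcal D(k) - \mathcal A(k))\,\dd k$ (the continuum versions of $\mathcal A_k,\mathcal D_k$) and recognizing the resulting integral, via the substitution $k\mapsto \sqrt{\rho_z\widehat g(0)}\,k$ and the limit $\rho a^d\to 0$ (as in the Remark after Theorem~\ref{thm:main}), as $E_d^{\rm{LHY}}(\rho_z)$; note that the regularization in $G_d$ is tuned so that $\ell_\delta\sqrt{\rho_z\widehat g(0)}\to \sqrt{2\pi}e^{\Gamma}$ and the cutoff $\one_2(\ell_\delta k)$ turns into $\one_2(\sqrt{2\pi}e^{\Gamma}t)$ after rescaling.

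The main obstacle is this Riemann-sum estimate. After the cancellation of the leading tails of $\mathcal D_k - \mathcal A_k$ and of $\rho_z^2 G_d(k)$, the combined integrand still decays only like $\rho_z^2\widehat g(k)^2/k^2$, so controlling the difference between the sum over $\Lambda^* = \tfrac{2\pi}{\ell}\mathbb Z^d$ and the integral forces one to bound sums of the form $\sum_{k\neq 0}\widehat g(k)^2/k^2$; this is exactly where the assumption $v\in L^2$ (or $L^p$, $p>6/5$) enters, through an estimate of the type \eqref{eq:BadEstimate}. The lattice spacing $\sim \sqrt{\rho\widehat g(0)}/K_\ell$ is what produces the gain $K_\ell^{-1}$ in \eqref{eq:R1bound}, while the replacement of $\widehat g(k)$ by $\widehat g(0)$ and of $\rho_z$-dependent quantities inside the rescaled integrand — controlled by the regularity of $\widehat g$ near $0$ and by $|\rho_z-\rho|\leq\rho\varepsilon_+$ — supplies the remaining part of the error, the extra $\log(\rho_z)$ in $d=3$ reflecting the size of the momentum window that must be integrated. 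In $d=2$ one must in addition keep track of the logarithmic regularization at the scale $\ell_\delta$ throughout these approximations. Assembling these contributions yields the bound \eqref{eq:R1bound}.
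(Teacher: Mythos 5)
Your proposal is correct and follows essentially the same route as the paper's proof: the algebraic Bogoliubov diagonalization (the paper's Theorem~\ref{thm:bogdiag}) producing the c-number $\tfrac{1}{2}\sum_{k\neq 0}(\mathcal D_k-\mathcal A_k)$, the sum-to-integral replacement as the sole source of $\mathcal R_1^{(d)}$ (which is exactly where the $L^p$ assumption on $v$ and the $K_\ell^{-1}$ gain enter), and the identification of the combined integral, together with $\tfrac{\rho_z^2}{2}|\Lambda|\widehat{g\omega}(0)$ written via $G_d$, with $E_d^{\mathrm{LHY}}(\rho_z)$ by the rescaling $k\mapsto\sqrt{\rho_z\widehat g(0)}\,k$ (the paper's Lemma~\ref{lem:calculation_LHYterm_int}). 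Your preliminary verification that $|\alpha_k|<1$ simply makes explicit the hypothesis $|\mathcal B_k|<\mathcal A_k$ required to apply the diagonalization theorem, which the paper leaves implicit.
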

The constant in \eqref{eq:R1bound} depends on $L^p$-norms of the potential.
\begin{proof}
	Applying Theorem~\ref{thm:bogdiag} with $\mathcal A_k = k^2 + \rho_z \widehat g(k)$ and $\mathcal B_k = \rho_z \widehat g(k)$, we get $\mathcal D_k$ and $\alpha_k$ from \eqref{eq:Dk} and \eqref{def:bk},

and we can write, for all $k \neq 0$,
	\begin{align}
		\mathcal A_k (a_k^\dagger a_k &+ a_{-k}^\dagger a_{-k}) + \mathcal B_k (a_k^\dagger a_{-k}^\dagger + a_k a_{-k}) = \nonumber \\
		& \mathcal D_k (b_k^\dagger b_k + b_{-k}^\dagger b_{-k}) +  \sqrt{k^4 + 2 k^2 \rho_z \widehat g(k)} - k^2 - \rho_z \widehat g(k).
	\end{align}
	Then, using that $\mathcal A_k$ and $\mathcal B_k$ are even functions of $k$, we deduce
	\begin{align}
		\mathcal Q(z) &= \frac 1 2 \rho_z^2\vert \Lambda \vert  \widehat g(0) + \frac{1}{2}\sum_{k \neq 0} \Big( \sqrt{k^4 + 2 k^2 \rho_z \widehat g(k)} - k^2 - \rho_z \widehat g(k) \Big) \nonumber \\ &\quad+ \frac 1 2 \rho_z^2\vert \Lambda \vert  \widehat{g\omega}(0) + \mathcal K^{\rm{diag}}. \label{eq:Kzproof}
	\end{align}
	Changing the sum in \eqref{eq:Kzproof} into the integral
	\begin{equation}
	 \frac{\lvert \Lambda \rvert }{2(2\pi)^d}\int \Big( \sqrt{k^4 + 2 k^2 \rho_z \widehat g(k)} - k^2 - \rho_z \widehat g(k) \Big) \dd k,
\end{equation}
can be done up to an error term $\mathcal{R}_1^{(d)}(\rho_z)$ which can be estimated as in \eqref{eq:R1bound}. 
The constant in \eqref{eq:R1bound} depends on $L^p$-properties of the potential, since we need some decay of $\widehat g(k)$ to control the decay of the summand. This is easily achieved through an expansion of the square root and a H\"{o}lder inequality on the sum.

We recall here that $\widehat{ g \omega} (0)$ defined in \eqref{defG} can be written as an integral,
	\begin{equation}
		\frac{\rho_z^2}{2} \vert \Lambda \vert\widehat {g \omega} (0) = \rho_z^2 \vert \Lambda \vert \int \frac{\widehat g^{2}_{\R^{d}}(k) - \widehat g^{2}_{\R^{d}}(0) \one_{d}(\ell_{\delta} k)}{4 k^2} \frac{ \dd k}{(2\pi)^d}.
	\end{equation}
	The proposition follows then using Lemma~\ref{lem:calculation_LHYterm_int} to calculate the value of the integral.
\end{proof}

\section{Localization of 3Q terms}

In this section we focus on the effect of the $3Q$-term, namely
\begin{equation}
\mathcal Q_3^{\rm{ren}} = \sum_{i \neq j} P_i Q_j g(x_i-x_j) Q_i Q_j + h.c.
\end{equation}
Since $3Q$'s appear in this term, we can interpret it as the energy produced when 2 non-zero incoming momenta create 1 non-zero momentum and 1 zero momentum (or vice versa). We prove below that we can restrict this interaction to soft pairs, i.e., when two ``high'' momenta and one ``low'' momentum are involved in this process. More precisely, let us define the sets of low and high momenta by
\begin{equation}\label{def:PLPH}
\mathcal P_L = \lbrace p \in \Lambda^* , \quad 0< \vert p \vert \leq K_L \ell^{-1} \rbrace, \qquad \mathcal P_H = \lbrace k \in \Lambda^*, \quad \vert k \vert \geq K_H \ell^{-1} \rbrace,
\end{equation}
where the parameters $K_L, K_H$ are fixed in \ref{app:parameters}. The condition $K_L \ll K_H$, which is part of \eqref{KL_relations}, will ensure that these sets are disjoint. We define the localized projectors by
\begin{align}
Q_L &:= \one_{\mathcal P_L} (\sqrt{-\Delta}), & \overline{Q}_L &:= Q - Q_L = \one_{(K_L \ell^{-1}, \infty)} (\sqrt{-\Delta}),\\
\label{def:QH}
	Q_H &:= \one_{\mathcal P_H} (\sqrt{-\Delta}), & \overline{Q}_H &:= Q - Q_H = \one_{(0,K_H \ell^{-1})} (\sqrt{-\Delta}).
\end{align}
The number of high excitations, namely the number of bosons outside the condensate and with momenta not in $\mathcal{P}_L$, is
\begin{equation}
n_+^H := \sum_{j=1}^n \overline Q_{L,j},
\end{equation}
acting on $L^2_{\rm{sym}}(\Lambda^n)$ for any $n$. Similarly, we define the number of low excitations by
\begin{equation}
n_+^L := \sum_{j=1}^n \overline{Q}_{H,j} .
\end{equation}
Notice that $n_+^L+n_+^H \geq n_+$, due to the overlap of the regions in momentum space.

The reduction to soft pairs is then given by the following proposition.

\begin{proposition}\label{prop:Q3loc} Assuming the relations between the parameters in \ref{app:parameters}, there exists a universal constant $C>0$ such that, for all $N$-particle states $\Psi \in L_{\rm{sym}}^2(\Lambda^N)$ satisfying $\Psi = \one_{[0,2\mathcal{M}]}(n_+^L)\Psi$ and assumption \eqref{eq:assumption_lowE_psi}, we have
\begin{align*}
\vert \langle \mathcal Q_3^{\rm{ren}} \rangle_{\Psi} - \langle \mathcal Q_3^{\rm{soft}} \rangle_\Psi \vert & \leq \frac{1}{4}\langle \mathcal Q_4^{\rm{ren}} \rangle_{\Psi} + o_d^{\text{LHY}}.
\end{align*}
where 
\begin{equation}\label{def:Q3soft}
\mathcal Q_3^{\rm{soft}} = \frac{1}{\vert \Lambda \vert} \sum_{ \substack{k \in \mathcal P_H, \\p \in \mathcal P_L}} \widehat g(k) \big(  a_0^\dagger a_p^\dagger a_{p-k} a_k + h.c. \big).
\end{equation}
\end{proposition}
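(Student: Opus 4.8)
The plan is to recognize $\mathcal Q_3^{\mathrm{soft}}$ as the momentum-localized version of $\mathcal Q_3^{\mathrm{ren}}$ and to estimate the difference. In the term $\mathcal Q_3^{\mathrm{ren}}=\sum_{i\neq j}P_iQ_jg(x_i-x_j)Q_iQ_j+h.c.$ the projector $Q_i$ carries the incoming momentum $k$ of the particle sent into the condensate, while the left factor $Q_j$ carries the outgoing excited momentum $p$; comparing second quantizations one checks that $\mathcal Q_3^{\mathrm{soft}}=\sum_{i\neq j}P_iQ_{L,j}\,g(x_i-x_j)\,Q_{H,i}Q_j+h.c.$, i.e. the soft operator inserts $Q_{H,i}$ on the $k$-slot and $Q_{L,j}$ on the $p$-slot. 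Writing $Q_i=Q_{H,i}+\overline Q_{H,i}$ and (on the outgoing slot) $Q_j=Q_{L,j}+\overline Q_{L,j}$ and subtracting the soft term, the difference breaks into three pieces,
\[
\mathcal Q_3^{\mathrm{ren}}-\mathcal Q_3^{\mathrm{soft}}=\sum_{i\neq j}P_i\big(Q_{L,j}\,g\,\overline Q_{H,i}+\overline Q_{L,j}\,g\,Q_{H,i}+\overline Q_{L,j}\,g\,\overline Q_{H,i}\big)Q_j+h.c.,
\]
corresponding to (I) low $k$ and low $p$, (II) high $k$ and high $p$, and (III) low $k$ and high $p$. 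Pieces (II) and (III) carry $\overline Q_{L,j}$, hence create an excitation of momentum $|p|>K_L\ell^{-1}$ counted by $n_+^H$, whereas piece (I) is a purely low-momentum process and needs separate treatment.

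For each piece I would use Cauchy--Schwarz against the positive operator $\mathcal Q_4^{\mathrm{ren}}$. Splitting the (nonnegative) interaction as $g=g^{1/2}g^{1/2}$ and pairing the annihilation factor $g^{1/2}(\cdots)_iQ_j$ with the factor $g^{1/2}(\cdots)_jP_i$ containing the condensate projector, Cauchy--Schwarz with a small parameter $\epsilon$ bounds each piece by $\tfrac{\epsilon}{2}\langle\text{annihilation square}\rangle_\Psi+\tfrac1{2\epsilon}\langle P\text{-term}\rangle_\Psi$. The annihilation square, e.g. $\langle Q_jQ_{H,i}\,g\,Q_{H,i}Q_j\rangle_\Psi$, is controlled by $C\langle\mathcal Q_4^{\mathrm{ren}}\rangle_\Psi$ using $g\leq v$ and the sum-of-squares structure $\mathcal Q_4^{\mathrm{ren}}=\tfrac12\sum_{i\neq j}(v^{1/2}A_{ij})^\dagger(v^{1/2}A_{ij})$ with $A_{ij}=Q_jQ_i+\omega(P_jP_i+P_jQ_i+Q_jP_i)=(1-\omega)Q_jQ_i+\omega$, the $\omega$-dependent remainders being of lower order in the dilute limit. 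In the $P$-term the two condensate projectors average the interaction to $|\Lambda|^{-1}\widehat g(0)$ and, since $P_i\overline Q_{L,i}=P_iQ_{L,i}=0$, one is left with $\rho\,\widehat g(0)$ times the number operator of the excitations surviving in the relevant region.

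Concretely, pieces (II) and (III) give a $P$-term bounded by $C\epsilon^{-1}\rho\,\widehat g(0)\,\langle n_+^H\rangle_\Psi$, while piece (I) gives $C\epsilon^{-1}\rho\,\widehat g(0)\,\big\langle\sum_jQ_{L,j}\big\rangle_\Psi$. For the former I would invoke the condensation estimate on $n_+^H$ (this is where the low-energy hypothesis \eqref{eq:assumption_lowE_psi} is used), which makes $\rho\,\widehat g(0)\,\langle n_+^H\rangle_\Psi$ fall below the LHY precision, i.e. $=o_d^{\mathrm{LHY}}$. For the latter, since $K_L\ll K_H$ gives $Q_L\leq\overline Q_H$ and hence $\sum_jQ_{L,j}\leq n_+^L$, the standing localization hypothesis $\Psi=\one_{[0,2\mathcal M]}(n_+^L)\Psi$ yields $\rho\,\widehat g(0)\,\big\langle\sum_jQ_{L,j}\big\rangle_\Psi\leq 2\rho\,\widehat g(0)\,\mathcal M=o_d^{\mathrm{LHY}}$ for the value of $\mathcal M$ fixed in \ref{app:parameters}. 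Choosing $\epsilon$ a small fixed constant makes the three gain terms sum to at most $\tfrac14\langle\mathcal Q_4^{\mathrm{ren}}\rangle_\Psi$, which gives the claim.

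The conceptual heart, and the main obstacle, is piece (I): being a fully low-momentum process it is suppressed by no condensation mechanism, and its $P$-term is genuinely proportional to the number of low excitations, which is not a priori small. This is exactly why the proposition is restricted to states with $\Psi=\one_{[0,2\mathcal M]}(n_+^L)\Psi$ and why the localization-of-large-matrices argument of \ref{sec:largeM} must be carried out first. A secondary technical point is the clean matching of the Cauchy--Schwarz gain with $\mathcal Q_4^{\mathrm{ren}}$: because $\mathcal Q_4^{\mathrm{ren}}$ is the renormalized square above rather than $\tfrac12\sum_{i\neq j}Q_iQ_j\,v\,Q_jQ_i$, the diagonal $\omega^2$-remainders produced when discarding the $\omega$-terms of $A_{ij}$ must be shown to be $o_d^{\mathrm{LHY}}$, which uses the diluteness $\rho a^d\leq C^{-1}$ and the $L^2$-control of $v$.
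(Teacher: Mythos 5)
Your three-piece decomposition of $\mathcal Q_3^{\mathrm{ren}}-\mathcal Q_3^{\mathrm{soft}}$ matches the paper's (your pieces (II)+(III) together are exactly what Lemma~\ref{lem:Q3loc1} treats, your piece (I) is what Lemma~\ref{lem:Q3loc2} treats), and your handling of (II) and (III) — Cauchy--Schwarz pairing the $P$-side against the annihilation side, absorbing the square into $\mathcal Q_4^{\mathrm{ren}}$, bounding the $P$-term by $\rho\widehat g(0)\langle n_+^H\rangle_\Psi$ and invoking \eqref{bound_n_+H} — is essentially the paper's argument. The genuine gap is in piece (I). There your Cauchy--Schwarz produces the $P$-term $\epsilon^{-1}\rho\widehat g(0)\langle \sum_j Q_{L,j}\rangle_\Psi\leq C\epsilon^{-1}\rho\widehat g(0)\mathcal M$, and you assert this is $o_d^{\mathrm{LHY}}$, i.e.\ $o(\rho^2|\Lambda|\widehat g(0)\lambda_d^{\mathrm{LHY}})$, "for the value of $\mathcal M$ fixed in \ref{app:parameters}". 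That would require $\mathcal M\ll N\lambda_d^{\mathrm{LHY}}$, but the parameter relations force the \emph{opposite}: condition \eqref{eq:boundonM1} demands $\mathcal M\gg \ell^2\rho^2|\Lambda|\widehat g(0)\lambda_d^{\mathrm{LHY}}=K_\ell^2 N\lambda_d^{\mathrm{LHY}}$ (this lower bound is what makes the localization of large matrices compatible with the a priori bound \eqref{eq:condensationestimate}, via \eqref{eq:M}). Hence $\rho\widehat g(0)\mathcal M\gg K_\ell^2\,\rho^2|\Lambda|\widehat g(0)\lambda_d^{\mathrm{LHY}}$, which is $K_\ell^2$ times \emph{larger} than the LHY precision. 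Tuning $\epsilon$ cannot repair this: making the $P$-term acceptable would need $\epsilon\gg K_\ell^2$, and then the weight $\epsilon$ on the annihilation square can no longer be absorbed into $\tfrac14\mathcal Q_4^{\mathrm{ren}}$, whose own size is only of order $\rho^2|\Lambda|\widehat g(0)\lambda_d^{\mathrm{LHY}}$ by \eqref{eq:estimateprioriQ4}. The structural defect is that your pairing puts the condensate projector $P_i$ and the low projector $Q_{L,j}$ on the \emph{same} side of the Cauchy--Schwarz, so the gain term carries the huge factor $n_0\approx N$ multiplying the number of low excitations, and low excitations have no condensation-type a priori bound, only the brute cutoff $\mathcal M$.

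The paper's Lemma~\ref{lem:Q3loc2} avoids exactly this by a different splitting: in second quantization it pairs $a_0^\dagger a_p^\dagger$ against $a_{p-k}a_k$ with weight $\varepsilon_2$, so that \emph{both} resulting terms retain an excitation number operator. One term is $C\varepsilon_2 K_H^d\,\rho\widehat g(0)\langle n_+\rangle_\Psi$ (the $k$-sum only contributes a volume factor), the other is $C\varepsilon_2^{-1}\rho\widehat g(0)\langle n_+ n_+^L\rangle_\Psi/N\leq C\varepsilon_2^{-1}(\mathcal M/N)\,\rho\widehat g(0)\langle n_+\rangle_\Psi$. In other words, the factor $N$ that kills your estimate is replaced by $\langle n_+\rangle_\Psi\lesssim K_\ell^2 N\lambda_d^{\mathrm{LHY}}\ll N$; choosing $\varepsilon_2=(\mathcal M/(NK_H^d))^{1/2}$ and using \eqref{M-KH_relation} then gives $o_d^{\mathrm{LHY}}$, with no reference to $\mathcal Q_4^{\mathrm{ren}}$ at all for this piece. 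A secondary, patchable issue in your sketch: in pieces (II)/(III) your annihilation squares contain the restricted projectors $Q_{H,i}$, $\overline Q_{H,i}$, and such squares are not literally dominated by $\mathcal Q_4^{\mathrm{ren}}$ (operator inequalities like $Q_H g Q_H\leq Q g Q$ are false in general); the paper sidesteps this by not splitting $Q_i$ in Lemma~\ref{lem:Q3loc1} and by adding and subtracting the $\omega$ cross terms so that the Cauchy--Schwarz square is \emph{exactly} the square defining $\mathcal Q_4^{\mathrm{ren}}$, the leftover $g\omega$ terms being estimated separately against $\varepsilon_1 n_+ +\varepsilon_1^{-1} n_+^H$ rather than dismissed as lower-order diagonal remainders.
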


The proof of Proposition~\ref{prop:Q3loc} will follow from the Lemmas~\ref{lem:Q3loc1} and~\ref{lem:Q3loc2} below.

\begin{lemma} \label{lem:Q3loc1}
There exists a universal constant $C > 0$ such that, for all $\varepsilon_1 >0$ and all $N$-particle states $\Psi \in L_{\rm{sym}}^2(\Lambda^N)$, we have
\begin{equation}
\vert \langle \mathcal Q_3^{\rm{ren}} \rangle_\Psi  - \langle \mathcal Q_3^{\rm{low}} \rangle_\Psi \vert
 \leq \frac{1}{4} \langle \mathcal Q_4^{\rm{ren}} \rangle_\Psi  + \rho \widehat{g}(0)\Big(C\varepsilon_1 \langle n_+ \rangle_\Psi + (C+ \varepsilon^{-1}_1 ) \langle n_+^H \rangle_\Psi\Big) , 
\end{equation}
where
\begin{equation}\label{eq:defQ3low}
\mathcal Q^{\rm{low}}_3 := \sum_{i \neq j} (P_i Q_{L,j} g(x_i - x_j)Q_i Q_j + h.c.).
\end{equation}
\end{lemma}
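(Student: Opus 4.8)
The plan is to recognize $\mathcal Q_3^{\rm{ren}} - \mathcal Q_3^{\rm{low}}$ as exactly the part of the $3Q$ interaction in which the \emph{created} excitation has high momentum, and then to absorb it into the positive operator $\mathcal Q_4^{\rm{ren}}$ at the cost of number operators. Since $Q_j - Q_{L,j} = \overline Q_{L,j} = \one_{(K_L\ell^{-1},\infty)}(\sqrt{-\Delta})$ acts on the $j$-th variable,
\[
\mathcal Q_3^{\rm{ren}} - \mathcal Q_3^{\rm{low}} = \sum_{i\neq j} P_i\,\overline Q_{L,j}\, g(x_i-x_j)\, Q_iQ_j + h.c.,
\]
so the outgoing excitation is forced to satisfy $|p| > K_L\ell^{-1}$; in second quantization this reads $\frac{1}{|\Lambda|}\sum_{|p|>K_L\ell^{-1},\,k}\widehat g(k)\,a_0^\dagger a_p^\dagger a_{p-k}a_k + h.c.$, cubic in the excitation operators.

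The core step is a Cauchy-Schwarz splitting. Factoring $g = g^{1/2}g^{1/2}$ (legitimate since $g = v\varphi \ge 0$) and using $X^\dagger Y + Y^\dagger X \le s\,X^\dagger X + s^{-1}Y^\dagger Y$ with $X = g^{1/2}\overline Q_{L,j}P_i$ and $Y = g^{1/2}Q_iQ_j$, so that $X^\dagger Y$ reproduces the summand, the $X$-term sums to $\sum_{i\neq j}P_i\overline Q_{L,j}\,g\,\overline Q_{L,j}P_i$. This is where the structure pays off: by the $PgP$-collapse \eqref{eq:PgPformula} one has $\sum_i P_i g(x_i-x_j)P_i = \tfrac{n_0}{|\Lambda|}\widehat g(0)$, so together with $n_0 \le N = \rho|\Lambda|$ and $\sum_j \overline Q_{L,j} = n_+^H$ this term is bounded by $\rho\widehat g(0)\langle n_+^H\rangle$, producing both the sharp prefactor $\rho\widehat g(0)$ and the \emph{high} excitation count. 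The $Y$-term $\sum_{i\neq j}Q_iQ_j\,g\,Q_iQ_j$ is in turn controlled, via $0 \le g \le v$, by the leading undressed part $\sum_{i\neq j}Q_iQ_j\,v\,Q_iQ_j$ of $\mathcal Q_4^{\rm{ren}}$, which lets one extract $\tfrac14\langle\mathcal Q_4^{\rm{ren}}\rangle$ (up to the $\omega$-dressing discrepancy treated next).

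What remains is the discrepancy between $\sum Q_iQ_j\,v\,Q_iQ_j$ and the full $\omega$-dressed square $\mathcal Q_4^{\rm{ren}} = \tfrac12\sum_{i\neq j}C_{ij}^\dagger v\,C_{ij}$ with $C_{ij} = Q_jQ_i + \omega(P_jP_i + P_jQ_i + Q_jP_i)$. The corresponding cross terms carry one factor $\omega$ and are not sign-definite; here I would use $|\widehat g(k)| \le \widehat g(0)$ (again from $g \ge 0$) and a second Cauchy-Schwarz with the free parameter $\varepsilon_1$. These cross terms pair a generic annihilated leg against the high created leg, so they split as $C\varepsilon_1\rho\widehat g(0)\langle n_+\rangle$ (generic excitations) plus $\varepsilon_1^{-1}\rho\widehat g(0)\langle n_+^H\rangle$ (high excitations). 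Collecting the three contributions gives the stated inequality.

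The main obstacle is precisely this bookkeeping around $\mathcal Q_4^{\rm{ren}}$. The crude alternative of estimating $\sum Q_iQ_j\,g\,Q_iQ_j \le C\rho\widehat g(0)\,n_+$ (using $n_+ \le N$) is far too lossy, since an $O(1)$ coefficient in front of $\langle n_+\rangle$ would be fatal for the later reduction to soft pairs; one must instead keep $\tfrac14\langle\mathcal Q_4^{\rm{ren}}\rangle$ as a genuine positive energy. Matching $g = v(1-\omega)$ against the $\omega$-dressed factors $v^{1/2}C_{ij}$ with the correct constant $\tfrac14$, while ensuring that the dominant error carries the small count $n_+^H$ rather than $n_+$ and the sharp scale $\rho\widehat g(0)$ rather than $\rho\|v\|_{L^1}$, is the delicate point; the non-commutativity of $\omega$ with the momentum cut $\overline Q_L$ is what makes this technical.
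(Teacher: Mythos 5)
Your first branch coincides with the paper's: writing the difference as $\sum_{i\neq j}P_i\overline{Q}_{L,j}\,g\,Q_iQ_j+h.c.$ and collapsing $\sum_{i\neq j}P_i\overline{Q}_{L,j}\,g\,\overline{Q}_{L,j}P_i$ through \eqref{eq:PgPformula} into $\widehat g(0)\,n_0n_+^H/\vert\Lambda\vert\le\rho\widehat g(0)\,n_+^H$ is exactly how the $n_+^H$ term with the sharp prefactor arises. The gap is in the order of operations of your second branch. You square first with the \emph{undressed} $Y=g^{1/2}Q_iQ_j$, and only afterwards try to compare $\sum_{i\neq j}Q_iQ_j\,v\,Q_jQ_i$ with $2\mathcal Q_4^{\mathrm{ren}}=\sum_{i\neq j}C_{ij}^{\dagger}vC_{ij}$. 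At that point the discrepancy consists (besides a positive $\omega$-square, which has the wrong sign to help) of the cross terms
\begin{equation*}
\sum_{i\neq j}\Big(Q_iQ_j\,v\omega\,\big(P_jP_i+P_jQ_i+Q_jP_i\big)+h.c.\Big),
\end{equation*}
and these contain \emph{no} high-momentum projector: your assertion that they ``pair a generic annihilated leg against the high created leg'' is incorrect, since $\overline{Q}_L$ disappeared when you squared, so they cannot produce an $n_+^H$-weighted bound. Worse, the $Q_iQ_j\,v\omega\,P_jP_i+h.c.$ piece can only be split against $n_0^2$: any Cauchy--Schwarz gives $\epsilon\sum_{i\neq j}Q_iQ_j\,v\omega\,Q_jQ_i+\epsilon^{-1}\widehat{v\omega}(0)\,n_0(n_0-1)/\vert\Lambda\vert$, whose second term is of size $\rho^2\vert\Lambda\vert\,\widehat{v\omega}(0)$, a leading-order quantity (these $v\omega$ cross terms are precisely what the exact algebra of Lemma~\ref{lem:potential_splitting} redistributes in order to replace $v$ by $g$; they are not error terms), while choosing $\epsilon$ large makes the first term unabsorbable into $\tfrac14\mathcal Q_4^{\mathrm{ren}}$. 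Since the lemma must hold for \emph{all} states $\Psi$ with a universal constant, no choice of parameters yields the claimed right-hand side along this route.

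The fix is the paper's ordering, i.e.\ dress \emph{before} squaring: inside the cubic term write
\begin{align*}
\sum_{i\neq j}\big(P_i\overline{Q}_{L,j}\,g\,Q_iQ_j+h.c.\big)&=\sum_{i\neq j}P_i\overline{Q}_{L,j}\,g\,\big[Q_iQ_j+\omega(P_iP_j+P_iQ_j+Q_iP_j)\big]+h.c.\\
&\quad-\sum_{i\neq j}P_i\overline{Q}_{L,j}\,g\omega\,(P_iP_j+P_iQ_j+Q_iP_j)+h.c.
\end{align*}
Now Cauchy--Schwarz on the first line (with $g\le v$ on $\mathrm{supp}\,v$) reconstructs $\tfrac14\mathcal Q_4^{\mathrm{ren}}$ exactly, at the cost of $C\widehat g(0)n_0n_+^H/\vert\Lambda\vert$, and every correction term retains the factor $P_i\overline{Q}_{L,j}$: the $P_iP_j$ one vanishes because the $P_i\cdots P_i$ sandwich turns $g\omega$ into a constant and $\overline{Q}_{L,j}P_j=0$, and the remaining two give, by Cauchy--Schwarz with weight $\varepsilon_1$, precisely $\widehat g(0)\frac{n_0}{\vert\Lambda\vert}\big(\varepsilon_1^{-1}n_+^H+\varepsilon_1n_+\big)$. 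The $n_+^H$ weighting and the scale $\rho\widehat g(0)$ cannot be recovered once the high projector has been squared away.
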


\begin{proof}
From the definitions we have
\begin{equation}
\mathcal Q_3^{\text{ren}} - \mathcal Q_3^{\text{low}} = \sum_{i \neq j} (P_i \overline{Q}_{L,j} g(x_i -x_j)Q_i Q_j + h.c.).
\end{equation}
In the right-hand side we aim to reconstruct the $\mathcal Q_4^{\rm{ren}}$ terms as
\begin{align}
 \sum_{i \neq j} (P_i \overline{Q}_{L,j} g Q_i Q_j + h.c.)  &=   \sum_{i \neq j} P_i \overline{Q}_{L,j} g \left[Q_i Q_j + \omega (P_iP_j +P_iQ_j + Q_i P_j) \right] + h.c. \nonumber\\
 &\quad -  \sum_{i \neq j} P_i \overline{Q}_{L,j} g\omega (P_iP_j +P_iQ_j + Q_i P_j) + h.c. \label{eq:Q'estim2}
\end{align}
We use Cauchy-Schwarz inequality on both terms. Using that $g \leq v$ in the support of $v$, the first line of \eqref{eq:Q'estim2} is controlled by
\begin{align*}
C \sum_{i \neq j} P_i \overline{Q}_{L,j} g (P_i \overline{Q}_{L,j})^{\dagger} + \frac{1}{4} \mathcal Q_4^{\text{ren}}
&= C  \widehat g(0) \frac{n_0 n_+^H}{\vert \Lambda \vert} + \frac{1}{4} \mathcal Q_4^{\text{ren}}.
\end{align*}
In the second line of \eqref{eq:Q'estim2}, the $P_iP_j$ term vanishes because $\overline{Q}_{L,j} P_j = 0$. The two other terms can be estimated as above. For instance, for any $ \varepsilon_1 > 0$,
\begin{align}
\sum_{i\neq j} ( P_i \overline{Q}_{L,j}g \omega P_i Q_j  + h.c.) &\leq
 \varepsilon^{-1}_1\sum_{i\neq j} P_i \overline{Q}_{L,j} g \omega (P_i \overline{Q}_{L,j})^\dagger +  \varepsilon_1 \sum_{i \neq j } P_i Q_j g \omega P_i Q_j  \nonumber \\
&\leq \widehat g(0) \frac{n_0}{\vert \Lambda \vert} \Big(  \varepsilon^{-1}_1 n_+^H +\varepsilon_1  n_+\Big),
\end{align}
and conclude observing that $n_0 \leq N$ when applied to $\Psi$.
\end{proof}

\begin{lemma}\label{lem:Q3loc2}
There exists a universal constant $C>0$ such that, for all $\varepsilon_2 >0$ and all $N$-particles state $\Psi \in L_{\rm{sym}}^2(\Lambda^N)$ we have
\begin{equation}
 \vert \langle  \mathcal Q_3^{\rm{low}} \rangle_\Psi - \langle \mathcal Q_3^{\rm{soft}} \rangle_\Psi \vert \leq  C\rho \widehat{g}(0) \Big( \varepsilon_2 K_H^d \langle n_+ \rangle_\Psi +  \varepsilon_2^{-1} \frac{\langle n_+ n_+^L \rangle_\Psi}{N}\Big).
 \end{equation}
\end{lemma}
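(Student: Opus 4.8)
The plan is to pass both operators to momentum space, observe that they are \emph{the same sum} except for the range of one index, and estimate the difference by a Cauchy--Schwarz argument. First I would second-quantize $\mathcal Q_3^{\rm{low}}$. Expanding $g(x_i-x_j)=|\Lambda|^{-1}\sum_k \widehat g(k)e^{ik(x_i-x_j)}$ and reading off the momentum-conservation constraints forced by the four projectors $P_i$, $Q_{L,j}$, $Q_i$, $Q_j$ (particle $i$ is sent from an excited momentum to the condensate, particle $j$ from an excited momentum to a low one), then substituting $k\mapsto -k$ and using that $\widehat g$ is even, one finds
\[
\mathcal Q_3^{\rm{low}}=\frac{1}{|\Lambda|}\sum_{\substack{k\neq 0,\ p\in\mathcal P_L\\ p-k\neq 0}}\widehat g(k)\,a_0^\dagger a_p^\dagger a_{p-k}a_k+h.c.
\]
This is exactly the expression defining $\mathcal Q_3^{\rm{soft}}$ in \eqref{def:Q3soft}, except that there $k$ is restricted to $\mathcal P_H$. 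Hence $\mathcal Q_3^{\rm{low}}-\mathcal Q_3^{\rm{soft}}$ is the same sum restricted to $0<|k|<K_H\ell^{-1}$, i.e.\ to the momenta of $\overline Q_H$. This is the decisive structural point: in the difference the annihilated momentum $k$ always lies in the low region, so that $\sum_{0<|k|<K_H\ell^{-1}}a_k^\dagger a_k=n_+^L$, and this identification is what will eventually upgrade a crude $n_+^2$ bound to the sharper $n_+\,n_+^L$.

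Next I would estimate the expectation. Writing each summand as $\langle a_p a_0\Psi, a_{p-k}a_k\Psi\rangle$ and applying Cauchy--Schwarz followed by Young's inequality with parameter $\varepsilon_2$, using $|\widehat g(k)|\leq\widehat g(0)$ (valid since $g=v\varphi_{\R^d}\geq0$) and a factor $2$ from the hermitian conjugate that absorbs the $\tfrac12$'s from Young, gives
\[
\big|\langle \mathcal Q_3^{\rm{low}}-\mathcal Q_3^{\rm{soft}}\rangle_\Psi\big|\leq\frac{\widehat g(0)}{|\Lambda|}\sum_{\substack{0<|k|<K_H\ell^{-1}\\ p\in\mathcal P_L}}\Big(\varepsilon_2\|a_p a_0\Psi\|^2+\varepsilon_2^{-1}\|a_{p-k}a_k\Psi\|^2\Big).
\]

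Finally I would treat the two sums separately. For the first, the number of lattice points $k\in\Lambda^*$ with $0<|k|<K_H\ell^{-1}$ is $O(K_H^d)$, while $\sum_{p\in\mathcal P_L}\|a_p a_0\Psi\|^2\leq\langle a_0^\dagger n_+ a_0\rangle_\Psi=\langle n_0 n_+\rangle_\Psi\leq N\langle n_+\rangle_\Psi$; together with $\widehat g(0)N|\Lambda|^{-1}=\rho\widehat g(0)$ this produces the term $\varepsilon_2 K_H^d\rho\widehat g(0)\langle n_+\rangle_\Psi$. For the second, summing first over $p\in\mathcal P_L$ gives $\sum_p a_{p-k}^\dagger a_{p-k}\leq n_+$ and then $a_k^\dagger n_+ a_k=a_k^\dagger a_k(n_+-1)\leq a_k^\dagger a_k n_+$; summing over $k$ and using the identity $\sum_{0<|k|<K_H\ell^{-1}}a_k^\dagger a_k=n_+^L$ from the first paragraph yields $\varepsilon_2^{-1}\rho\widehat g(0)N^{-1}\langle n_+ n_+^L\rangle_\Psi$. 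Adding the two bounds gives the claim.

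I do not expect a genuine obstacle here: the whole argument is Cauchy--Schwarz plus lattice-point counting. The only point demanding care is the bookkeeping — getting the momentum-conservation relabeling right so that the difference is \emph{precisely} the low-$k$ part, correctly commuting the number operators past the $a_k$'s, and, most importantly, recognizing that the sum over $k$ in the low region reconstructs $n_+^L$ rather than $n_+$, since it is this that gives the stated estimate rather than a weaker one.
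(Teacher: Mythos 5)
Your proof is correct and takes essentially the same route as the paper: second quantization identifies $\mathcal Q_3^{\rm{low}}-\mathcal Q_3^{\rm{soft}}$ as the restriction of the sum to $k\in\mathcal P_H^c\setminus\{0\}$, then a Cauchy--Schwarz with weight $\varepsilon_2$, a lattice-point count $O(K_H^d)$ together with $n_0\leq N$ for the first term, and the identification of the $k$-sum with $n_+^L$ and the $p$-sum with $n_+$ for the second term. Your explicit handling of the commutator via $a_k^\dagger n_+ a_k = a_k^\dagger a_k(n_+-1)$ and of the constraint $p-k\neq 0$ only makes precise details the paper treats parenthetically.
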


\begin{proof}
First of all, we can rewrite \eqref{eq:defQ3low} in second quantization,
\begin{equation}
\mathcal Q_3^{\rm{low}} = \frac{1}{\vert \Lambda \vert} \sum_{p \in \mathcal P_L, k \neq 0} \widehat g(k) \big(  a_0^\dagger a_p^\dagger a_{p-k} a_k + h.c. \big) .
\end{equation}
From the definition \eqref{def:Q3soft} of $\mathcal Q_3^{\rm{soft}}$ we deduce
\begin{equation}
   \mathcal Q_3^{\rm{low}} - \mathcal Q_3^{\rm{soft}}
 = \frac{1}{\vert \Lambda \vert} \sum_{\substack{k \in \mathcal P_H^c , k \neq 0 \\ p \in \mathcal P_L}}  \widehat g(k) \big(  a_0^\dagger a_p^\dagger a_{p-k} a_k + h.c. \big) .
\end{equation}
When applying to $\Psi$, we can use the Cauchy-Schwarz inequality with weight $ \varepsilon_2 >0$ and deduce
\begin{align}
\vert \langle  \mathcal Q_3^{\rm{low}} \rangle_\Psi - \langle \mathcal Q_3^{\rm{soft}} \rangle_\Psi \vert \leq C\frac{\widehat g(0)}{\vert\Lambda \vert} \sum_{\substack{k \in \mathcal P_H^c , k \neq 0 \\ p \in \mathcal P_L}} \big(  \varepsilon_2  \langle a_0^\dagger a_p^\dagger a_p a_0 \rangle_\Psi  + \varepsilon_2^{-1} \langle a_k^\dagger a_{p-k}^\dagger a_{p-k} a_k \rangle_\Psi \big). \label{eq:Q3soft1}
\end{align}
In the first term of \eqref{eq:Q3soft1} we recognize $n_+$ and a volume of $\mathcal P_H^c$. Similarly in the second term, the $p$-sum gives $n_+$ and the $k$-sum gives $n_+^L$ (and the remaining commutator is controlled by the other terms). Thus, 
\begin{equation}
\vert \langle  \mathcal Q_3^{\rm{low}} \rangle_\Psi - \langle \mathcal Q_3^{\rm{soft}} \rangle_\Psi \vert \leq C \widehat g(0) \bigg(  \varepsilon_2  K_H^d \frac{N \langle n_+ \rangle_\Psi}{\vert \Lambda \vert} +  \varepsilon_2^{-1} \frac{\langle n_+ n_+^L \rangle_\Psi}{\vert \Lambda \vert} \bigg),
\end{equation}
and this concludes the proof.
\end{proof}

We are now ready to prove Proposition~\ref{prop:Q3loc}.

\begin{proof}[Proof of Proposition~\ref{prop:Q3loc}]
Joining together Lemma~\ref{lem:Q3loc1} and Lemma~\ref{lem:Q3loc2}, we get that the error made approximating $\mathcal{Q}_3^{\text{ren}}$ by $\mathcal{Q}_3^{\text{soft}}$, testing on a state $\Psi$ as in the assumptions such that $n_+^L \leq \mathcal{M}$, is bounded by 
\begin{equation}
\frac{1}{4}\langle \mathcal{Q}_4^{\text{ren}}\rangle_{\Psi} + C \rho \widehat{g}(0) \Big( K_{\ell}^{-2} + K_H^{d/2}\Big(\frac{\mathcal{M}}{N}\Big)^{1/2}\Big) \langle n_+\rangle_{\Psi} + C \rho \widehat{g}(0) K_{\ell}^2 \langle n_+^H\rangle_{\Psi}
\end{equation}
where we chose $\varepsilon_1 =  K_{\ell}^{-2}$ and $\varepsilon_2 = \Big(\frac{\mathcal{M}}{N K_H^d}\Big)^{1/2}$.
Let us focus on the $n_+$ terms. We use \eqref{eq:condensationestimate} of Theorem~\ref{thm:apriori_n+} to bound $\langle n_+ \rangle_{\Psi}$ and \eqref{M-KH_relation} to conclude that the expression is of an order smaller than LHY. For the $n_+^H$ terms, we use \eqref{bound_n_+H} instead and \eqref{KL_relations}.

\end{proof}

\section{Bounds on $\mathcal Q_3$ when $\rho_z \simeq \rho$ : the effect of Soft Pairs}\label{Sec:softpair}

In this section we explain the effects of soft pairs on the energy in the case when $\rho_z$ is close to $\rho$. In the remaining part of this section, we only assume that $\vert \rho_z - \rho \vert \leq \frac 1 2 \rho$, so that we can replace $\rho_z$ by $\rho$ in error estimates. 

We will see how $\mathcal Q_3^{\rm{soft}}$, $\mathcal Q_2^{\rm{ex}}$ and $\mathcal K^{\rm{diag}}$ can be combined together, as stated in Proposition~\ref{prop:Q3z} below. Actually, only the high momenta in $\mathcal K^{\rm{diag}}$ are needed, namely
\begin{equation}\label{def:KHdiag}
\mathcal K_H^{\rm{diag}} = \sum_{k \in \mathcal P_H} \mathcal D_k b_k^\dagger b_k.
\end{equation}
Note that we can use $c$-number substitution to rewrite $\mathcal Q_3^{\rm{soft}}$ as
\begin{equation}
\mathcal Q_3^{\rm{soft}} = \int _{\mathbb{C}}\mathcal Q_3^{\rm{soft}}(z) \vert z \rangle \langle z \rvert \dd z,
\end{equation}
with
\begin{equation}\label{eq:Q3soft.z}
\mathcal Q_3^{\rm{soft}}(z) = \frac{1}{\vert \Lambda \vert} \sum_{k \in \mathcal P_H, p \in \mathcal P_L} \widehat g(k) \big(  \bar z a_p^\dagger a_{p-k} a_k + h.c. \big) .
\end{equation}
With this notation, we prove the following result. 

\begin{proposition}\label{prop:Q3z} There exists a universal constant $C>0$ such that the following holds. Let $\rho a^d \leq C^{-1}$ and $z \in \mathbb C$ be such that $\vert \rho_z - \rho \vert \leq \frac 1 2 \rho$. Then for any normalized state $\Phi \in \mathscr F_s(\rm{Ran} Q)$ satisfying
\[ \Phi = \one_{[0,\mathcal M]}(n_+^L) \Phi, \]
we have, for a small fraction $\varepsilon_{\text{gap}}$ of the spectral gap, suitably chosen in~\ref{app:parameters} with the other parameters, 
\begin{equation}
 \langle  \mathcal Q_3^{\rm{soft}}(z) +  \mathcal K^{\rm{diag}}_H + \mathcal Q_2^{\rm{ex}}(z) \rangle_\Phi  \geq - \varepsilon_{\rm{gap}} \frac{\langle n_+ \rangle_\Phi}{\ell^2} - K_\ell^2 \frac{\langle n_+^H \rangle_\Phi}{\ell^2} + o^{\rm{LHY}}_d.
\end{equation}
\end{proposition}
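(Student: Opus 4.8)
The plan is to absorb the sign-indefinite cubic operator $\mathcal Q_3^{\rm{soft}}(z)$ into the positive high-momentum gap $\mathcal K_H^{\rm{diag}}$ by a completion of the square, discarding $\mathcal Q_2^{\rm{ex}}(z)\ge 0$ (which only helps a lower bound). The point I would emphasise from the outset is that the square must be completed in the Bogoliubov operators $b_k$, not in the $a_k$. Indeed, if one first tried to bound $\mathcal K_H^{\rm{diag}}$ below by $\sum_{k\in\mathcal P_H}k^2 a_k^\dagger a_k$, the off-diagonal pairing part $\tfrac12\sum_k\mathcal B_k(a_k^\dagger a_{-k}^\dagger+a_ka_{-k})$ would have to be controlled by Cauchy--Schwarz, generating a bare c-number of size $\rho\sum_{k\in\mathcal P_H}\widehat g(k)$, which is vastly larger than the LHY precision. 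Completing the square directly in $b_k$ avoids this, because the residual term will carry a number operator in every summand.

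Concretely, for $k\in\mathcal P_H$ set $J_k:=\frac{\widehat g(k)\bar z}{|\Lambda|}\sum_{p\in\mathcal P_L}a_p^\dagger a_{p-k}$, so that
\[ \mathcal Q_3^{\rm{soft}}(z)=\sum_{k\in\mathcal P_H}\big(J_k a_k+a_k^\dagger J_k^\dagger\big). \]
I would then substitute $a_k=(1-\alpha_k^2)^{-1/2}(b_k-\alpha_k b_{-k}^\dagger)$. Because $p\in\mathcal P_L$ while $k,p-k\in\mathcal P_H$ (disjoint momenta, using $K_L\ll K_H$), the operator $b_k$ commutes with both $J_k$ and $J_k^\dagger$; moreover $|\alpha_k|\lesssim K_\ell^2K_H^{-2}\ll1$ on $\mathcal P_H$, so the replacement error $\mathcal E:=\sum_{k\in\mathcal P_H}J_k(a_k-b_k)+\mathrm{h.c.}$ is smaller than the leading terms by a factor $\sup_{\mathcal P_H}|\alpha_k|$ and will be reabsorbed into a reserved fraction of $\mathcal K_H^{\rm{diag}}$ up to $o^{\mathrm{LHY}}_d$.

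The completion of the square is then mode-by-mode,
\[ \mathcal D_k b_k^\dagger b_k+J_kb_k+b_k^\dagger J_k^\dagger=\mathcal D_k\big(b_k^\dagger+\mathcal D_k^{-1}J_k\big)\big(b_k+\mathcal D_k^{-1}J_k^\dagger\big)-\mathcal D_k^{-1}J_kJ_k^\dagger\ge-\mathcal D_k^{-1}J_kJ_k^\dagger, \]
and, writing $Y_k:=\sum_{p\in\mathcal P_L}a_{p-k}^\dagger a_p$ so that $J_kJ_k^\dagger=|\Lambda|^{-2}\widehat g(k)^2|z|^2\,Y_k^\dagger Y_k$, summing over $k\in\mathcal P_H$ gives
\[ \mathcal K_H^{\rm{diag}}+\mathcal Q_3^{\rm{soft}}(z)\ge-\frac{|z|^2}{|\Lambda|^2}\sum_{k\in\mathcal P_H}\frac{\widehat g(k)^2}{\mathcal D_k}\,Y_k^\dagger Y_k+\mathcal E. \]
No bare c-number survives. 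The key a priori estimate, uniform in $k$, is $\langle Y_k^\dagger Y_k\rangle_\Phi\le\langle n_+^L\,n_+^H\rangle_\Phi+\langle n_+\rangle_\Phi$, obtained by normal-ordering $Y_k^\dagger Y_k=\sum_{p,p'}a_p^\dagger a_{p'-k}^\dagger a_{p-k}a_{p'}+\sum_{p\in\mathcal P_L}a_p^\dagger a_p$ and bounding the quartic part through a Cauchy--Schwarz in $(p,p')$ by $\langle(\sum_{p\in\mathcal P_L}a_{p-k}^\dagger a_{p-k})(\sum_{p\in\mathcal P_L}a_p^\dagger a_p)\rangle_\Phi\le\langle n_+^H n_+^L\rangle_\Phi$, where $|p-k|>K_L\ell^{-1}$ is what makes the first factor part of $n_+^H$.

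Using the hypothesis $n_+^L\le\mathcal M$ in the quartic term, $\sum_{p\in\mathcal P_L}a_p^\dagger a_p\le n_+$ in the remainder, $|z|^2/|\Lambda|^2\le\frac32\rho/|\Lambda|$, $\mathcal D_k\simeq k^2$ on $\mathcal P_H$, and passing from the sum to the integral (legitimate since $K_\ell\gg1$ and $\widehat g\in L^2$, the discretisation error being $o^{\mathrm{LHY}}_d$), I would arrive at a bound of the form $-\rho\,\mathcal I_H(\mathcal M\langle n_+^H\rangle_\Phi+\langle n_+\rangle_\Phi)$ with $\mathcal I_H:=(2\pi)^{-d}\int_{|k|\ge K_H\ell^{-1}}\widehat g(k)^2k^{-2}\,\dd k$. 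Since $\ell^{-2}=\rho\widehat g(0)K_\ell^{-2}$, the two target terms are recovered from the parameter conditions $\mathcal I_H\le\varepsilon_{\rm{gap}}\widehat g(0)K_\ell^{-2}$ and $\mathcal M\,\mathcal I_H\le\widehat g(0)$, both ensured by the tuning in~\ref{app:parameters} because $\mathcal I_H\to0$ as $K_H\ell^{-1}\to\infty$. The main obstacle is the careful bookkeeping of everything that must land in $o^{\mathrm{LHY}}_d$: the $a_k\to b_k$ substitution error $\mathcal E$, the sum-to-integral passage, and the replacement $\rho_z\to\rho$; here the reserved fraction of $\mathcal K_H^{\rm{diag}}$ (and, if convenient, the discarded $\mathcal Q_2^{\rm{ex}}(z)\ge0$) is used to absorb the residual diagonal pieces while keeping the $n_+$ and $n_+^H$ coefficients at the stated size.
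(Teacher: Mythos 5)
Your algebraic skeleton is the same as the paper's (Lemma~\ref{lem.Q31}): rewrite $\mathcal Q_3^{\rm{soft}}(z)$ in the Bogoliubov modes, complete the square against $\mathcal K_H^{\rm{diag}}$ mode by mode, and normal-order the residual $J_kJ_k^\dagger$ into a quartic part plus the commutator part $\sum_{p\in\mathcal P_L}a_p^\dagger a_p$. But your treatment of that residual contains a fatal error: you discard $\mathcal Q_2^{\rm{ex}}(z)$ at the outset and then claim the coefficient $\mathcal I_H=(2\pi)^{-d}\int_{|k|\geq K_H\ell^{-1}}\widehat g(k)^2k^{-2}\,\dd k$ is small ``because $\mathcal I_H\to0$ as $K_H\ell^{-1}\to\infty$''. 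In the paper's parameter regime this limit is never reached: from $\rho\widehat g(0)R^2\leq K_\ell^{-9}$ and $K_H=K_\ell^{4+3\varepsilon}$ one gets $K_H\ell^{-1}R\leq K_\ell^{-3/2+3\varepsilon}\ll1$, i.e.\ the cutoff $K_H\ell^{-1}$ lies far \emph{below} the scale $R^{-1}$ where $\widehat g$ starts to decay. Hence $\mathcal I_H$ is essentially the full integral, $\mathcal I_H\approx 2\widehat{g\omega}(0)$ (Lemma~\ref{lem:gomega.approx}), a fixed potential-dependent quantity which for strong potentials with $a\sim R$ is of the same order as $\widehat g(0)$ itself. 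Your commutator residual is then $\approx-2\rho_z\widehat{g\omega}(0)\sum_{p\in\mathcal P_L}a_p^\dagger a_p$, with coefficient as large as $\rho\widehat g(0)$; the inequality you need, $\rho\,\mathcal I_H\leq\varepsilon_{\rm{gap}}\ell^{-2}=\rho\widehat g(0)K_\ell^{-4}$, fails by a factor $K_\ell^{4}$, and after the condensation estimate $\langle n_+\rangle_\Psi\lesssim NK_\ell^2\f$ this term is $K_\ell^2$ times \emph{larger} than the LHY energy, so it cannot be hidden in $o_d^{\rm{LHY}}$. This is exactly why the proposition carries $\mathcal Q_2^{\rm{ex}}(z)$ on its left-hand side: the commutator residual is not an error term but equals, up to controllable corrections, $-\rho_z\sum_{p\in\mathcal P_L}\big(\widehat{g\omega}(0)+\widehat{g\omega}(p)\big)a_p^\dagger a_p$, and the heart of the paper's proof is its cancellation against $\mathcal Q_2^{\rm{ex}}$ from \eqref{def:Q2ex}, the leftover being a $\mathcal P_L^c$-sum bounded by $C\rho\widehat g(0)n_+^H$. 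Dropping $\mathcal Q_2^{\rm{ex}}$ ``because it is positive'' throws away the one term that makes the residual harmless.

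The same misjudgment of $\mathcal I_H$ also breaks your quartic estimate: bounding $\langle Y_k^\dagger Y_k\rangle_\Phi$ uniformly in $k$ and then summing $\widehat g(k)^2/\mathcal D_k$ over all of $\mathcal P_H$ produces $\approx 2\rho\widehat{g\omega}(0)\mathcal M\langle n_+^H\rangle_\Phi$, and your condition $\mathcal M\,\mathcal I_H\leq\widehat g(0)$ would force $\mathcal M\lesssim \widehat g(0)/\widehat{g\omega}(0)$, which is of order $1$ for $a\sim R$, whereas $\mathcal M$ must be polynomially large in $(\rho a^d)^{-1}$ by \eqref{eq:boundonM1}. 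The paper never pays the integral $\mathcal I_H$ in this term: in $\mathcal T_{\rm{op}}$ it uses the pointwise bound $\widehat g(k)^2/\mathcal D_k\leq\widehat g(0)^2\ell^2K_H^{-2}$ together with $\sum_{k\in\mathcal P_H}\langle b_{s-k}^\dagger b_{s-k}\rangle_\Phi\lesssim\langle n_+^H\rangle_\Phi$, so the $k$-sum is absorbed by a number operator rather than a volume factor, giving the coefficient in \eqref{eq.T2op} which is absorbable via \eqref{eq:rel_epsK_KM2}. In short, your algebra is the right algebra, but both error coefficients you produce are governed by $\widehat{g\omega}(0)$, which is not small relative to $\widehat g(0)$ for general potentials, and the proof cannot close without exhibiting the cancellation of the commutator residual against $\mathcal Q_2^{\rm{ex}}$.
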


In order to prove Proposition~\ref{prop:Q3z}, we start by rewriting $\mathcal Q_3^{\rm{soft}}(z)$ in terms of the $b_k$'s defined in \eqref{def:bk}. Note that
\begin{equation}
 a_k = \frac{b_k -\alpha_k b_{-k}^\dagger}{\sqrt{1- \alpha_k^2 \vphantom{\alpha_{p-k}^2}}}, \qquad a_{p-k} =  \frac{b_{p-k} -\alpha_{p-k} b_{k-p}^\dagger}{\sqrt{1- \alpha_{p-k}^2}}.
\end{equation}
Therefore,
\[a_{p-k} a_k = \frac{\left( b_{p-k} b_k - \alpha_k b_{p-k} b_{-k}^\dagger - \alpha_{p-k} b_{k-p}^\dagger b_k + \alpha_{p-k} \alpha_k b_{k-p}^\dagger b_{-k}^\dagger \right)}{\sqrt{1-\alpha_k^2 \vphantom{\alpha_{p-k}^2}}\sqrt{1-\alpha_{p-k}^2}} ,\]
and $\mathcal Q_3^{\rm{soft}}(z) = \mathcal Q_3^{(1)} + \mathcal Q_3^{(2)} + \mathcal Q_3^{(3)} + \mathcal Q_3^{(4)}$
where
\begin{align}\label{defQ31}
\mathcal Q_3^{(1)} &= \frac{1}{\vert \Lambda \vert} \sum_{\substack{k \in \mathcal{P}_H,\\ p \in \mathcal P_L}}  \frac{ \widehat g(k)}{\sqrt{1-\alpha_k^2\vphantom{\alpha_{p-k}^2}}\sqrt{1-\alpha_{p-k}^2}} \big( \bar z a_p^\dagger b_{p-k} b_k + \alpha_k \alpha_{p-k} \bar z a_p^\dagger b_{k-p}^\dagger b_{-k}^\dagger +  h.c. \big), \\
\label{defQ32}
 \mathcal Q_3^{(2)} &= -\frac{1}{\vert \Lambda \vert} \sum_{\substack{k \in \mathcal{P}_H,\\ p \in \mathcal P_L}}   \frac{\widehat g(k) \alpha_k}{\sqrt{1-\alpha_k^2\vphantom{\alpha_{p-k}^2}} \sqrt{1-\alpha_{p-k}^2}} \big( \bar z a^\dagger_p b^\dagger_{-k} b_{p-k} + z b_{p-k}^\dagger b_{-k} a_p \big),\\
 \label{defQ33}
\mathcal Q_3^{(3)} &= -\frac{1}{\vert \Lambda \vert}\sum_{\substack{k \in \mathcal{P}_H,\\ p \in \mathcal P_L}}  \frac{ \widehat g(k) \alpha_{p-k}}{\sqrt{1-\alpha_k^2\vphantom{\alpha_{p-k}^2}} \sqrt{1-\alpha_{p-k}^2}} \big( \bar z a^\dagger_p b^\dagger_{k-p} b_{k} + z b_{k}^\dagger b_{k-p} a_p \big),\\
\label{defQ34}
\mathcal Q_3^{(4)} &= - \frac{1}{\vert \Lambda \vert}  \sum_{\substack{k \in \mathcal{P}_H,\\ p \in \mathcal P_L}} \frac{ \widehat g(k) \alpha_k}{\sqrt{1-\alpha_k^2\vphantom{\alpha_{p-k}^2}} \sqrt{1-\alpha_{p-k}^2}}  [b_{p-k} , b_{-k}^\dagger] ( \bar z a_p^\dagger + z a_p)=0.
\end{align}
Notice that $\mathcal Q_3^{(4)}$ cancels due to the commutation relation $[b_{p-k} , b_{-k}^\dagger]=\delta_{-k,p-k}$.
In Lemmas~\ref{lem.Q31} and~\ref{lem.Q32} below, we get bounds on $\mathcal Q_3^{(1)}$, $\mathcal Q_3^{(2)}$, and $\mathcal Q_3^{(3)}$, thus proving Proposition~\ref{prop:Q3z}.

\subsection{Estimates on $\mathcal Q_3^{(1)}$}

The first part $\mathcal Q_3^{(1)}$ absorbs $\mathcal Q_2^{\rm{ex}}$ using $\mathcal (1-\varepsilon_{K})K_H^{\rm{diag}}$ for some parameter $\varepsilon_{K}$ chosen in \ref{app:parameters}. The remaining fraction $\varepsilon_{K}K_H^{\rm{diag}}$ will be later in the proof to control other terms.

\begin{lemma} \label{lem.Q31}
There exists a universal constant $C>0$ such that the following holds. If $\rho a^d \leq C^{-1}$, $\vert \rho_z - \rho \vert \leq \frac 1 2 \rho$, and if the parameters $\varepsilon_K, \varepsilon_{\rm{gap}} \ll 1$ and $\mathcal M >0$, satisfy the relations in~\ref{app:parameters}, then for any normalized state $\Phi \in \mathscr F_s(\rm{Ran} Q)$ satisfying
\[ \Phi = \one_{[0,\mathcal M]}(n_+^L) \Phi, \]
we have
\begin{align*}
 \big\langle \mathcal Q_3^{(1)} + \mathcal Q_2^{\rm{ex}} + \big(1-  \varepsilon_K \big) \mathcal K^{\rm{diag}}_H \big\rangle_{\Phi}
 \geq - \varepsilon_{\rm{gap}} \frac{\langle n_+ \rangle_{\Phi}}{\ell^2} - K_\ell^2 \frac{\langle n_+^H \rangle_\Phi}{\ell^2} + o^{\rm{LHY}}_d.
\end{align*}
\end{lemma}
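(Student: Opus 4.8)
The plan is to bound $\mathcal Q_3^{(1)}$ from below by completing the square, using the low-momentum quadratic term $\mathcal Q_2^{\rm ex}$ and the positive high-momentum operator $(1-\varepsilon_K)\mathcal K^{\rm diag}_H$ to absorb its negative part. First I would split $\mathcal Q_3^{(1)}$ from \eqref{defQ31} into its \emph{main} part, carrying $\bar z\, a_p^\dagger b_{p-k}b_k$, and its \emph{remainder}, carrying $\alpha_k\alpha_{p-k}\bar z\, a_p^\dagger b_{k-p}^\dagger b_{-k}^\dagger$. Since $p\in\mathcal P_L$ and $k\in\mathcal P_H$ with $K_L\ll K_H$, the momentum $p-k$ is again high, so both $b_k$ and $b_{p-k}$ are high modes contributing to $\mathcal K^{\rm diag}_H$ and to $n_+^H$; this observation is used throughout.

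For the main part I would write it as $\sum_{p\in\mathcal P_L}\big(a_p^\dagger D_p+D_p^\dagger a_p\big)$ with $D_p=\frac{\bar z}{|\Lambda|}\sum_{k\in\mathcal P_H}\frac{\widehat g(k)}{\sqrt{1-\alpha_k^2}\sqrt{1-\alpha_{p-k}^2}}\,b_{p-k}b_k$, and apply the operator inequality $a_p^\dagger D_p+D_p^\dagger a_p\ge -\lambda_p\, a_p^\dagger a_p-\lambda_p^{-1}D_p^\dagger D_p$. I would choose $\lambda_p=\rho_z\widehat{g\omega}(0)$, so that $\sum_{p\in\mathcal P_L}\lambda_p a_p^\dagger a_p\le\mathcal Q_2^{\rm ex}$ by \eqref{def:Q2ex}, the extra $\rho_z\widehat{g\omega}(p)\ge 0$ providing slack. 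The residual $-\sum_p\lambda_p^{-1}D_p^\dagger D_p$ is then the term to control.

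The heart of the argument is the bound on $D_p^\dagger D_p$. Using the operator Cauchy--Schwarz inequality $\langle D_p^\dagger D_p\rangle\le\frac{|z|^2}{|\Lambda|^2}\big(\sum_k \frac{\widehat g(k)^2}{(1-\alpha_k^2)(1-\alpha_{p-k}^2)}\,w_{k,p}^{-1}\big)\sum_{k}w_{k,p}\langle b_k^\dagger b_{p-k}^\dagger b_{p-k}b_k\rangle$ with the \emph{pair weight} $w_{k,p}=\tfrac12(\mathcal D_k+\mathcal D_{p-k})$, the crucial point is that the whole pair carries energy $\mathcal D_k+\mathcal D_{p-k}\approx 2k^2$. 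Together with \eqref{defG} rewriting $\widehat{g\omega}(0)$ as $\frac{1}{(2\pi)^d}\int G_d$, and the fact that on $\mathcal P_H$ one has $\mathcal D_k\approx k^2$ and $\alpha_k\approx 0$, the scalar prefactor satisfies $\frac{\rho_z}{\lambda_p}\frac{1}{|\Lambda|}\sum_{k\in\mathcal P_H}\frac{\widehat g(k)^2}{\mathcal D_k+\mathcal D_{p-k}}\approx\frac{\rho_z}{\lambda_p}\widehat{g\omega}(0)=1$ in the dilute limit: it is precisely this factor of $2$ in the pair energy that matches the $2k^2$ hidden in $\widehat{g\omega}(0)$ and turns the naively borderline estimate into a true inequality. (Passing from this sum to an integral is where the $L^p$ hypothesis on $v$ enters, and it produces an $o^{\rm LHY}_d$ error.) What remains is the two-body operator $\sum_{p,k}w_{k,p}b_k^\dagger b_{p-k}^\dagger b_{p-k}b_k$, which I would bound by $(1-\varepsilon_K)\mathcal K^{\rm diag}_H$ up to an error proportional to the number of high excitations: using $\sum_{p\in\mathcal P_L}b_{p-k}^\dagger b_{p-k}\le n_+^H$, the a priori condensation bounds of Theorem~\ref{thm:apriori_n+}, and the parameter relations of \ref{app:parameters} yields the claimed error $K_\ell^2\langle n_+^H\rangle_\Phi/\ell^2$.

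Finally, the remainder part carrying $\alpha_k\alpha_{p-k}$ is handled by a direct weighted Cauchy--Schwarz; since $\alpha_k=O(\rho_z\widehat g(k)/k^2)$ is small on $\mathcal P_H$, this term is lower order, its $a_p^\dagger a_p$ contribution can be absorbed into $\varepsilon_{\rm gap}\langle n_+\rangle_\Phi/\ell^2$, and its $b$-contribution goes into the reserved fraction of $\mathcal K^{\rm diag}_H$ and into $o^{\rm LHY}_d$. The main obstacle is the third paragraph: the completion of the square is borderline, and making it work requires both the precise identity \eqref{defG} for $\widehat{g\omega}(0)$ and the correct bookkeeping of the pair energy $\mathcal D_k+\mathcal D_{p-k}$ rather than a single $\mathcal D_k$, after which the leftover quartic must be shown to be a genuinely subleading perturbation of $\mathcal K^{\rm diag}_H$ through the smallness of $n_+^H$.
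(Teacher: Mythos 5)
There is a genuine gap, and it sits exactly where you located the ``main obstacle.'' Your Cauchy--Schwarz step $a_p^\dagger D_p + D_p^\dagger a_p \geq -\lambda_p a_p^\dagger a_p - \lambda_p^{-1} D_p^\dagger D_p$ puts \emph{both} high-mode annihilation operators into $D_p$, so $\lambda_p^{-1} D_p^\dagger D_p$ is a normally ordered quartic in the high modes. After your weighted Cauchy--Schwarz, what must be dominated by $(1-\varepsilon_K)\mathcal K^{\rm{diag}}_H + K_\ell^2 n_+^H/\ell^2 + o^{\rm{LHY}}_d$ is essentially
\begin{equation*}
\sum_{k \in \mathcal P_H} \mathcal D_k \Big\langle b_k^\dagger \Big( \sum_{p \in \mathcal P_L} b_{p-k}^\dagger b_{p-k} \Big) b_k \Big\rangle_\Phi ,
\end{equation*}
i.e.\ a quantity of the type $\langle \mathcal K^{\rm{diag}}_H \, n_+^H \rangle_\Phi$. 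On a state with $n$ high excitations this scales like $n^2$ (with unbounded weights $\mathcal D_k$), while $\langle \mathcal K^{\rm{diag}}_H \rangle_\Phi$ scales like $n$ and $K_\ell^2 \langle n_+^H\rangle_\Phi/\ell^2$ carries only the fixed weight $K_\ell^2/\ell^2$; so no inequality of the asserted form can hold. The hypothesis gives the \emph{operator} bound $n_+^L \leq \mathcal M$ but nothing of the kind for $n_+^H$, and the expectation bound \eqref{bound_n_+H} of Theorem~\ref{thm:apriori_n+} cannot be inserted inside the quartic (expectations of products do not factorize, and it is a bound for the original state, not for $b_k \Phi$). The prefactor bookkeeping you emphasize (the pair energy $\mathcal D_k + \mathcal D_{p-k} \approx 2k^2$ against the identity \eqref{defG}) is correct but irrelevant to this obstruction; note also that even granting it you obtain the coefficient exactly $1$, not $1 - \varepsilon_K$.

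The paper's proof avoids this by completing the square in the \emph{opposite} direction: it writes $\mathcal Q_3^{(1)} + (1-\varepsilon_K)\mathcal K^{\rm{diag}}_H = (1-\varepsilon_K)\sum_{k \in \mathcal P_H} \mathcal D_k c_k^\dagger c_k + \sum_k \mathcal T(k)$ with $c_k = b_k + S_k$, where the shift $S_k$ contains $z b_{p-k}^\dagger a_p + \alpha_k \alpha_{p-k} \bar z a_{-p}^\dagger b_{p-k}^\dagger$, i.e.\ one high-mode \emph{creation} operator paired with one low-mode operator. The residual $\mathcal T(k) = -(1-\varepsilon_K)\mathcal D_k S_k^\dagger S_k$ then carries the $b$'s in anti-normal order $b_{p-k} b_{s-k}^\dagger$, and normal ordering splits it into: (i) the commutator part $\mathcal T_{\rm{com}} \approx - 2\rho_z \widehat{g\omega}(0) \sum_{p \in \mathcal P_L} a_p^\dagger a_p$, which is what \emph{cancels} $\mathcal Q_2^{\rm{ex}}$ (rather than $\mathcal Q_2^{\rm{ex}}$ serving as a Cauchy--Schwarz weight, as in your scheme); and (ii) the operator part $\mathcal T_{\rm{op}}$, in which the \emph{low}-mode number operator --- bounded by $\mathcal M$ on the admissible states --- is sandwiched between $b_{s-k}^\dagger$ and $b_{s-k}$, yielding a bound $C\rho_z \ell^{4-d} \widehat g(0)^2 K_H^{-2} K_L^d \mathcal M \,\langle n_+^H \rangle_\Phi / \ell^2$ absorbable in $K_\ell^2 \langle n_+^H\rangle_\Phi/\ell^2$ via \eqref{eq:rel_epsK_KM2}. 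In short, the roles of the bounded quantity $\mathcal M$ and the unbounded quantity $n_+^H$ are exchanged between your residual and the paper's residual, and that exchange is precisely what makes the estimate close; your outline would need this reversal to be repaired.
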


\begin{proof}
We first reorder the creation and annihilation operators, applying a change of variables $k \mapsto -k, p \mapsto -p$ in the $\alpha$ terms,
\begin{align*}
\mathcal Q_3^{(1)} &=\frac{1}{\vert \Lambda \vert} \sum_{k \in \mathcal{P}_H ,p \in \mathcal P_L} \frac{ \widehat g(k)}{\sqrt{1-\alpha_k^2 \vphantom{\alpha_{p-k}^2}}\sqrt{1-\alpha_{p-k}^2}} \\
& \quad \times \left( \bar z a_p^\dagger b_{p-k} b_k + \alpha_k \alpha_{p-k} \bar z a_{-p}^\dagger b_{p-k}^\dagger b_{k}^\dagger + z b_k^\dagger b_{p-k}^\dagger a _p + \alpha_k \alpha_{p-k} z b_k b_{p-k} a_{-p} \right) \\
&= \frac{1}{\vert \Lambda \vert} \sum_{k \in \mathcal{P}_H ,p \in \mathcal P_L} \frac{ \widehat g(k)}{\sqrt{1-\alpha_k^2 \vphantom{\alpha_{p-k}^2}}\sqrt{1-\alpha_{p-k}^2}} \Big( \big( \bar z a_p^\dagger b_{p-k} + \alpha_k  \alpha_{p-k} z b_{p-k} a_{-p} \big) b_k  \\ 
& \quad + b_k^\dagger \big( z b_{p-k}^\dagger a_p + \alpha_k \alpha_{p-k} \bar z a^\dagger_{-p} b_{p-k}^\dagger \big) + \alpha_k \alpha_{p-k} \big( z \left[ b_k , b_{p-k} a_{-p} \right]  + \bar z \big[ a _{-p}^\dagger b_{p-k}^\dagger, b_k^\dagger \big] \big) \Big) .
\end{align*}
Note that the two last commutators vanish. Thus, we can complete the square to get, 
\begin{equation}\label{eq.Q31.low}
\mathcal Q_3^{(1)} + (1- \varepsilon_K) \mathcal K^{\rm{diag}}_H =  (1-  \varepsilon_K) \sum_{k \in \mathcal{P}_H}  \mathcal D_k c_k^\dagger c_k + \sum_{k \in \mathcal{P}_H}\mathcal T(k), 
\end{equation}
where we keep a small portion of $\mathcal{K}^{\rm{diag}}_H$ in order to bound other error terms, and we define  
\begin{equation}
 c_k = b_k + \frac{1}{\mathcal D_k (1- \varepsilon_K) \vert \Lambda \vert} \sum_{p \in \mathcal P_L} \frac{ \widehat g(k)}{\sqrt{ 1-\alpha_{k}^{2}\vphantom{\alpha_{p-k}^2} }\sqrt{1-\alpha_{p-k}^2}} \Big( z b_{p-k}^\dagger a_p + \alpha_k \alpha_{p-k} \bar z a_{-p}^\dagger b_{p-k}^\dagger \Big) ,
\end{equation}
\begin{align}\label{term.T2}
\mathcal T (k) &= - \frac{\widehat g(k)^2}{(1-\varepsilon_K) \mathcal D_k (1-\alpha_k^2)\vert \Lambda \vert^2}  \sum_{p,s \in \mathcal P_L} \frac{1}{\sqrt{1-\alpha_{s-k}^2} \sqrt{1- \alpha_{p-k}^2}} \nonumber \\ & \qquad \times \big( \bar z a_{p}^\dagger b_{p-k}  + \alpha_k \alpha_{p-k} z b_{p-k} a_{-p} \big) \big(z b_{s-k}^\dagger a_s + \alpha_k \alpha_{s-k} \bar z a_{-s}^\dagger b_{s-k}^\dagger \big) .
\end{align}

The positive $c_k^\dagger c_k$ term in \eqref{eq.Q31.low} can be dropped for a lower bound, and we can focus on the remaining term $\mathcal T (k)$. One can write
\begin{align*}
\bar z a_p^\dagger b_{p-k} + \alpha_k \alpha_{p-k} z b_{p-k} a_{-p} &= \bar z a_p^\dagger b_{p-k} + \alpha_k \alpha_{p-k} z a_{-p} b_{p-k} + \alpha_k \alpha_{p-k} z [ b_{p-k}, a_{-p} ] ,
\end{align*} 
and the last commutator vanishes. Therefore
\begin{align*}
\mathcal T(k) = - & \frac{\widehat g(k)^2 }{(1-\varepsilon_K) \mathcal D_k (1-\alpha_k^2) \vert \Lambda \vert^2} \sum_{p,s \in \mathcal P_L} \frac{1}{\sqrt{1-\alpha_{p-k}^2}\sqrt{1-\alpha_{s-k}^2}} \\ 
& \qquad \times ( \bar z a_p^\dagger + \alpha_k \alpha_{p-k} z a_{-p}) b_{p-k} b_{s-k}^\dagger ( z a_s + \alpha_k \alpha_{s-k} \bar z a _{-s}^\dagger ).
\end{align*}
Now we use a commutator to write $\mathcal T = \mathcal T_{\rm{op}} + \mathcal T_{\rm{com}}$ in normal order for the $b_k$. Since $[ b_{p-k}, b_{s-k}^\dagger ] = \delta_{s,p}$ we get
\begin{align}
\mathcal T_{\rm{op}}(k) = & - \frac{ \widehat g(k)^2}{ (1- \varepsilon_K) \mathcal D_k (1-\alpha_k^2) \vert \Lambda \vert^2} \sum_{p,s \in \mathcal P_L} \frac{1}{\sqrt{1-\alpha_{p-k}^2}\sqrt{1-\alpha_{s-k}^2}} \nonumber \\
& \qquad \times ( \bar z a_p^\dagger + \alpha_k \alpha_{p-k} z a _{-p}) b_{s-k}^\dagger b_{p-k} ( z a _{s} + \alpha_k \alpha_{s-k} \bar z a_{-s}^\dagger), \\ 
\mathcal T_{\rm{com}}(k) = & - \frac{\widehat g(k)^2}{ (1-\varepsilon_K) \mathcal D_k (1-\alpha_k^2)\vert \Lambda \vert^2} \sum_{p \in \mathcal P_L}  \frac{\vert z \vert^2}{1-\alpha_{p-k}^2}  \nonumber \\
& \qquad \times ( a_p^\dagger + \alpha_k \alpha_{p-k} a _{-p}) ( a _{p} + \alpha_k \alpha_{p-k} a_{-p}^\dagger) . \label{term.T2com}
\end{align}
$\bullet$ In order to estimate the error term $\mathcal T_{\rm{op}}$, we introduce
\begin{equation}
\tau_s := z a _{s} + \alpha_k \alpha_{s-k} \bar z a_{-s}^\dagger.
\end{equation}
In $\mathcal T_{\rm{op}}$ we commute the $b$'s through the $a$'s, $\tau_p^\dagger b_{s-k}^\dagger b_{p-k} \tau_s = b_{s-k}^\dagger \tau_p^\dagger \tau_s b_{p-k}$, since the commutators vanish in our range of indices. We use the Cauchy-Schwarz inequality
\begin{align*}
\tau_p^\dagger b_{s-k}^\dagger b_{p-k} \tau_s \leq \frac 1 2( b_{s-k}^\dagger \tau_p^\dagger \tau_p b_{s-k} +  b_{p-k}^\dagger \tau ^\dagger_s \tau_s b_{p-k} ) .
\end{align*}
Inserting this in $\mathcal T_{\rm{op}}$,  bounding $(1- \varepsilon_K)(1-\alpha_k) \geq 1/2$ for $k \in \mathcal P_H$ (by Lemma~\ref{Lem:alphak}), and noticing that we can exchange $s$ and $p$ in the sum, we find
\begin{align*}
\vert \langle \mathcal T_{\rm{op}}(k) \rangle_\Phi \vert \leq  C \frac{ \widehat g(k)^2}{\mathcal D_k \vert \Lambda \vert^2 } \sum_{p,s \in \mathcal P_L} \frac{1}{\sqrt{1-\alpha_{p-k}^2}\sqrt{1-\alpha_{s-k}^2}} \vert \langle b_{s-k}^\dagger \tau_p^\dagger \tau_p b_{s-k} \rangle_\Phi \vert.
\end{align*}
For states $\Phi$ satisfying $\one_{[0,\mathcal M]}(n_+^L) \Phi = \Phi$ we get, bounding each $\tau_s^{\dagger}\tau_s$ by $C \lvert z \rvert^2 a^{\dagger}_s a_s$ directly or by a means of Cauchy-Schwarz inequality and a change of variables, by
\begin{align*}
\vert \langle \mathcal T_{\rm{op}}(k) \rangle_{\Phi} \vert \leq  C \frac{  \widehat g(k)^2}{\mathcal D_k \vert \Lambda \vert^2} \vert z \vert^2 \mathcal M \sum_{s \in \mathcal P_L}  \langle b_{s-k}^\dagger b_{s-k}  \rangle_\Phi  .
\end{align*}
Finally, using \eqref{eq:alpha_highcontrol},
\begin{align}
\sum_{k \in \mathcal{P}_H} \vert \langle \mathcal T_{\rm{op}}(k) \rangle_{\Phi} \vert &\leq C \rho_z \ell^{4-d} \widehat g(0)^2 K_H^{-2} K_L^d \mathcal M \frac{\langle n_+^H \rangle_{\Phi} }{\ell^2}. \label{eq.T2op}
\end{align}
This term can be absorbed in $K_\ell^2 \ell^{-2} n_+^H$, as long as the relation \eqref{eq:rel_epsK_KM2} holds.

$\bullet$ We now turn to $\mathcal T_{\rm{com}}$ given in \eqref{term.T2com}. This term will absorb $\mathcal Q_2^{\rm{ex}}$. Using the Cauchy-Schwarz inequality we have
\begin{align*}
\vert \langle ( a_p^\dagger + \alpha_k \alpha_{p-k} a _{-p}) &( a _{p} + \alpha_k \alpha_{p-k} a_{-p}^\dagger) \rangle_{\Phi} - \langle a_p^\dagger a_p \rangle_{\Phi}  \vert \\
&\leq C \vert \alpha_k \alpha_{p-k} \vert\langle a_{-p}^\dagger a_{-p} + a_p^\dagger  a_p \rangle_{\Phi}   + \vert \alpha_k \alpha_{p-k}\vert^2.
\end{align*}
We deduce that
\begin{equation}
\sum_{k \in \mathcal{P}_H} \mathcal T_{\rm{com}}(k) = - \frac{1}{\vert \Lambda \vert^2} \sum_{k \in \mathcal{P}_H, p \in \mathcal P_L} \frac{\vert z \vert^2 \widehat g(k)^2 }{ (1-\varepsilon_K) \mathcal D_k} a_p^\dagger a_p + \mathcal E, \label{eq:T2com00}
\end{equation}
where (using in particular Lemma~\ref{Lem:alphak})
\begin{align}
\vert \langle \mathcal E \rangle_\Phi \vert &\leq \frac{C}{\vert \Lambda \vert^2} \sum_{k \in \mathcal{P}_H, p \in \mathcal P_L} \frac{\vert z \vert^2 \widehat g(k)^2 }{ \mathcal D_k} \big(\vert \alpha_k \alpha_{p-k} \vert \langle a_p^\dagger a_p \rangle_\Phi + \vert \alpha_k \alpha_{p-k} \vert^2 \big) \nonumber \\ &\leq C \rho_z^3 \widehat g(0)^4 \ell^{6-d} K_H^{d-6} \langle n_+ \rangle_\Phi + \widehat g(0) \vert \Lambda \vert^{-1} K_L^d K_\ell^{10} K_H^{d-10}. \label{error.E}
\end{align}
The first term in \eqref{error.E} can be absorbed in a fraction of the spectral gap if $\rho_z^3 \widehat g(0)^4 \ell^{8-d} K_H^{d-6} \ll \varepsilon_{\rm{gap}}$ using~\ref{app:parameters}, the second term is smaller than LHY by \eqref{KL_relations}.
For the main term in \eqref{eq:T2com00} we do several approximations. First,
\begin{equation}
\sum_{k \in \mathcal{P}_H} \mathcal T_{\rm{com}}(k) = - \big(1 + \mathcal O(\varepsilon_K + \ell^2 \rho \widehat g(0) K_H^{-2}) \big) \frac{\rho_z}{\vert \Lambda \vert} \sum_{k \in \mathcal{P}_H}  \frac{ \widehat g(k)^2}{ k^2 } \sum_{p \in \mathcal P_L} a_p^\dagger a_p + \mathcal E,
\end{equation}
where we used \eqref{eq:alpha_highcontrol}. 
Second, the $k$-sum is an approximation of $2 \vert \Lambda \vert \widehat{g \omega}(0)$ by Lemma~\ref{lem:gomega.approx}, and thus
\begin{equation} \label{eq:T2com01}
\sum_{k \in \mathcal{P}_H} \mathcal T_{\rm{com}}(k) = - 2 \rho_z \widehat{g \omega}(0) \sum_{p \in \mathcal P_L} a_p^\dagger a_p + \mathcal E' + \mathcal E,
\end{equation}
with $\vert \mathcal E' \vert \leq C ( \varepsilon_K \widehat g(0) + \ell^2 \rho_z \widehat g(0) K_H^{-2} + \widehat g(0)^2 K_H^{-1} + \mathcal E_d) \rho_z  n_+ $. This error is absorbed in the spectral gap $\varepsilon_{\rm{gap}} n_+ \ell^{-2}$ using \eqref{eq:epsgap2}.
Then, for $p \in \mathcal P_L$, we can replace $\widehat{g\omega}(0)$ by $\widehat{g\omega}(p)$,
\begin{equation}
\sum_{k \in \mathcal{P}_H} \mathcal T_{\rm{com}}(k) = - \rho_z  \sum_{p \in \mathcal P_L} \big( \widehat{g \omega}(0) + \widehat{g \omega}(p) \big) a_p^\dagger a_p + \mathcal E''+ \mathcal E' + \mathcal E,
\end{equation}
with error $\vert \mathcal E '' \vert \leq C R^2 \ell^{-2} K_L^2 \rho_z \widehat g(0) n_+$, absorbed in the spectral gap again by \eqref{eq:epsgap3}. Finally, if we add $\mathcal Q_2^{\rm{ex}}$ defined in \eqref{def:Q2ex}, we get a sum on $\mathcal P_L^c$ which can be bounded by $n_+^H$,
\begin{equation}
\Big \vert \sum_{k \in \mathcal P_H } \langle \mathcal T_{\rm{com}}(k) \rangle_\Phi + \langle \mathcal Q_2^{\rm{ex}} \rangle_\Phi \Big \vert \leq C \rho_z \widehat g(0) \langle n_+^H \rangle_\Phi + \vert \langle \mathcal E + \mathcal E' + \mathcal E'' \rangle_\Phi \vert,
\end{equation}
and this concludes the proof of Lemma~\ref{lem.Q31}.

\end{proof}

\subsection{Estimates on $\mathcal Q_3^{(2)}$ and $\mathcal Q_3^{(3)}$}
Here we show the remaining $\varepsilon_{K}K_H^{\rm{diag}}$ can control $\mathcal Q_3^{(2)}$ and $\mathcal Q_3^{(3)}$. 
\begin{lemma} \label{lem.Q32}
	There exists a universal constant $C>0$ such that the following holds. If $\rho a^d \leq C^{-1}$, $\vert \rho_z - \rho \vert \leq \frac 1 2 \rho$, and if the parameters satisfy the relations in~\ref{app:parameters}, then for all normalized states $\Phi \in \mathscr F_s(\rm{Ran} Q)$ 
	satisfying
	\begin{equation}\label{eq:PhiM0}
		 \Phi = \one_{[0,\mathcal M]}(n_+^L) \Phi,
	\end{equation}
	 we have
	\begin{align*}
	\Big\vert	\Big\langle \mathcal Q_3^{(2)} + \mathcal Q_3^{(3)} \Big\rangle_\Phi	\Big\vert \leq   \varepsilon_K \langle \mathcal K^{\rm{diag}}_H \rangle_\Phi
	\end{align*}
\end{lemma}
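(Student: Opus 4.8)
The plan is to control each of $\mathcal Q_3^{(2)}$ and $\mathcal Q_3^{(3)}$ by a weighted Cauchy--Schwarz inequality, exploiting two structural features of \eqref{defQ32}--\eqref{defQ33}. First, since $k \in \mathcal P_H$ and $p \in \mathcal P_L$ with $K_L \ll K_H$, every $b$-operator appearing is a high-momentum mode: in $\mathcal Q_3^{(2)}$ both $b_{-k}^\dagger$ and $b_{p-k}$ are high, and in $\mathcal Q_3^{(3)}$ both $b_{k-p}^\dagger$ and $b_k$ are high. Second, the prefactors carry a Bogoliubov coefficient $\alpha_k$ (resp.\ $\alpha_{p-k}$) which is small on $\mathcal P_H$, of size $\lvert \alpha_k\rvert \leq C\rho_z \widehat g(k)/k^2$ by Lemma~\ref{Lem:alphak} and \eqref{eq:alpha_highcontrol}, while the factors $(1-\alpha_k^2)^{-1/2}$ and $(1-\alpha_{p-k}^2)^{-1/2}$ are harmlessly bounded. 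Consequently, after Cauchy--Schwarz both resulting pieces will be number operators of high-momentum $b$-modes, so that the entire estimate stays inside $\mathcal K_H^{\rm{diag}}$, while the smallness $\varepsilon_K$ is produced by $\alpha$ together with the parameter relations.

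Concretely, for $\mathcal Q_3^{(2)}$ I would estimate, for each fixed $k,p$ and with a weight $\mu_k>0$ to be optimised,
\[
\bigl\lvert \langle \bar z\, a_p^\dagger b_{-k}^\dagger b_{p-k}\rangle_\Phi\bigr\rvert \leq \frac{\lvert z\rvert \mu_k}{2}\,\langle a_p^\dagger a_p\, b_{-k}^\dagger b_{-k}\rangle_\Phi + \frac{\lvert z\rvert}{2\mu_k}\,\langle b_{p-k}^\dagger b_{p-k}\rangle_\Phi,
\]
using that $a_p$ commutes with $b_{\pm k}$ for $p\in\mathcal P_L$, $k\in\mathcal P_H$; the hermitian conjugate in \eqref{defQ32} merely doubles the bound. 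Summing the first term over $p\in\mathcal P_L$ and using $\sum_{p\in\mathcal P_L} a_p^\dagger a_p \leq n_+^L$ together with the localisation $\Phi=\one_{[0,\mathcal M]}(n_+^L)\Phi$ (and the fact that $n_+^L$ commutes with $b_{-k}^\dagger b_{-k}\geq 0$) bounds it by $\mathcal M\,\langle b_{-k}^\dagger b_{-k}\rangle_\Phi$. In the second term I would reindex $q=p-k$; for fixed high $q$ the sum over $p\in\mathcal P_L$ contributes a degeneracy factor $\leq C K_L^d$, leaving $\langle b_q^\dagger b_q\rangle_\Phi$. Both outputs are thus of the form $\langle b_{\rm high}^\dagger b_{\rm high}\rangle_\Phi$, matching $\mathcal K_H^{\rm{diag}}=\sum_{k\in\mathcal P_H}\mathcal D_k b_k^\dagger b_k$ from \eqref{def:KHdiag}, since $\mathcal D_q \approx q^2 \approx k^2$ for $\lvert p\rvert \leq K_L\ell^{-1}\ll\lvert k\rvert$.

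It remains to check that $\mu_k$ can be chosen so that both coefficients are bounded by $\varepsilon_K \mathcal D_k$. By the arithmetic--geometric mean inequality this is possible as soon as the geometric mean of the two coefficients is controlled, i.e.
\[
\frac{\widehat g(k)\,\lvert\alpha_k\rvert\,\lvert z\rvert\,\sqrt{\mathcal M K_L^d}}{\lvert\Lambda\rvert} \leq C\,\varepsilon_K\, \mathcal D_k .
\]
Inserting $\lvert z\rvert^2=\rho_z\lvert\Lambda\rvert$, the bound $\lvert\alpha_k\rvert\leq C\rho_z\widehat g(k)/k^2$, $\widehat g(k)\leq\widehat g(0)$, $k^2\geq K_H^2\ell^{-2}$, and the scaling $\rho_z\widehat g(0)\ell^2\approx K_\ell^2$, $N\approx\rho_z\lvert\Lambda\rvert$, this reduces to a condition of the form $K_\ell^4 K_H^{-4}\sqrt{\mathcal M K_L^d/N}\leq C\varepsilon_K$, which is among the relations collected in~\ref{app:parameters}. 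The term $\mathcal Q_3^{(3)}$ is handled identically, the only change being that the smallness now comes from $\alpha_{p-k}$, which obeys the same bound on $\mathcal P_H$ since $\lvert p-k\rvert\approx\lvert k\rvert$. The main obstacle is precisely this bookkeeping: one must organise the Cauchy--Schwarz so that \emph{both} sides land on high $b$-modes (rather than on $n_+$ or low modes), track the $K_L^d$ degeneracy from the $q=p-k$ reindexing, and confirm that the single resulting inequality on the parameters is compatible with~\ref{app:parameters}; the operator algebra itself is routine once the commutativity of low-$a$ and high-$b$ modes is used.
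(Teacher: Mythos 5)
Your proposal is correct and is essentially the paper's own argument: a weighted Cauchy--Schwarz splitting the cubic term into $(a_p b_{\mathrm{high}})^\dagger(a_p b_{\mathrm{high}})$ and $b_{\mathrm{high}}^\dagger b_{\mathrm{high}}$ pieces, the localisation $n_+^L\leq \mathcal M$ (with commutation of low $a$'s and high $b$'s) for the first piece, the $K_L^d$ volume count for the second, the smallness of $\alpha$ on high momenta from Lemma~\ref{Lem:alphak}, the optimal weight $\sqrt{K_L^d/\mathcal M}$, and absorption via precisely the relation \eqref{eq:rel_epsK_KM3}. The only differences are cosmetic: the paper works on $\mathcal Q_3^{(3)}$ and invokes symmetry for $\mathcal Q_3^{(2)}$ (you do the reverse), and it pairs $a_p$ with the $p$-dependent $b_{k-p}$ in the Cauchy--Schwarz rather than with the $p$-independent mode as you do, which changes nothing in the bookkeeping.
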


\begin{proof}
	Notice that $\mathcal Q_3^{(2)}$ and $\mathcal Q_3^{(3)}$ are identical except for the substitution of $-k$ by $k-p$, so we can focus on $\mathcal Q_3^{(3)}$. We can commute the creation operators to write this term as
	\begin{equation}
		\mathcal Q_3^{(3)} =-\frac{1}{\vert \Lambda \vert}\sum_{k \in \mathcal{P}_H, p \in \mathcal P_L}  \frac{\widehat g(k) \alpha_{p-k}}{\sqrt{1-\alpha_k^2 \vphantom{\alpha_{p-k}^2}} \sqrt{1-\alpha_{p-k}^2}} \big( \bar z  b^\dagger_{k-p} a^\dagger_p b_{k} + z b_{k}^\dagger a_p b_{k-p}  \big),
	\end{equation}
	We use the Cauchy-Schwarz inequality with weight $\varepsilon >0$, and by \eqref{eq:alpha_highcontrol},
	\begin{align*}
		\vert \langle \mathcal Q_3^{(3)} \rangle_\Phi \vert &\leq \frac{\vert z\vert}{\vert \Lambda \vert}\sum_{k \in \mathcal{P}_H, p \in \mathcal P_L}  \frac{\vert \widehat g(k) \alpha_{p-k}\vert}{\sqrt{1-\alpha_k^2 \vphantom{\alpha_{p-k}^2}} \sqrt{1-\alpha_{p-k}^2}}  \langle \varepsilon b_{k-p}^\dagger  a_p^\dagger a_p b_{k-p} + \varepsilon^{-1} b_k^\dagger b_k \rangle_\Phi  \\
		&\leq C \frac{\vert z\vert}{\vert \Lambda \vert} \ell^2 \rho_z \widehat g(0)^2 K_H^{-2} \sum_{k \in \mathcal{P}_H, p \in \mathcal P_L}  \langle \varepsilon b_{k-p}^{\dagger} a_p^\dagger  a_p b_{k-p} + \varepsilon^{-1} b_k^\dagger b_k \rangle_\Phi,
	\end{align*}
	and using \eqref{eq:PhiM0}, 
	\begin{equation}
		 \sum_{p\in \mathcal P_L}\langle b_{k-p}^\dagger  a_p^\dagger  a_p b_{k-p}  \rangle_\Phi \leq C  \mathcal M \langle  b_{k}^\dagger b_{k} \rangle_\Phi.
	\end{equation}
	We choose $\varepsilon = \sqrt{K_{L}^{d}/ \mathcal M} $, and insert $\mathcal{D}_k \geq K_H^2 \ell^{-2}$, obtaining
	\begin{align}
		\vert \langle \mathcal Q_3^{(3)} \rangle_\Phi \vert & \leq C \vert z \vert \ell^{2-d} \rho_z \widehat g(0)^2 K_H^{-2} (\varepsilon \mathcal M + \varepsilon^{-1} K^{d}_{L} ) \sum_{k \in \mathcal P_H} \langle b_k^\dagger b_k \rangle_\Phi \nonumber \\
		&\leq  C \vert z \vert \ell^{4-d} \rho_z \widehat g(0)^2  K_H^{-4}K^{d/2}_{L} \sqrt{\mathcal M }\sum_{k \in \mathcal P_H} \mathcal{D}_k\langle b_k^\dagger b_k \rangle_\Phi.
	\end{align}
	Thanks to condition \eqref{eq:rel_epsK_KM3}, $ \mathcal Q_3^{(3)}$ can be absorbed in the positive $\varepsilon_K\mathcal{K}^{\rm{diag}}_H$ term.
\end{proof}

\section{Conclusion}\label{sec:concl}
In all this section, we assume that all our parameters satisfy the relations in \ref{app:parameters}, and prove Theorem~\ref{thm:main} by combining as follows all the previous estimates.

Let us first fix $C_B \geq 2 \Ibog$, and assume that there exists a normalized $N$-particle state $\Psi \in L^2_{\rm{sym}}(\Lambda^N)$ with energy 
\begin{equation}
	\langle \mathcal H \rangle_\Psi \leq \frac{1}{2} \rho^2 \vert \Lambda \vert\widehat g(0) (1 + C_B \f).
\end{equation}
If $\Psi$ does not exist we are clearly done.

For such a state $\Psi$
we use the localization of large matrices Lemma~\ref{lem:localization_largeMat} to decompose $\Psi$ into $\Psi^{m}$'s satisfying that,
	\begin{equation}
		\Psi^{m} = \one_{\{n_+^L \leq  \frac{\mathcal{M}}{2} +m\}}\Psi^{m},\qquad \text{ and } \qquad \sum_m \Vert \Psi^m \Vert^2 = 1
	\end{equation}
	with 
	\begin{equation}
		\langle \Psi, \mathcal{H} \Psi \rangle  \geq  \sum_{2 |m|\leq \mathcal{M} }\langle \Psi^{m}, \mathcal{H} \Psi^m\rangle +  \frac{\vert \Lambda \vert}{2} \rho^2 \widehat g(0)\Big(1 + 2C_B  \lambda_d^{\rm{LHY}} \Big)\sum_{2|m| >\mathcal{M}} \|\Psi^m\|^2 +  o^{\rm{LHY}}_d.
	\end{equation}
	The next goal is then to prove our lower bound for each term of the first sum of above to reconstruct $\sum_{m} \|\Psi^m\|^2=1$. Hence we only have to prove the desired lower bound for states $\Psi \in L^2_{\rm{sym}}(\Lambda^N)$ satisfying
\begin{equation}
	\Psi = \one_{\{n_+^L \leq  \mathcal{M}\}}\Psi.
\end{equation}

For such a $\Psi$, we use the second quantization from Proposition~\ref{prop:Hsecondquant}, the c-number substitution from Proposition~\ref{prop:projonz} and the localization of the $3Q$ term in Proposition~\ref{prop:Q3loc} to deduce
\begin{equation}\label{eq:intKz}
	\langle \mathcal H \rangle_\Psi
	\geq \frac{1}{\pi} \int_{\mathbb C} \langle \mathcal{K}(z)+\mathcal{Q}_{3}^{\mathrm{soft}}(z) \rangle_{\Phi(z)} \dd z 
	+ o_d^{\rm{LHY}},
\end{equation}
where $\Phi(z) = \langle \Psi \vert z \rangle \in \mathscr F_s(\rm{Ran}Q)$ was introduced in Section~\ref{Sec:cn}. Note that we dropped the remaining part of $\mathcal{Q}_{4}^{\mathrm{ren}}>0$, and that the error terms are estimated using  Theorem~\ref{thm:apriori_n+}. Now we split the integral according to the values of $\rho_z$. We recall that $\varepsilon_+^2 = \max \lbrace K_\ell^4 K_L^{-2} , \lambda_d^{\rm{LHY}} \rbrace$ and consider the two following cases.

\begin{itemize}
	\item If $\vert \rho_z - \rho \vert \geq \rho \varepsilon_+$, we can apply Theorem~\ref{thm:rhofar} to get a lower bound larger than the LHY term, since $E_d^{\text{LHY}} >0$, i.e.
	\begin{align}\label{lowbound_rhofar}
		\langle \mathcal{K}(z)&+\mathcal{Q}_{3}^{\mathrm{soft}}(z) \rangle_{\Phi(z)} \nonumber \\
		&\geq \Big(\frac{1}{2}\rho^2 |\Lambda| \widehat{g}(0) + 2 E_d^{\text{LHY}}+ o^{\text{LHY}}_d\Big)\|\Phi(z)\|^2 -C \rho\widehat{g}(0)\langle n_+^H\rangle_{\Phi(z)}.
	\end{align}
		
	The integral of the last term over $\{z \in \mathbb{C}: \lvert \rho_z - \rho\rvert \geq \varepsilon_+ \rho\}$ can be bounded by the integral over all of ${\mathbb C}$, giving $C \rho \widehat{g}(0)\langle n_+^H\rangle_{\Psi}$ that, thanks to \eqref{bound_n_+H}, is of order $o_d^{\text{LHY}}$.
	\item Now we want to prove the desired lower bound for $\vert \rho_z - \rho \vert \leq \rho \varepsilon_{+}$. Recall that $\mathcal K(z)$ is given by
\[\mathcal K(z) = \mathcal Q(z) + \mathcal Q_2^{\rm{ex}}(z) + (\rho_z -\rho)n_+ \widehat{g}(0) - \rho \rho_z |\Lambda| \widehat{g}(0)+ \rho^2|\Lambda|\widehat{g}(0) + o_d^{\text{LHY}},\]
where we have omitted the error term $\mathcal{R}_1^{(d)}(\rho_z)$, which is lower order when $\rho_z \approx \rho$.
We diagonalize $\mathcal Q(z)$ with Proposition~\ref{prop:BogDiag} to get
	\begin{align}
		 &\mathcal{K}(z)- o_d^{\text{LHY}}\\& \geq \frac{\vert \Lambda \vert}{2} \rho_z^2 \widehat g(0) + E_{d}^{\mathrm{LHY}}(\rho_{z}) +\mathcal{K}^{\rm{diag}} +\mathcal{Q}_{2}^{\mathrm{ex}} 
		+ (\rho_z -\rho)n_+ \widehat{g}(0) - \rho \rho_z |\Lambda| \widehat{g}(0)+ \rho^2|\Lambda|\widehat{g}(0) \nonumber\\
&= \frac{\vert \Lambda \vert}{2} \rho^2 \widehat g(0) + E_{d}^{\mathrm{LHY}}(\rho_{z}) +\mathcal{K}^{\rm{diag}} +\mathcal{Q}_{2}^{\mathrm{ex}} \nonumber
+ \frac 1 2 (\rho - \rho_z)^2 \vert \Lambda \vert \widehat g(0) + (\rho_z -\rho)n_+ \widehat{g}(0) .
	\end{align}
The last term we can bound by integrating and using \eqref{eq:condensationestimate},
\begin{equation}
\int_{\{\lvert \rho_z - \rho\rvert \leq \rho \varepsilon_+\}}  (\rho_z -\rho) \langle n_+ \rangle_{\Phi(z)} \widehat{g}(0) \, \dd z \leq C \rho \varepsilon_+ \widehat{g}(0) \langle n_+ \rangle_{\Psi} = o_{d}^{\text{LHY}},
\end{equation}
thanks to the choice of $\varepsilon_+$. Therefore, we deduce
\begin{align}
		&\int_{\{\lvert \rho_z - \rho\rvert \leq \rho \varepsilon_+\}}  \langle \mathcal{K}(z) \rangle_{\Phi(z)} \dd z  \geq \\
		&\int_{\{\lvert \rho_z - \rho\rvert \leq \rho \varepsilon_+\}} \Big(\frac{\vert \Lambda \vert}{2} \rho^2 \widehat g(0) + E_{d}^{\mathrm{LHY}}(\rho_{z})\Big)\|\Phi(z)\|^2 + \langle \mathcal{K}^{\rm{diag}} + \mathcal{Q}_{2}^{\mathrm{ex}}(z) \rangle_{\Phi(z)}\,\dd z + o^{\rm{LHY}}_d.\nonumber
\end{align}
 The contributions of $\mathcal{Q}_{3}^{\mathrm{soft}}$, $\mathcal{Q}_{2}^{\mathrm{ex}}$, and $\mathcal{K}^{\rm{diag}}$ are combined using Proposition~\ref{prop:Q3z}. Bounding the remaining positive terms by $0$ and estimating the errors with the relations from \ref{app:parameters}, we deduce
	\begin{multline}\label{lowbound_rhoclose}
		\int_{\{\lvert \rho_z - \rho\rvert \leq \rho \varepsilon_+\}} \langle \mathcal{K}(z)\rangle_{\Phi(z)} \,dz\\
		\geq\int_{\{\lvert \rho_z - \rho\rvert \leq \rho \varepsilon_+\}}   \Big(\frac{1}{2}\rho^2 |\Lambda| \widehat{g}(0) + E_{d}^{\mathrm{LHY}}(\rho_{z}) \Big) \|\Phi(z)\|^2 \dd z+ o^{\text{LHY}}_d. 
	\end{multline}
	Finally, in this case we can replace $\rho_z$ by $\rho$ up to errors of order $o^{\text{LHY}}_d$. Hence we have a lower bound for all $z$, and we deduce from \eqref{eq:intKz}, from the contributions of the integrals in \eqref{lowbound_rhofar} and \eqref{lowbound_rhoclose} on the domains $\{z \in \mathbb{C}: \lvert \rho_z - \rho\rvert \geq \varepsilon_+ \rho\}$ and $\{z \in \mathbb{C}: \lvert \rho_z - \rho\rvert < \varepsilon_+ \rho\}$, respectively, that
	\begin{equation}
		\langle \mathcal H \rangle_\Psi \geq \frac{\rho^2}{2} \vert \Lambda \vert \widehat g(0) + E_d^{\text{LHY}}(\rho) 
		+ o_d^{\rm{LHY}},
	\end{equation}
	which concludes the proof of Theorem~\ref{thm:main}.
\end{itemize}

\appendix

\section{Miscellaneous Estimates}

\begin{lemma}\label{lem:gomega.approx}
There exists a constant $C>0$ such that the following estimate holds
\begin{equation*}
\Big\vert \widehat{g\omega}(0) - \frac{1}{\vert \Lambda \vert} \sum_{k \in \mathcal P_H} \frac{\widehat g(k)^2}{2k^2} \Big\vert \leq C \widehat g(0) K_H^{-1} + \mathcal E_d,
\end{equation*}
where 
\begin{equation*}
\mathcal E_d \leq 
\begin{cases}
CR^2 \ell^{-2}_\delta \widehat g(0)^2 + C\widehat g(0)^2 \vert \log K_H \ell_\delta \ell^{-1} \vert, &\quad\text{if } d=2, \\
C \widehat g(0)^2 K_H \ell^{-1}, &\quad\text{if } d=3.
\end{cases}
\end{equation*}
\end{lemma}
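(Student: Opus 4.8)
For the term $A$ I would substitute the integral representation \eqref{defG} of $\widehat{g\omega}(0)$. Throughout I use that, since $\mathrm{supp}\,g_{\R^d}\subseteq B^d(0,R)$ with $R\le\ell/4$, the periodic Fourier coefficients coincide with the continuous Fourier transform on the dual lattice, $\widehat g(k)=\widehat g_{\R^d}(k)$ for all $k\in\Lambda^*$. The plan is to pass from the discrete sum to an integral and compare with \eqref{defG}, splitting the quantity to be estimated as
\begin{align*}
\widehat{g\omega}(0)-\frac{1}{\vert\Lambda\vert}\sum_{k\in\mathcal P_H}\frac{\widehat g(k)^2}{2k^2}
&=\underbrace{\Big(\widehat{g\omega}(0)-\frac{1}{(2\pi)^d}\int_{\vert k\vert\ge K_H\ell^{-1}}\frac{\widehat g_{\R^d}(k)^2}{2k^2}\,\dd k\Big)}_{=:A}\\
&\quad+\underbrace{\Big(\frac{1}{(2\pi)^d}\int_{\vert k\vert\ge K_H\ell^{-1}}\frac{\widehat g_{\R^d}(k)^2}{2k^2}\,\dd k-\frac{1}{\vert\Lambda\vert}\sum_{k\in\mathcal P_H}\frac{\widehat g(k)^2}{2k^2}\Big)}_{=:B},
\end{align*}
where $A$ collects the low-momentum and renormalisation contributions and will yield $\mathcal E_d$, while $B$ is a pure sum-to-integral (Riemann) error yielding the $C\widehat g(0)K_H^{-1}$ term.

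In the region $\{\vert k\vert\ge K_H\ell^{-1}\}$ the only discrepancy between $G_d$ and $\widehat g_{\R^d}(k)^2/(2k^2)$ is the cut-off $\one_d(\ell_\delta k)$, which is present only for $d=2$ and only on the annulus $\{K_H\ell^{-1}\le\vert k\vert\le\ell_\delta^{-1}\}$ (empty unless $K_H\ell_\delta\ell^{-1}<1$); hence $A$ equals $\frac{1}{(2\pi)^d}\int_{\vert k\vert<K_H\ell^{-1}}G_d(k)\,\dd k$ plus this annular term. In $d=3$ the cut-off is absent, $\widehat g_{\R^3}(k)=\widehat g(0)+O(R^2\widehat g(0)\vert k\vert^2)$ for $\vert k\vert\lesssim R^{-1}$, and the radial integral of $\widehat g(0)^2/(2k^2)$ over $\{\vert k\vert<K_H\ell^{-1}\}$ gives exactly $C\widehat g(0)^2K_H\ell^{-1}=\mathcal E_3$. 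In $d=2$ I would split $\{\vert k\vert<K_H\ell^{-1}\}$ at $\ell_\delta^{-1}$: on $\{\vert k\vert\le\ell_\delta^{-1}\}$ the subtraction of $\widehat g(0)^2$ cancels the $k^{-2}$ singularity, and the Taylor expansion $\widehat g_{\R^2}(k)^2-\widehat g_{\R^2}(0)^2=O(R^2\widehat g(0)^2\vert k\vert^2)$ leaves a bounded integrand producing $CR^2\ell_\delta^{-2}\widehat g(0)^2$; on the annulus $\{\ell_\delta^{-1}\le\vert k\vert\le K_H\ell^{-1}\}$ the renormalisation is inactive and $\int\widehat g(0)^2/(2k^2)\,\dd k$ gives the logarithm $C\widehat g(0)^2\vert\log(K_H\ell_\delta\ell^{-1})\vert$. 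The opposite ordering $K_H\ell^{-1}<\ell_\delta^{-1}$ is handled identically by the annular cut-off term noted above, the absolute value in $\mathcal E_2$ absorbing the sign. Together these give $\vert A\vert\le\mathcal E_d$.

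The main obstacle is the term $B$, comparing the lattice sum with the integral of $F(k):=\widehat g_{\R^d}(k)^2/(2k^2)$ over $\{\vert k\vert\ge K_H\ell^{-1}\}$. Here $F$ has no singularity and the lattice spacing is $2\pi\ell^{-1}$, so the boundary sphere is resolved by $\sim K_H$ points across its radius. A standard Riemann-sum estimate bounds $\vert B\vert$ by $C\ell^{-1}\Vert\nabla F\Vert_{L^1(\vert k\vert\ge K_H\ell^{-1})}$ together with a boundary-mismatch term, and controlling $\Vert\nabla F\Vert_{L^1}$ requires decay of $\widehat g_{\R^d}$; this is exactly where the $L^p$-integrability of $v$ (here $v\in L^2$, and in fact $p>6/5$ suffices) enters. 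The delicate point is to organise this so that the near-boundary region, where $F\approx\widehat g(0)^2/(2k^2)$ is largest, is controlled by the $\sim K_H$-point resolution rather than by a crude supremum; comparing the result with $0\le\widehat{g\omega}(0)\le\widehat g(0)$ then yields $\vert B\vert\le C\widehat g(0)K_H^{-1}$. Combining $\vert A\vert\le\mathcal E_d$ with this bound gives the claim.
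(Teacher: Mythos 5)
Your proposal is correct and follows essentially the same route as the paper's proof: the same splitting into a sum-to-integral (Riemann) error, controlled via derivative bounds and $L^p$-decay of $\widehat g$ to give $C\widehat g(0)K_H^{-1}$, plus a low-momentum integral comparison with $\widehat{g\omega}(0)$ giving $\mathcal E_d$, using $\vert \widehat g_{\R^d}(k)-\widehat g(0)\vert \leq CR^2\widehat g(0)k^2$ to cancel the $k^{-2}$ singularity in $d=2$. In fact you spell out the two-dimensional case (the splitting at $\ell_\delta^{-1}$ and both orderings of $K_H\ell^{-1}$ versus $\ell_\delta^{-1}$) in more detail than the paper does.
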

The constant $C$ in the error bounds depends on $L^p$-properties of the potential, $p>1$.
\begin{proof}
First of all, one can replace the sum by an integral,
\begin{equation}\label{eq:BadEstimate}
\Big\vert \frac{1}{\vert \Lambda \vert} \sum_{k \in \mathcal P_H} \frac{\widehat g(k)^2}{2k^2} - \int_{\vert k \vert \geq K_H \ell^{-1}} \frac{\widehat g(k)^2}{2k^2} \frac{\dd k}{(2\pi)^d} \Big\vert \leq C \widehat g(0) K_H^{-1}.
\end{equation}
This can be proven by bounding the derivatives of the integrand on small boxes of size $(2\pi) \ell^{-1}$, but
depends on $L^p$-properties of the potential, since we need some decay of $\widehat g(k)$ to control the decay of the summand. The estimate is obtained through a H\"{o}lder inequality on the sum.

Now we can compare the integral with $\widehat{g\omega}(0)$ (in $d=3$ for instance),
\begin{align}
\Big\vert \widehat{g\omega}(0) - \int_{\vert k \vert \geq K_H \ell^{-1}} \frac{\widehat g(k)^2}{2k^2} \frac{\dd k}{(2\pi)^d} \Big\vert &\leq \Big\vert \int_{\vert k \vert \leq K_H \ell^{-1}} \frac{\widehat g(k)^2}{2k^2} \frac{\dd k}{(2\pi)^d} \Big\vert \nonumber \\
&\leq C \widehat g(0)^2 K_H \ell^{-1}.
\end{align}
The estimate is similar in $d=2$, except we must bound $\vert \widehat g_k - \widehat g(0) \vert \leq R^2 \widehat g(0) k^2$ for small $k$'s, to have integrability.
\end{proof}
We end this section by stating, without proof, the following simple bounds, which will be useful for further estimates.
\begin{lemma}\label{Lem:alphak}
	If $\vert \rho_z - \rho \vert \leq \frac{1}{2}\rho$ and $\vert k \vert \geq K_H \ell^{-1}$. Then
	\begin{equation} \label{eq:alpha_highcontrol}
		\vert \alpha_k \vert \leq C \frac{\vert \rho_z \widehat g(k) \vert}{k^2},\quad\text{and}\quad
		\vert \mathcal D_k - k^2\vert \leq C \ell^2 \rho \widehat g _0 K_H^{-2} k^2.
	\end{equation}
\end{lemma}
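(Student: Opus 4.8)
The plan is to derive both bounds from a single Taylor expansion in the small dimensionless parameter
\[
t := \frac{\rho_z \widehat g(k)}{k^2}.
\]
First I would show that $t$ is uniformly small on the stated range. Since $g = v\varphi \geq 0$, its zeroth Fourier coefficient dominates, $|\widehat g(k)| \leq \widehat g(0)$, while the hypothesis $|k| \geq K_H \ell^{-1}$ gives $k^2 \geq K_H^2 \ell^{-2}$. Combining these with $|\rho_z - \rho| \leq \tfrac12 \rho$ and the scaling $\ell = K_\ell (\rho\widehat g(0))^{-1/2}$, which yields $\rho \widehat g(0)\ell^2 = K_\ell^2$, one gets
\[
|t| \leq \frac{\rho_z \widehat g(0)}{k^2} \leq \rho_z\widehat g(0)\,\ell^2 K_H^{-2} \leq C\, K_\ell^2 K_H^{-2}.
\]
The parameter relations (in particular $K_\ell \ll K_H$, part of~\ref{app:parameters}) then make $|t|$ as small as needed, so that $1+2t>0$ and both $\mathcal D_k$ and $\alpha_k$ are well defined and real (recall $\widehat g(k)$ is real, as $g$ is even).

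Next I would rewrite the two quantities purely as functions of $t$. From \eqref{eq:Dk} and \eqref{def:bk},
\[
\mathcal D_k = k^2 \sqrt{1+2t}, \qquad \alpha_k = \frac{1 + t - \sqrt{1+2t}}{t}.
\]
The key observation is that $f(t) := (1 + t - \sqrt{1+2t})/t$ extends smoothly across $t = 0$: expanding $\sqrt{1+2t} = 1 + t - \tfrac12 t^2 + O(t^3)$ gives $1 + t - \sqrt{1+2t} = \tfrac12 t^2 + O(t^3)$, hence $f(t) = \tfrac12 t + O(t^2)$ with $f(0)=0$. Consequently $|f(t)| \leq C|t|$ for $|t| \leq \tfrac12$, which removes any apparent singularity when $\widehat g(k)$ vanishes or is negative. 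The first bound then follows at once:
\[
|\alpha_k| = |f(t)| \leq C|t| = C\,\frac{\rho_z|\widehat g(k)|}{k^2}.
\]

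For the second bound I would use $\sqrt{1+2t} - 1 = t + O(t^2)$, so that $|\mathcal D_k - k^2| = k^2\,|\sqrt{1+2t}-1| \leq C k^2 |t|$ for $|t| \leq \tfrac12$. Substituting the upper estimate $|t| \leq \rho_z \widehat g(0)\,\ell^2 K_H^{-2}$ from the first step and replacing $\rho_z$ by $\rho$ at the cost of a constant (using $|\rho_z - \rho| \leq \tfrac12\rho$) yields
\[
|\mathcal D_k - k^2| \leq C\, \ell^2 \rho\, \widehat g(0)\, K_H^{-2}\, k^2,
\]
as claimed. The only genuinely delicate point is the smoothness of $f$ at $t=0$, i.e. the cancellation of the leading term in the numerator of $\alpha_k$; once that is recorded, the rest is routine, which is presumably why the statement is given without proof.
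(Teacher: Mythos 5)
Your proof is correct. There is, however, nothing in the paper to compare it against: the paper explicitly states Lemma~\ref{Lem:alphak} \emph{without proof} (``We end this section by stating, without proof, the following simple bounds''), so your write-up supplies the missing argument, and it is the natural one. Your reduction to the single parameter $t=\rho_z\widehat g(k)/k^2$, together with the verification $|t|\le C K_\ell^2 K_H^{-2}\ll 1$ (using $g=v\varphi\ge 0$, hence $|\widehat g(k)|\le\widehat g(0)$, the scaling $\rho\,\widehat g(0)\,\ell^2=K_\ell^2$, the hypothesis $\rho_z\le\tfrac32\rho$, and \eqref{KL_relations}), is exactly what is needed both for the square roots to be well defined and for the final estimates. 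One small improvement: the step you flag as delicate, namely $|f(t)|\le C|t|$ on a whole interval rather than just asymptotically as $t\to 0$, follows from a one-line identity instead of a Taylor expansion. Since $(1+t)^2-(1+2t)=t^2$, one has
\[
f(t)=\frac{1+t-\sqrt{1+2t}}{t}=\frac{t}{1+t+\sqrt{1+2t}},
\qquad
\sqrt{1+2t}-1=\frac{2t}{1+\sqrt{1+2t}},
\]
so $|f(t)|\le 2|t|$ and $|\sqrt{1+2t}-1|\le 2|t|$ hold for every $t\ge-\tfrac12$, with no expansion and no implicit compactness argument; this also handles cleanly the degenerate case $\widehat g(k)=0$, where $\alpha_k$ is interpreted as $0$.
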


\section{Localization of Large Matrices: restrictions of $n_+^L$}\label{sec:largeM}

Some of our errors depend on $n_+^L$. Thus, we need a priori bounds on this excitation number, for low energy states. We explain how we can reduce the analysis to states with bounded number of low excitations, $n_+^L \leq \mathcal M$, in Proposition~\ref{thm:excitationrestriction}.

\begin{proposition}\label{thm:excitationrestriction}
	There exist $C$, $\eta > 0$ such that the following holds. Let $\Psi \in L^2_{\rm{sym}}(\Lambda^N)$ be a normalized $N$-particle state which satisfies
	\begin{equation}\label{eq:condensationaprioricondition}
		\langle \mathcal H \rangle_\Psi \leq \frac{1}{2} \rho^2\vert \Lambda \vert \widehat g(0) + C_B \rho^2\vert \Lambda \vert \widehat{g}(0) \f
	\end{equation}
	for some $C_{B}>0$.
	Assume that $\mathcal M$ and $\Vert v \Vert_1$ satisfy \eqref{eq:boundonM}. Then, there exists a sequence $\{\Psi^{m}\}_{m \in \mathbb{Z}} \subseteq L^2_{\rm{sym}}(\Lambda^N)$ such that $\sum_m \Vert \Psi^m \Vert^2 = 1$ and
	\begin{equation}\label{eq:large_matrices_restrictioncondition}
		\Psi^{m} = \one_{[0,\frac{\mathcal{M}}{2} +m]}(n_+^L) \Psi^{m},
	\end{equation}
	and such that the following lower bound holds true
	\begin{align*}
		\langle \Psi, \mathcal{H} \Psi \rangle  \geq&  \sum_{2 |m|\leq \mathcal{M} }\langle \Psi^{m}, \mathcal{H} \Psi^m\rangle +   \frac{\vert \Lambda \vert}{2} \rho^2 \widehat g(0)\Big(1 + 2C_B  \lambda_d^{\rm{LHY}} \Big)\sum_{2|m| >\mathcal{M}} \|\Psi^m\|^2 +  o^{\rm{LHY}}_d.
	\end{align*}
\end{proposition}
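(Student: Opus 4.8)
The plan is to apply the abstract localization of large matrices technique (the Lemma~\ref{lem:localization_largeMat} invoked in Section~\ref{sec:concl}) to the self-adjoint operator $n_+^L$, whose spectrum is the non-negative integers. The heuristic is that a low-energy state must be concentrated on the sector $\{n_+^L \lesssim \mathcal{M}\}$, so that restricting to such states costs only an $o_d^{\mathrm{LHY}}$ error. The abstract lemma produces, from $\Psi$, a family $\{\Psi^m\}_{m\in\mathbb{Z}}$ that are spectral truncations $\Psi^m = \one_{[0,\mathcal{M}/2+m]}(n_+^L)\Psi^m$ satisfying $\sum_m \|\Psi^m\|^2=1$, together with a lower bound of the form
\begin{equation*}
\langle \mathcal H\rangle_\Psi \geq \sum_m \langle \Psi^m, \mathcal H \Psi^m\rangle - (\text{off-diagonal error}),
\end{equation*}
where the error is controlled by the norm of the off-diagonal ``band'' of $\mathcal H$ with respect to the spectral decomposition of $n_+^L$. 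The key input the abstract lemma requires is that $\mathcal{H}$ has \emph{finite band width} relative to $n_+^L$: the operator $\mathcal H$ can only change the value of $n_+^L$ by a bounded amount $M_0$ in each application. First I would verify this band structure by inspecting the second-quantized form \eqref{eq:second_quant_Ham}: the kinetic term commutes with $n_+^L$, and each interaction term (the various $\mathcal Q_j^{\mathrm{ren}}$) involves at most four creation/annihilation operators, hence shifts $n_+^L$ by at most a fixed integer.

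The second ingredient is the \emph{off-diagonal decay bound}: I would show that the matrix elements $\langle \Psi^m, \mathcal H \Psi^{m'}\rangle$ for $|m-m'|$ in the band are quantitatively small, which is where the energy assumption \eqref{eq:condensationaprioricondition} enters. The strategy is to control the off-diagonal norm by $\|[\mathcal H, \one_{[0,M]}(n_+^L)]\|$-type quantities, which in turn reduce to bounds on the interaction terms that change $n_+^L$, tested against the low-energy constraint. Concretely, the only terms that move excitations between the low and high sectors are $\mathcal Q_3^{\mathrm{ren}}$ and $\mathcal Q_4^{\mathrm{ren}}$; using $g\le v$ on the support of $v$ together with Cauchy--Schwarz, these are bounded in terms of $\rho\widehat g(0)$, the interaction norm $\|v\|_1$, and powers of $n_+^L$. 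This is precisely why the hypothesis couples $\mathcal M$ and $\|v\|_1$ through the relation \eqref{eq:boundonM}: that relation is calibrated so that the total off-diagonal error, summed over the band and over the truncation levels $m$, collapses to $o_d^{\mathrm{LHY}}$.

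Finally, I would handle the high levels $2|m|>\mathcal M$ separately. For these truncations the state is forced to have a large value of $n_+^L$, and one uses the a~priori condensation estimate (Theorem~\ref{thm:apriori_n+}, specifically the control on $n_+$ and $n_+^H$ together with $n_+^L\le n_+^L + n_+^H$) to argue that such sectors carry energy at least $\tfrac{1}{2}\rho^2|\Lambda|\widehat g(0)(1+2C_B\f)$; this is the content of the explicit second sum in the claimed lower bound, and it is what lets one later reconstruct the full mass $\sum_m\|\Psi^m\|^2=1$ when each diagonal piece is bounded below by the target energy. The main obstacle I anticipate is the quantitative off-diagonal estimate: one must extract enough smallness from \eqref{eq:condensationaprioricondition} and the relation \eqref{eq:boundonM} to beat the number of band terms and the number of truncation levels simultaneously, and getting this balance to land inside $o_d^{\mathrm{LHY}}$ rather than merely $o(\rho^2|\Lambda|\widehat g(0))$ is the delicate point, since it is precisely the LHY-order precision that the whole paper is built to capture.
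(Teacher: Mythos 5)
Your proposal is correct and follows essentially the same route as the paper's own proof: the IMS-type localization of large matrices applied to $n_+^L$ (Lemma~\ref{lem:localization_largeMat}, exploiting that $\mathcal H$ shifts $n_+^L$ by at most $2$), state-dependent bounds on the resulting band terms via Cauchy--Schwarz, $g\leq v$, the condensation estimates and the a priori bound on $\mathcal Q_4^{\mathrm{ren}}$, calibrated by \eqref{eq:boundonM} to give $o_d^{\mathrm{LHY}}$, and the contrapositive of Theorem~\ref{thm:apriori_n+} to lower-bound the energy of the sectors with $2|m|>\mathcal M$. One detail to fix in the execution: the operators changing $n_+^L$ are not only $\mathcal Q_3^{\mathrm{ren}}$ and $\mathcal Q_4^{\mathrm{ren}}$ but also the $2Q$-type terms (e.g. $P_iP_j\, v\, \overline{Q}_{H,i}\overline{Q}_{H,j}$), which is why the paper's band operators $d_1^L$, $d_2^L$ in \eqref{def.d1L}--\eqref{def.d2L} collect all combinations of $P+Q_H$ versus $\overline{Q}_H$.
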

The proof of Proposition~\ref{thm:excitationrestriction} will follow from the Lemmas~\ref{lem:localization_largeMat} and~\ref{lem:d1d2estimate} below. The proof of Lemma~\ref{lem:localization_largeMat} is inspired by the localization of large matrices result in \cite{LS_2comp}. It is also similar to the bounds in \cite[Proposition 21]{HST}. It can be interpreted as an analogue of the standard IMS localization formula. The error produced is written in terms of the following quantities $d_1^L$ and $d_2^L$ ($\mathrm{i.e}$ the terms in the Hamiltonian that change $n_{+}^{L}$ by $1$ or $2$).
\begin{align}
d_{1}^L &:= \sum_{i \neq j} (P_i + Q_{H,i})\overline{Q}_{H,j} v(x_i-x_j) \overline{Q}_{H,i} \overline{Q}_{H,j} + h.c. \nonumber \\
&\quad + \sum_{i \neq j} \overline{Q}_{H,i} (P_j + Q_{H,j}) v(x_i-x_j) (P_i + Q_{H,i})(P_j + Q_{H,j}) + h.c. \label{def.d1L} 
\intertext{and}
d_2^L &:= \sum_{i \neq j} (P_i + Q_{H,i})(P_j + Q_{H,j}) v(x_i-x_j) \overline{Q}_{H,j} \overline{Q}_{H,i} +h.c. \label{def.d2L}
\end{align}
where $Q_{H,j}$ is defined in \eqref{def:QH}.
These error terms are estimated in Lemma~\ref{lem:d1d2estimate}.

\begin{lemma}\label{lem:localization_largeMat}
Let $\theta : \R \rightarrow [0,1]$ be any compactly supported Lipschitz function such that $\theta(s) = 1$ for $\vert s \vert < \frac 1 8$ and $\theta(s) = 0$ for $\vert s \vert > \frac 1 4$. For any $\mathcal M >0$, define $c_{\mathcal M} >0$  and $\theta_{\mathcal M}$ such that
\[ \theta_{\mathcal M}(s) = c_{\mathcal M} \theta \Big( \frac{s}{\mathcal M} \Big) , \qquad \sum_{s \in \Z} \theta_{\mathcal M}(s)^2 = 1 .\]
Then there exists a $C>0$ depending only on $\theta$ such that, for any normalized state $\Psi \in L^2_{\rm{sym}}(\Lambda^N)$,
\begin{equation}\label{eq:large_matrices_decomposn+}
 \langle \Psi , \mathcal H \Psi \rangle \geq \sum_{m \in \Z} \langle \Psi^m , \mathcal H \Psi^m \rangle - \frac{C}{\mathcal{M}^2} \left( \vert \langle d_1^L \rangle_\Psi \vert + \vert \langle d_2^L \rangle_\Psi \vert \right),
 \end{equation}
where $\Psi^m = \theta_{\mathcal M}(n_+^L - m) \Psi$.
\end{lemma}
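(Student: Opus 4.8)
The plan is to establish the claimed lower bound \eqref{eq:large_matrices_decomposn+} by treating the family $\theta_{\mathcal M}(n_+^L - m)$ as a partition of unity in the spectral variable $n_+^L$ and running an IMS-type localization argument. The key point is that $n_+^L$ has integer spectrum, so the functions $\theta_{\mathcal M}(\,\cdot\, - m)$, indexed by $m \in \Z$, form a genuine partition: $\sum_{m} \theta_{\mathcal M}(n_+^L - m)^2 = 1$ as an operator identity (this is exactly the normalization $\sum_{s \in \Z}\theta_{\mathcal M}(s)^2 = 1$ imposed in the statement). First I would write $\langle \Psi, \mathcal H \Psi\rangle = \sum_m \langle \Psi^m, \mathcal H \Psi^m\rangle$ \emph{plus} a commutator/localization error, and then show that the error is controlled by $\mathcal M^{-2}(\lvert \langle d_1^L\rangle_\Psi\rvert + \lvert\langle d_2^L\rangle_\Psi\rvert)$.

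\textbf{Decomposing the Hamiltonian by its action on $n_+^L$.} The first real step is to split $\mathcal H$ according to how much it changes the value of $n_+^L$. Since $n_+^L = \sum_j \overline Q_{H,j}$ counts low excitations, the kinetic term and the diagonal pieces of the potential commute with $n_+^L$, while the potential also contains off-diagonal pieces that shift $n_+^L$ by $\pm 1$ or $\pm 2$; these are precisely the operators $d_1^L$ and $d_2^L$ defined in \eqref{def.d1L}--\eqref{def.d2L} (the potential is quartic, so it cannot change $n_+^L$ by more than $2$). Writing $\mathcal H = \mathcal H_0 + d_1^L + d_2^L$ with $[\mathcal H_0, n_+^L] = 0$, the diagonal part $\mathcal H_0$ passes through the localization exactly, contributing $\sum_m \langle \Psi^m, \mathcal H_0 \Psi^m\rangle$ with no error, because $\theta_{\mathcal M}(n_+^L - m)$ commutes with $\mathcal H_0$ and the $\theta$'s square-sum to one.

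\textbf{Estimating the off-diagonal localization error.} The heart of the matter is the term $\sum_m \langle \Psi^m, d_j^L \Psi^m\rangle - \langle \Psi, d_j^L \Psi\rangle$ for $j = 1,2$. For an operator $T$ that shifts $n_+^L$ by exactly $r$ (so $\theta_{\mathcal M}(n_+^L - m)\,T = T\,\theta_{\mathcal M}(n_+^L - m + r)$), one computes
\begin{equation}
\sum_m \theta_{\mathcal M}(n_+^L - m)\,T\,\theta_{\mathcal M}(n_+^L - m) = \Big(\sum_m \theta_{\mathcal M}(n_+^L - m - r)\,\theta_{\mathcal M}(n_+^L - m)\Big) T,
\end{equation}
and the operator in parentheses differs from $1$ by a quantity of size $O(\mathcal M^{-2})$ because $\theta$ is Lipschitz and the sum telescopes: $\sum_m [\theta_{\mathcal M}(s-r) - \theta_{\mathcal M}(s)]\theta_{\mathcal M}(s) $ is a discrete second-difference expression whose magnitude is controlled by $r^2 \|\theta'\|_\infty^2 \mathcal M^{-2}$. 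Thus each of $d_1^L$ (with $r = 1$) and $d_2^L$ (with $r = 2$) produces an error bounded by $C\mathcal M^{-2}\lvert \langle d_j^L\rangle_\Psi\rvert$, which yields exactly \eqref{eq:large_matrices_decomposn+}.

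\textbf{Main obstacle.} The routine parts are the partition-of-unity identity and the diagonal piece; the delicate step is making the telescoping bound on the off-diagonal terms rigorous while keeping the error genuinely \emph{quadratic} in $\mathcal M^{-1}$ rather than merely linear. The naive commutator estimate $\lvert[\theta_{\mathcal M}, T]\rvert$ only gives an $O(\mathcal M^{-1})$ bound; the improvement to $O(\mathcal M^{-2})$ requires symmetrizing over the shift $r \mapsto -r$ (i.e., pairing $T$ with its adjoint $T^\dagger$, which is built into the ``$+\,h.c.$'' structure of $d_1^L$ and $d_2^L$) so that the first-order term cancels and only the second difference of $\theta_{\mathcal M}^2$ survives. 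I would therefore organize the computation around the symmetrized sum $\sum_m \theta_{\mathcal M}(s-m)[T + T^\dagger]\theta_{\mathcal M}(s-m)$ and Taylor-expand $\theta_{\mathcal M}^2$ to second order, using the Lipschitz bound on $\theta'$ to close the estimate. The factor $\mathcal M^{-2}$ is what later makes these errors negligible once $\mathcal M$ is chosen large in \ref{app:parameters}.
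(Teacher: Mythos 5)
Your proposal is correct and follows essentially the same route as the paper's proof: decompose $\mathcal H$ into pieces shifting $n_+^L$ by $0,\pm 1,\pm 2$ (whose off-diagonal parts are exactly $d_1^L$ and $d_2^L$), use that the $\theta_{\mathcal M}(\cdot - m)$ square-sum to one on the integer spectrum of $n_+^L$, and reduce the localization error to shift-coefficients $\delta_r = \sum_m \big[\theta_{\mathcal M}(m-r)\theta_{\mathcal M}(m) - \theta_{\mathcal M}(m)^2\big]$ of size $O(\mathcal M^{-2})$. One minor clarification: the pairing of $T$ with $T^\dagger$ is not what produces the quadratic gain --- each single-shift coefficient already satisfies $\delta_r = -\tfrac12\sum_m\big[\theta_{\mathcal M}(m-r)-\theta_{\mathcal M}(m)\big]^2 = O(r^2\mathcal M^{-2})$ by translation invariance of $\sum_m \theta_{\mathcal M}(m)^2$ (this is exactly how the paper argues, cf.\ \eqref{eq.def.deltak}); the h.c.\ structure is only needed to express the resulting error through the Hermitian operators $d_1^L, d_2^L$ appearing in the statement.
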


\begin{proof}
Notice that $\mathcal{H}$ only contains terms that change $n_+^L$ by $0, \pm 1$ or $\pm 2$. Therefore, we write our operator as $ \mathcal H = \sum_{\vert k \vert \leq 2} \mathcal{H}^{(k)}$,
with $ \mathcal H^{(k)} n_+^L = (n_+^L + k) \mathcal H^{(k)}$. Moreover, $\mathcal H^{(k)} + \mathcal H^{(-k)} = d^L_k$ for $k=1,2$. We use this decomposition to estimate the localized energy,
\begin{align*}
\sum_{m \in \Z} \langle \Psi^m, \mathcal H \Psi^m \rangle &= \sum_{m \in \Z} \sum_{\vert k \vert \leq 2} \langle \theta_{\mathcal M}(n_+^L -m)  \theta_{\mathcal M}(n_+^L-m+k) \Psi, \mathcal H^{(k)} \Psi \rangle\\
&= \sum_{m, s \in \Z} \sum_{\vert k \vert \leq 2} \langle \theta_{\mathcal M}(s-m)  \theta_{\mathcal M}(s-m+k) \one_{\{n_+^L =s\}} \Psi, \mathcal{H}^{(k)} \Psi \rangle\\
&= \sum_{m,s \in \Z} \sum_{\vert k \vert \leq 2} \theta_{\mathcal M}(m) \theta_{\mathcal M}(m+k) \langle \one_{\{n_+^L = s\}} \Psi, \mathcal H^{(k)} \Psi \rangle,
\end{align*}
where in the last line we changed the index $m$ into $s-m$. We can sum on $s$ to recognize
\begin{equation}
\sum_{m \in \Z} \langle \Psi^m, \mathcal H \Psi^m \rangle = \sum_{m \in \Z} \sum_{\vert k \vert \leq 2} \theta_{\mathcal M}(m) \theta_{\mathcal M}(m+k) \langle \Psi, \mathcal H^{(k)}  \Psi \rangle.
\end{equation}
Furthermore the energy of $\Psi$ can be rewritten as
\begin{equation}
\langle \Psi, \mathcal H \Psi \rangle = \sum_{\vert k \vert \leq 2} \langle \Psi, \mathcal H^{(k)}  \Psi \rangle = \sum_{m \in \Z} \sum_{\vert k \vert \leq 2} \theta_{\mathcal M}(m)^2 \langle \Psi, \mathcal H^{(k)} \Psi \rangle,
\end{equation}
by definition of $\theta_{\mathcal M}$. Thus, the localization error is
\begin{equation}
\sum_{m \in \Z} \langle \Psi^m, \mathcal H \Psi^m \rangle - \langle \Psi, \mathcal H \Psi \rangle = \sum_{\vert k \vert \leq 2} \delta_k \langle \Psi, \mathcal H^{(k)} \Psi \rangle,
\end{equation}
with 
\begin{equation}\label{eq.def.deltak}
\delta_k = \sum_{m \in \Z} \big (  \theta_{\mathcal M} (m)  \theta_{\mathcal M}(m+k) -  \theta_{\mathcal M}(m)^2  \big ) = - \frac{1}{2} \sum_m \big ( \theta_{\mathcal M}(m) -  \theta_{\mathcal M}(m+k) \big )^2.
\end{equation}
Since $\delta_0 = 0$, $\delta_k = \delta_{-k}$ and $d_k^L = \mathcal H^{(k)} + \mathcal H^{(-k)}$ we find
\begin{equation}
\sum_{m \in \Z} \langle \Psi^m, \mathcal H \Psi^m \rangle - \langle \Psi, \mathcal H \Psi \rangle = \delta_1 \langle d_1^L \rangle_\Psi + \delta_2 \langle d_2^L \rangle_\Psi,
\end{equation}
and only remains to prove that $\vert \delta_k \vert \leq C \mathcal M^{-2}$. This follows from \eqref{eq.def.deltak} using that $\theta$ is Lipschitz and restricting the sum to $ m  \in \big[- \frac{ \mathcal M}{2} , \frac{ \mathcal M}{2} \big]$.

\end{proof}

To estimate the error in \eqref{eq:large_matrices_decomposn+}, we need the following bounds on $d_1^L$ and $d_2^L$.
\begin{lemma}\label{lem:d1d2estimate}
There exists a universal constant $C > 0$ such that, for any $\Psi \in L_{\rm{sym}}^2(\Lambda^N)$, with our choices of parameters we have
\begin{align}\label{eq:estimateMd1d2}
\vert \langle d_{1}^L \rangle_\Psi \vert + \vert \langle d_2^L\rangle_{\Psi} \vert  \leq C \|v\|_1 \rho K_H \langle n_+\rangle_{\Psi} +C \langle \mathcal Q_4^{\rm{ren}}\rangle_{\Psi}.
\end{align}
\end{lemma}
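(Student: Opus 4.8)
The plan is to treat $d_1^L$ and $d_2^L$ as the off-diagonal (in $n_+^L$) parts of the potential energy $\tfrac12\sum_{i\neq j}v(x_i-x_j)$ and to bound each of the finitely many monomials appearing in the definitions \eqref{def.d1L} and \eqref{def.d2L} by a Cauchy--Schwarz inequality exploiting $v\geq 0$. Every such monomial has the form $\sum_{i\neq j}A_{ij}\,v(x_i-x_j)\,B_{ij}+h.c.$, where $A_{ij},B_{ij}$ are products of the mutually orthogonal projectors $P,Q_H,\overline Q_H$ acting on particles $i$ and $j$, and where exactly one of $A_{ij},B_{ij}$ carries two factors of the low-excitation projector $\overline Q_H$. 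Using $v\ge 0$ I would write, for a weight $\lambda>0$,
\[
A\,v\,B+B\,v\,A\le \lambda\,A\,v\,A+\lambda^{-1}B\,v\,B,
\]
choosing the splitting so that the factor containing $\overline Q_{H,i}\overline Q_{H,j}$ is sent to the side I intend to compare with $\mathcal Q_4^{\rm ren}$, and the complementary factor (built only from $P$ and $Q_H$) to the side I reduce to number operators.

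For the first side the key point is that $\mathcal Q_4^{\rm ren}=\tfrac12\sum_{i\neq j}R_{ij}^{\dagger}\,v\,R_{ij}$ with $R_{ij}=Q_jQ_i+\omega(x_i-x_j)(P_jP_i+P_jQ_i+Q_jP_i)$, so its leading part is $\tfrac12\sum Q_iQ_j\,v\,Q_iQ_j$. Since $\overline Q_H\le Q$, I would recover $\sum_{i\neq j}\overline Q_{H,i}\overline Q_{H,j}\,v\,\overline Q_{H,i}\overline Q_{H,j}$ from $\mathcal Q_4^{\rm ren}$ by completing the square: writing $Q_iQ_j\,v\,Q_iQ_j=(R_{ij}-\omega(P_jP_i+P_jQ_i+Q_jP_i))^{\dagger}v(R_{ij}-\omega(P_jP_i+P_jQ_i+Q_jP_i))$ and using $g=v(1-\omega)$, hence $v\omega=v-g$, the $\omega$-dressed cross terms are absorbed by a further Cauchy--Schwarz and are lower order because $\omega$ is small. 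This is what lets me bound the $\overline Q_H\overline Q_H$ side by $C\langle\mathcal Q_4^{\rm ren}\rangle$ plus negligible errors, even though the naive operator monotonicity $\overline Q\overline Q\,v\,\overline Q\overline Q\le QQ\,v\,QQ$ fails.

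For the complementary side I would reduce to number operators. Whenever a condensate projector is present I use the identity $P_j\,v(x_i-x_j)\,P_j=|\Lambda|^{-1}\|v\|_1\,P_j$ (the analogue of \eqref{eq:PgPformula}, with $\widehat v(0)=\|v\|_1$) to extract a factor $\|v\|_1$ together with $n_0/|\Lambda|\le\rho$, leaving a bare number operator; the remaining $Q_H$ legs are estimated in momentum space using $|\widehat v(k)|\le\widehat v(0)=\|v\|_1$ and conservation of momentum. Summing the constrained momenta over the high-momentum band produces the factor $K_H$, and using $\overline Q_H\le Q$, i.e. $n_+^L\le n_+$, the expression collapses to $\langle n_+\rangle_\Psi$. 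Optimising $\lambda$ and assembling the two sides then yields the stated bound $C\|v\|_1\rho K_H\langle n_+\rangle_\Psi+C\langle\mathcal Q_4^{\rm ren}\rangle_\Psi$.

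I expect the genuine difficulties to be twofold. First, the condensate-pair contributions $\sum_{i\neq j}P_iP_j\,v\,\overline Q_{H,i}\overline Q_{H,j}+h.c.$ inside $d_2^L$ are Bogoliubov-type pair terms; bounding them through the condensate occupation would produce an uncontrolled $n_0^2\sim N^2$ contribution, so one must keep the $a_0$ operators explicit and apply Cauchy--Schwarz against $n_+$ rather than against $n_0$, tuning $\lambda$ so that the $\mathcal Q_4^{\rm ren}$-matched side retains an $O(1)$ coefficient. Since the lemma is claimed for arbitrary $\Psi$ (with no a priori restriction on $n_+^L$), the delicate part is precisely to arrange that no constant or $n_0^2$ term survives, so that every piece is genuinely proportional to $\langle n_+\rangle_\Psi$ or absorbed in $\langle\mathcal Q_4^{\rm ren}\rangle_\Psi$. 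Second, the purely high-momentum intermediate terms (two $Q_H$ legs and no condensate) lead to momentum sums over the high band that converge only under decay of $\widehat v$; this is exactly where the $L^p$-integrability of $v$ enters, as already flagged for \eqref{eq:BadEstimate} and in the proof of Proposition~\ref{prop:BogDiag}. The bulk of the remaining work is the bookkeeping needed to verify that each of the finitely many monomials lands either in $C\langle\mathcal Q_4^{\rm ren}\rangle_\Psi$ or in $C\|v\|_1\rho K_H\langle n_+\rangle_\Psi$.
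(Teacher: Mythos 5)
Your proposal runs in the right general direction (monomial-by-monomial Cauchy--Schwarz, the identity $P_j v P_j = \vert\Lambda\vert^{-1}\Vert v\Vert_1 P_j$, and reconstruction of $\mathcal{Q}_4^{\rm ren}$ through the $\omega$-dressing), but your routing of the two sides of each Cauchy--Schwarz is inverted relative to what can actually be made to work, and this leaves a genuine gap at the core of the argument. You propose to send the factor $\overline{Q}_{H,i}\overline{Q}_{H,j}$ to the side compared with $\mathcal{Q}_4^{\rm ren}$. As you yourself note, $\overline{Q}_H \leq Q$ does not give the sandwich inequality $\overline{Q}_{H,i}\overline{Q}_{H,j}\,v\,\overline{Q}_{H,j}\overline{Q}_{H,i} \leq C\, Q_iQ_j\,v\,Q_jQ_i$; but the fix you offer --- completing the square with $R_{ij}=Q_jQ_i+\omega(\cdots)$ and using $v\omega = v-g$ --- only repairs the discrepancy between $v$ and $g$, not the discrepancy between the projectors: nothing in that manipulation turns a $\overline{Q}_H$-sandwich into a $Q$-sandwich, and no such domination holds in general, since projecting onto a smaller subspace can drastically increase a potential expectation. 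The paper's mechanism for exactly these terms is the one idea absent from your proposal: the operator-norm bound \eqref{eq:boundQH}, $\Vert \overline{Q}_{H,x}\, v(x-y)\, \overline{Q}_{H,x}\Vert \leq C K_H^2 \ell^{-d}\Vert v \Vert_1$, proved by a Sobolev inequality using that functions in $\mathrm{Ran}\,\overline{Q}_H$ have momenta at most $K_H\ell^{-1}$. With it, $\sum_{i\neq j}\overline{Q}_{H,i}\overline{Q}_{H,j}\,v\,\overline{Q}_{H,j}\overline{Q}_{H,i} \leq C K_H^2\rho\Vert v\Vert_1\, n_+^L$, and a Cauchy--Schwarz with weight $K_H$ against the $P$-legs is precisely what produces the factor $K_H$ in \eqref{eq:estimateMd1d2}. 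Your alternative explanation of that factor (``summing the constrained momenta over the high-momentum band'') cannot work: the high band $\vert k\vert \geq K_H\ell^{-1}$ is infinite and such a sum diverges.

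Symmetrically, the other half of your routing also fails: factors ``built only from $P$ and $Q_H$'' are not all reducible to number operators. The monomials $Q_{H,i}Q_{H,j}\,v\,\overline{Q}_{H,j}\overline{Q}_{H,i}$ from \eqref{def.d2L} and $\overline{Q}_{H,i}P_j\,v\,Q_{H,i}Q_{H,j}$ from \eqref{def.d1L} would place $Q_{H,i}Q_{H,j}\,v\,Q_{H,j}Q_{H,i}$ on your ``number operator'' side, for which there is neither a $PvP$-type identity nor a norm bound ($\Vert Q_H v Q_H\Vert$ is not controlled by $\Vert v\Vert_1$, since $Q_H$ carries arbitrarily high momenta); this is also not where $L^p$-integrability enters --- the paper's proof of this lemma uses only $\Vert v\Vert_1$. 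These high-high legs are exactly the ones that must be compared with $\mathcal{Q}_4^{\rm ren}$: one writes $Q_{H,i}Q_{H,j} = Q_iQ_j - Q_{H,i}\overline{Q}_{H,j} - \overline{Q}_{H,i}Q_{H,j} - \overline{Q}_{H,i}\overline{Q}_{H,j}$, handles the mixed and low-low pieces by \eqref{eq:boundQH} as above, and for the $Q_iQ_j$ piece adds and subtracts $\omega(P_iP_j+P_iQ_j+Q_iP_j)$ so that a Cauchy--Schwarz against the full square root of $\mathcal{Q}_4^{\rm ren}$ applies, the leftover $v\omega$-terms being controlled via $0\leq\omega\leq 1$ on $\mathrm{supp}\, v$ and the $PvP$ identity. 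In short: keep your general scheme, but swap the destinations of the two kinds of legs and supply the Sobolev bound \eqref{eq:boundQH}; without that bound the lemma cannot be closed.
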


\begin{proof}
First note that we have the following bound on the operator norm
\begin{equation}\label{eq:boundQH}
\Vert \overline{Q}_{H,x} v(x - y) \overline Q_{H,x} \Vert \leq C K_H^2 \ell^{-d} \Vert v \Vert_1.
\end{equation}
Indeed, for all $\varphi \in \mathrm{Ran} \overline {Q}_{H,x}$,
\begin{equation}
\langle \overline{Q}_{H,x} v(x - y) \overline Q_{H,x} \varphi, \varphi \rangle \leq \int_\Lambda \vert \varphi(x) \vert^2 v(x - y) \dd x \leq \Vert \varphi \Vert_{\infty}^2 \Vert v \Vert_1 \leq C \ell^{2-d} \Vert \Delta \varphi \Vert \Vert \varphi \Vert \Vert v \Vert_1,
\end{equation}
by Sobolev inequality. Moreover such $\varphi$'s satisfy $\Vert \Delta \varphi \Vert \leq K_H^2 \ell^{-2} \Vert \varphi \Vert$ by definition of $\overline Q_H$, and \eqref{eq:boundQH} follows.

We split $d_{1}^L$, $d_2^L$ in several terms multiplying out the parentheses in \eqref{def.d1L} and \eqref{def.d2L}. Here we just bound some representative examples to illustrate the procedure.

For instance, we can use the Cauchy-Schwarz inequality  with weight $K_H$ and equation \eqref{eq:boundQH} to find,
\begin{align*}
\Big|\Big\langle  \sum_{i,j} P_i \overline{Q}_{H,j} v \overline{Q}_{H,i} \overline{Q}_{H,j}\Big\rangle_{\Psi}\Big| &\leq  K_H \frac{N}{\vert \Lambda \vert} \|v\|_1\langle n^L_+ \rangle_{\Psi}  + K_H^{-1} \|\overline{Q}_H v\overline{Q}_H\| N \langle n_+^L \rangle_{\Psi},\\
&\leq C \|v\|_1 K_H \rho \langle n_+ \rangle_\Psi
\end{align*}
where we used $n_+^L \leq n_+$.

We also estimate a term where the need for $\mathcal{Q}_4^{\text{ren}}$ becomes clear. In order to do that we complete the $Q_H$ to a $Q= Q_H + \overline{Q}_H$, 
\begin{align*}
&\Big|\Big\langle  \sum_{i \neq j} \overline{Q}_{H,i} P_j v Q_{H,i} Q_{H,j}  + h.c.\Big\rangle_{\Psi} \Big| \\
&\quad\leq   \Big|\Big\langle \sum_{i \neq j} \overline{Q}_{H,i} P_j v (Q_{H,i} \overline{Q}_{H,j} + \overline{Q}_{H,i} Q_{H,j}) \Big\rangle_{\Psi} + h.c. \Big| \\
&\qquad +\Big|\Big\langle \sum_{i \neq j} \overline{Q}_{H,i} P_j v Q_i Q_j \Big\rangle_{\Psi} + h.c. \Big|  
+ \Big|\Big\langle \sum_{i,j} P_i \overline{Q}_{H,j}v \overline{Q}_{H,i} \overline{Q}_{H,j} \Big\rangle_{\Psi}\Big|.
\end{align*}
The first and the third terms can be estimated in the same manner as above, so let us focus on completing the second term in order to obtain $4Q$ terms.
\begin{align}
 \Big|\Big\langle&\sum_{i \neq j}  \overline{Q}_{H,i} P_j v Q_i Q_j \Big\rangle_{\Psi} + h.c. \Big|  \\
 &\leq \Big|\Big\langle \sum_{i \neq j} \overline{Q}_{H,i} P_j v (Q_i Q_j + \omega (P_i P_j + P_i Q_j + Q_i P_j)) \Big\rangle_{\Psi} + h.c. \Big|  \label{eq:Q4recostructionQhigh}\\
 &\quad + \Big|\Big\langle \sum_{i \neq j} \overline{Q}_{H,i} P_j v  \omega ( P_i Q_j + Q_i P_j)) \Big\rangle_{\Psi} + h.c. \Big|  \\
 &\quad + \Big|\Big\langle \sum_{i \neq j} \overline{Q}_{H,i} P_j v  \omega P_i P_j) \Big\rangle_{\Psi} + h.c. \Big|.
\end{align}
The second and the third terms are treated as above, using that $0 \leq \omega \leq 1$ on the support of $v$. By a Cauchy-Schwarz inequality on the first term we get
\begin{equation*}
\eqref{eq:Q4recostructionQhigh} \leq \langle \mathcal Q_4^{\text{ren}}\rangle_{\Psi} + C \frac{N}{\vert \Lambda \vert} \|v\|_1 \langle n_+\rangle_{\Psi}.
\end{equation*}
\end{proof}

Now we can combine Lemmas~\ref{lem:localization_largeMat} and~\ref{lem:d1d2estimate} to prove Proposition~\ref{thm:excitationrestriction}.
\begin{proof}[Proposition~\ref{thm:excitationrestriction}]
Given $\Psi \in L_{\mathrm{sym}}^2(\Lambda^N)$ satisfying \eqref{eq:condensationaprioricondition}, we can apply Lemma~\ref{lem:localization_largeMat} and write $\Psi^m = \theta_{\mathcal M}(n_+^L -m)\Psi$. In \eqref{eq:large_matrices_decomposn+} we split the sum into two.
The first part, for $|m| < \frac 1 2 \mathcal M$, we keep. For $|m| > \frac 1 2 \mathcal{M}$, $\Psi_m$ satisfies
\begin{equation}\label{eq:M}
\langle n_+\rangle_{\Psi^m} \geq \langle n_+^L\rangle_{\Psi^m} \geq \frac{ \mathcal{M}}{4}\| \Psi^m \|^2 ,  
\end{equation}
due to the cutoff $\theta_{\mathcal M}(n_+^L - m)$. Since we have from \eqref{eq:boundonM1} that $\mathcal M \gg \rho^2\ell^2 \vert \Lambda \vert \widehat{g}(0) \f$, this is a larger bound than \eqref{eq:condensationestimate}, and thus the assumption of Theorem~\ref{thm:apriori_n+} cannot be satisfied for $\Psi^m$ and we must have the lower bound 
\begin{equation}\label{eq:PsimH.01}
\langle \Psi^m, \mathcal{H} \Psi^m\rangle \geq \rho^2\vert \Lambda \vert\widehat{g}(0)\Big(  \frac{1}{2} + C_B  \f  \Big) \|\Psi^m \|^2.
\end{equation}
We finally bound the last term in \eqref{eq:large_matrices_decomposn+}, using Lemma~\ref{lem:d1d2estimate}. We use the condensation estimate \eqref{eq:condensationestimate} and the bound \eqref{eq:estimateprioriQ4} on $\mathcal Q_4^{\rm{ren}}$ to obtain 
\begin{align}\label{eq:d1d2.01}
\mathcal M^{-2} \big( \vert \langle d_{1}^L \rangle_\Psi \vert + \vert \langle d_2^L\rangle_{\Psi} \vert \big) &\leq C \mathcal M^{-2} \Big( \rho K_H \Vert v \Vert_1  \ell^2 + 1 \Big)\vert \Lambda \vert\rho^2 \widehat{g}(0)\f \nonumber \\
&=  o^{\rm{LHY}}_d,
\end{align}
for $\mathcal M$ and $\Vert v \Vert_1$ satisfying \eqref{eq:boundonM}. Using the estimates \eqref{eq:PsimH.01} for $m > \frac 1 2 \mathcal M$ and \eqref{eq:d1d2.01} in formula \eqref{eq:large_matrices_decomposn+} we conclude the proof.
\end{proof}

\section{Rigorous Bogoliubov Theory for Quadratic Hamiltonians}

\subsection{Diagonalization of quadratic Hamiltonians}

In the next proposition we show a simple consequence of the Bogoliubov method, see \cite[Theorem 6.3]{Lieb_2001} and \cite{brietzke_second-order_2020}, that we use to diagonalize the quadratic term $\mathcal Q(z)$ of Proposition~\ref{prop:projonz}. 
\begin{theorem}\label{thm:bogdiag}
	Let $a_{\pm}$ be operators on a Hilbert space satisfying $\left[ a_+, a_- \right] = 0$. For $\mathcal A > 0$, $\mathcal B \in \R$ satisfying $\vert \mathcal B \vert < \mathcal A$ and arbitrary $\kappa \in \mathbb{C}$, we have the operator identity
	\begin{align*}
		\mathcal{A}&(a^{\dagger}_+ a_+ + a^{\dagger}_- a_-) + \mathcal{B}(a_+^{\dagger} a^{\dagger}_- + a_+ a_-) + \kappa (a^{\dagger}_+ + a_-) + \overline{\kappa}(a_+ + a_-^{\dagger})  \\
		&=  \mathcal D ( b^\dagger_+ b_+ + b^\dagger_- b_- ) -\frac{1}{2}\alpha\mathcal{B} ([a_+,a^{\dagger}_+] + [a_-, a^{\dagger}_-]) - \frac{2|\kappa|^2}{\mathcal{A}+\mathcal{B}},
	\end{align*}
	where $\mathcal D = \frac{1}{2} \Big( \mathcal{A} + \sqrt{\mathcal A^2 - \mathcal B^2}\Big)$, and
	\begin{equation}
		b_+ = \frac{1}{\sqrt{1 - \alpha^2}} \big( a_+ + \alpha a_-^\dagger + \bar c_0 \big), \qquad b_- =  \frac{1}{\sqrt{1 - \alpha^2}} \big( a_- + \alpha a_+^\dagger + c_0 \big),
	\end{equation}
	with 
	\begin{equation}
		\alpha = \mathcal B^{-1} \big( \mathcal A - \sqrt{\mathcal A^2 - \mathcal B^2} \big), \qquad c_0 = \frac{2 \bar \kappa}{\mathcal A + \mathcal B + \sqrt{\mathcal A^2 - \mathcal B^2}}.
	\end{equation}
\end{theorem}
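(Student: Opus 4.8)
This is a purely algebraic operator identity, valid at the formal level for any operators obeying $[a_+,a_-]=0$ (hence also $[a_+^\dagger,a_-^\dagger]=0$ by taking adjoints), so the plan is a direct substitution-and-matching computation with no domain or convergence input. The idea is to expand the diagonal term $\mathcal D(b_+^\dagger b_+ + b_-^\dagger b_-)$ on the right using the stated definitions of $b_\pm$, and to collect the result into the five structurally distinct pieces that also occur on the left: the number part $N := a_+^\dagger a_+ + a_-^\dagger a_-$, the anomalous pairing part $P := a_+^\dagger a_-^\dagger + a_+ a_-$, the commutator part $C := [a_+,a_+^\dagger] + [a_-,a_-^\dagger]$, a linear part, and a scalar constant.

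First I would write $b_\pm = (1-\alpha^2)^{-1/2}(\,\cdot\,)$ out explicitly and expand each product. The terms free of $c_0$ assemble, after using $[a_+,a_-]=0$ to identify $a_+^\dagger a_-^\dagger$ with $a_-^\dagger a_+^\dagger$ and $a_+a_-$ with $a_-a_+$, into $\frac{1}{1-\alpha^2}\big[(1+\alpha^2)N + 2\alpha P + \alpha^2 C\big]$, where the $C$ contribution arises precisely from normal-ordering the $\alpha^2 a_\pm a_\pm^\dagger$ terms. Matching the $N$- and $P$-coefficients to $\mathcal A$ and $\mathcal B$ then imposes the two scalar conditions $\frac{\mathcal D(1+\alpha^2)}{1-\alpha^2}=\mathcal A$ and $\frac{2\mathcal D\alpha}{1-\alpha^2}=\mathcal B$; these are consistent with the stated $\alpha=\mathcal B^{-1}(\mathcal A-\sqrt{\mathcal A^2-\mathcal B^2})$ (the root with $|\alpha|<1$) and fix the diagonal coefficient $\mathcal D$. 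A convenient observation is that the coefficient of $C$, namely $\frac{\mathcal D\alpha^2}{1-\alpha^2}$, is exactly $\tfrac{\alpha}{2}$ times the coefficient of $P$; hence once the pairing coefficient equals $\mathcal B$, the $C$-coefficient is automatically $\tfrac12\alpha\mathcal B$, which is what one transposes to the other side to produce the stated correction $-\tfrac12\alpha\mathcal B\,C$.

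Next I would treat the inhomogeneous sector by completing the square. The $c_0$-linear terms in $b_+^\dagger b_+ + b_-^\dagger b_-$ collect into $\frac{\mathcal D(1+\alpha)}{1-\alpha^2}\big[\bar c_0(a_+^\dagger+a_-)+c_0(a_++a_-^\dagger)\big]$, and requiring this to equal the left-hand linear term $\kappa(a_+^\dagger+a_-)+\bar\kappa(a_++a_-^\dagger)$ determines $c_0$; one then checks that the stated $c_0=2\bar\kappa/(\mathcal A+\mathcal B+\sqrt{\mathcal A^2-\mathcal B^2})$ solves it, using $\tfrac{\mathcal D(1+\alpha)}{1-\alpha^2}=\tfrac{\mathcal D}{1-\alpha}$. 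Finally the leftover scalar $\frac{2\mathcal D|c_0|^2}{1-\alpha^2}$ must reproduce the constant, and I would verify it equals $\frac{2|\kappa|^2}{\mathcal A+\mathcal B}$ through the algebraic identity $(\mathcal A+\mathcal B+s)^2 = 2(\mathcal A+s)(\mathcal A+\mathcal B)$ with $s=\sqrt{\mathcal A^2-\mathcal B^2}$, which is the one place where $\mathcal A^2-\mathcal B^2=s^2$ is genuinely used.

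I expect the quadratic matching to be entirely routine; the only real bookkeeping obstacle lies in the inhomogeneous sector, namely correctly tracking the $(1+\alpha)$ versus $(1-\alpha^2)$ factors in the linear terms and confirming the residual constant collapses to $\frac{2|\kappa|^2}{\mathcal A+\mathcal B}$. Since the whole statement is a formal identity in the free $\ast$-algebra generated by $a_\pm$ subject only to $[a_+,a_-]=0$, once each of the five structural pieces has been matched the identity holds with no further analytic input.
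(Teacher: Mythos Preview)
Your proposal is correct and is exactly the approach the paper intends: its entire proof reads ``This follows directly from algebraic computations,'' and your outline supplies precisely those computations (expand $b_\pm^\dagger b_\pm$, normal-order, match the $N$-, $P$-, $C$-, linear-, and scalar-coefficients). One remark: when you carry out the matching $\tfrac{\mathcal D(1+\alpha^2)}{1-\alpha^2}=\mathcal A$ and $\tfrac{2\mathcal D\alpha}{1-\alpha^2}=\mathcal B$ with the stated $\alpha$, you will find $\mathcal D=\sqrt{\mathcal A^2-\mathcal B^2}$ rather than the printed $\tfrac12(\mathcal A+\sqrt{\mathcal A^2-\mathcal B^2})$; this is a typo in the statement, and the correct value $\mathcal D=\sqrt{\mathcal A^2-\mathcal B^2}$ is indeed what the paper uses when it applies the theorem in Proposition~\ref{prop:BogDiag}.
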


\begin{remark}
	Note that the normalization of $b_\pm$ is chosen such that 
	\begin{equation}
		[ b_+, b_+^\dagger ] = \frac{ [ a_+ , a_+^\dagger ] - \alpha^2 [ a_- , a_-^\dagger ]}{1 - \alpha^2} , 
	\end{equation}
	and we recover the canonical commutation relations $[ b_+, b_+^\dagger ] = 1$ when $a_+$ and $a_-$ satisfies them as well.
\end{remark}

\begin{proof}
	This follows directly from algebraic computations.
\end{proof}

\subsection{Evaluation of the Bogoliubov integral}

In this section we report two lemmas for the calculation of the Bogoliubov integral. The first one, under weak assumptions, gives a bound for general Bogoliubov-type integrals, expressing the dependence on the parameters involved in the spectral gaps. The second one is a more precise calculation which lets us obtain the exact value of the Lee-Huang-Yang constant.
Let us recall the definition of $G_d$ in \eqref{defG}:
\begin{equation}
G_d(k) := \frac{\widehat{g}_{\mathbb{R}^d}(k)^2 - \widehat{g}_{\mathbb{R}^d}(0)^2 \one_d(\ell_{\delta}k)}{2 k^2}.
\end{equation}

\begin{lemma}\label{lem:Bog_int_parameters}
Let $\mathcal{A}, \mathcal{B}: \mathbb{R}^d \rightarrow \mathbb{R}$ be two functions such that, for parameters satisfying $\kappa >0$, $0< K_2 \leq K_1$, $\ell_{\delta}^{-1}\leq  K < a^{-1}$, 
\begin{align}
\mathcal{A}(k) \geq \kappa[|k|&- K]_+^2 +2  K_1 \widehat{g}(0), \qquad |\mathcal{B}(k)| \leq 2 K_2 \widehat{g}(0),\nonumber\\
&\lvert \mathcal{B}(k) -\mathcal{B}(0)\rvert \leq K_2 R^2 \widehat{g}(0)|k|^2,
\end{align}
and let us introduce the integral, recalling \eqref{defG}, 
\begin{equation}
I(d) = \int_{\mathbb{R}^d} \Big( \mathcal{A}(k) - \sqrt{\mathcal{A}(k)^2 - \mathcal{B}(k)^2}\Big) \dd k-\frac{K_2^2}{\kappa}\int_{\mathbb{R}^d} G_d
(k) \dd k,
\end{equation}
then there exists a constant $C>0$ such that
\begin{itemize}
\item  For $d=3$, 
\begin{align*}
 I(3) &\leq C \frac{K K_2^2a}{\kappa} \widehat{g\omega}(0) + C\widehat{g}(0) K_2^2\big( K_1^{-1}K^3 +\kappa^{-1} \widehat{g}(0) K \log((aK)^{-1}) \big)\\&\quad+\min \Big(\kappa^{-3} \widehat{g}(0)^4 \frac{K_2^4}{K^{3}},\frac{K_2^4}{K_1^2} \widehat{g\omega}(0)\Big).
\end{align*}
\item  For $d=2$, 
\begin{align*}
 I(2)&\leq  C \widehat{g}(0) K_2^2\Big( \widehat{g}(0)(\rho K_1^{-1} + \kappa^{-1}  R^2\ell_{\delta}^{-2}) + \kappa^{-1} \widehat{g}(0)|\log(2K\ell_{\delta})| + \kappa^{-1}\widehat{g}(0)\Big) \\&\quad+ \min \Big(\kappa^{-3} \widehat{g}(0)^4 \frac{K_2^4}{K^{4}}, \frac{K_2^4}{K_1^2} \widehat{g\omega}(0)\Big).
\end{align*}
\end{itemize}
\end{lemma}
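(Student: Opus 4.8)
The integrand $\mathcal A - \sqrt{\mathcal A^2 - \mathcal B^2}$ is of Bogoliubov type, and the plan is to isolate its leading contribution by a pointwise square-root expansion and then compare with the explicit subtraction $\tfrac{K_2^2}{\kappa}G_d$. First I would record that the hypotheses force $|\mathcal B(k)| \le 2K_2\widehat g(0) \le 2K_1\widehat g(0) \le \mathcal A(k)$ (using $K_2 \le K_1$), so that $x := \mathcal B^2/\mathcal A^2 \in [0,1]$ and the elementary inequality $0 \le 1 - \sqrt{1-x} - \tfrac x2 \le \tfrac{x^2}{2}$ applies uniformly, giving the pointwise remainder bound
\begin{equation*}
0 \le \mathcal A - \sqrt{\mathcal A^2 - \mathcal B^2} - \frac{\mathcal B^2}{2\mathcal A} \le \frac{\mathcal B^4}{2\mathcal A^3}.
\end{equation*}
Integrating, this yields
\begin{equation*}
I(d) \le \int_{\mathbb R^d}\Big(\frac{\mathcal B(k)^2}{2\mathcal A(k)} - \frac{K_2^2}{\kappa}G_d(k)\Big)\dd k + \frac12\int_{\mathbb R^d}\frac{\mathcal B(k)^4}{\mathcal A(k)^3}\dd k,
\end{equation*}
and the two integrals are then treated separately.

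For the quartic remainder I would produce the two competing bounds entering the $\min$. Bounding $\mathcal B$ by its uniform size $2K_2\widehat g(0)$ and using the quadratic growth $\mathcal A(k) \ge \kappa[|k|-K]_+^2$ in the far field (while using the gap $\mathcal A \ge 2K_1\widehat g(0)$ on the region $\{|k|\le CK\}$) gives a convergent integral of order $\kappa^{-3}\widehat g(0)^4 K_2^4 K^{d-6}$, which is $K^{-3}$ in $d=3$ and $K^{-4}$ in $d=2$. Alternatively, extracting two powers of $\mathcal A^{-1}$ through the gap, $\mathcal B^4/\mathcal A^3 \le (K_2/K_1)^2 \, \mathcal B^2/\mathcal A$, and recognizing $\int \mathcal B^2/\mathcal A$ as comparable to the main integral $\int G_d = (2\pi)^d\widehat{g\omega}(0)$, gives the competing bound $K_2^4 K_1^{-2}\widehat{g\omega}(0)$. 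Taking whichever is smaller produces the $\min(\cdots)$ term in both dimensions.

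The core is the main-term difference $\int\big(\tfrac{\mathcal B^2}{2\mathcal A} - \tfrac{K_2^2}{\kappa}G_d\big)$. Recalling $G_d = \tfrac{\widehat g_{\mathbb R^d}(k)^2 - \widehat g_{\mathbb R^d}(0)^2\one_d(\ell_\delta k)}{2k^2}$ from \eqref{defG}, I would compare $\mathcal A$ with the pure kinetic profile $\kappa k^2$: the difference $\tfrac{1}{\mathcal A} - \tfrac{1}{\kappa k^2}$ is controlled by the shift $K$ and the gap $2K_1\widehat g(0)$ and is supported essentially on $\{|k|\le CK\}$, producing the $K_1^{-1}K^d$ and (in $d=3$) the $\kappa^{-1}\widehat g(0)K\log((aK)^{-1})$ contributions. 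The discrepancy between $\mathcal B(k)^2$ and the target $\widehat g(0)^2$-profile is absorbed via the uniform bound $|\mathcal B| \le 2K_2\widehat g(0)$ together with the infrared Lipschitz estimate $|\mathcal B(k)-\mathcal B(0)| \le K_2R^2\widehat g(0)|k|^2$, which near $k=0$ controls the $R^2\ell_\delta^{-2}$ and $\rho$ terms in $d=2$.

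The main obstacle is the two-dimensional infrared sector. There the integrand $\tfrac{\mathcal B^2}{2\mathcal A}$ behaves like $\tfrac{\widehat g(0)^2}{k^2}$ near the origin, which is not integrable, and convergence is restored only by the renormalizing cutoff $\one_2(\ell_\delta k)$ built into $G_2$; matching these two mechanisms on the window $\ell_\delta^{-1} \le |k| \le K$ is exactly what generates the logarithmic factor $|\log(2K\ell_\delta)|$, and getting its argument and scale right (together with the crossover region $|k| \sim K$, where the lower bound on $\mathcal A$ transitions between the gap and the quadratic regime) is the delicate point. The three-dimensional case is softer because $\tfrac{1}{k^2}$ is integrable at the origin, so there the only genuine work is the far-field tail and the crossover. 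Assembling the three contributions and using $\int G_d\,\dd k = (2\pi)^d\widehat{g\omega}(0)$ gives the stated bounds.
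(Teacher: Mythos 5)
Your overall skeleton---expand the square root pointwise via $0\le 1-\sqrt{1-x}-\tfrac x2\le \tfrac{x^2}2$, compare the resulting quadratic term with $\frac{K_2^2}{\kappa}\int G_d$, and control the quartic remainder by a near/far momentum split---is indeed the strategy of the proofs this paper defers to (the paper itself gives no argument, citing \cite[Lemma C.1]{FS2} and \cite[Lemma C.5]{2DLHY}). However, there is a genuine gap at the central step, namely the cancellation between $\int \mathcal B^2/(2\mathcal A)$ and $\frac{K_2^2}{\kappa}\int G_d$. By \eqref{defG} the numerator of $G_d$ is $\widehat g_{\mathbb R^d}(k)^2-\widehat g_{\mathbb R^d}(0)^2\one_d(\ell_\delta k)$, so for the two integrals to cancel up to the stated errors you need $\mathcal B(k)^2$ to track $K_2^2\,\widehat g_{\mathbb R^d}(k)^2$ at \emph{all} momenta, in particular for $|k|\gtrsim R^{-1}$ where $\widehat g(k)$ decays. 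The only tools you invoke---the uniform bound $|\mathcal B|\le 2K_2\widehat g(0)$ and the quadratic control of $\mathcal B(k)-\mathcal B(0)$ at the origin---say nothing about that region and cannot produce the cancellation. Indeed, with the hypotheses exactly as you use them the conclusion fails: take $\mathcal B\equiv 2K_2\widehat g(0)$ (all hypotheses hold, since $K_2\le K_1$) and $\mathcal A$ equal to its lower bound; then $\mathcal A-\sqrt{\mathcal A^2-\mathcal B^2}\ge \mathcal B^2/(2\mathcal A)\sim 2K_2^2\widehat g(0)^2/(\kappa|k|^2)$ at infinity, so the first integral in $I(d)$ diverges in both $d=2$ and $d=3$, while the right-hand side of the lemma is finite. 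Any correct proof must use the decay hypothesis of the cited lemmas, $|\mathcal B(k)|\le C K_2\,\widehat g(k)$ (which is what the actual application $\mathcal B_k=\rho_z\widehat g(k)$ in Theorem~\ref{thm:rhofar} satisfies, and which the transcription of the hypotheses here has weakened to boundedness by $\widehat g(0)$), and then carry the $\widehat g(k)^2$ numerators through every region. This is also where the term $\kappa^{-1}\widehat g(0)K\log((aK)^{-1})$ really originates: the window $K\le |k|\le a^{-1}$, where $\widehat g(k)\approx\widehat g(0)$ and one compares $[|k|-K]_+^{-2}$ with $|k|^{-2}$---not the region $\{|k|\le CK\}$ as you state.

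The same defect infects both branches of your $\min$ term. For the second branch you ``recognize'' $\int\mathcal B^2/\mathcal A$ as comparable to $\int G_d=(2\pi)^d\widehat{g\omega}(0)$: under the hypotheses you work with, that integral can be $+\infty$, and even granting decay of $\mathcal B$ this comparability is essentially the content of the lemma itself, so it cannot simply be asserted. For the first branch, your splitting only yields $\kappa^{-3}\widehat g(0)^4K_2^4K^{d-6}$ on $\{|k|\ge 2K\}$; on $\{|k|\le 2K\}$ the spectral-gap bound gives $C K_2^4\widehat g(0)K^dK_1^{-3}$, which is not of that form and is not dominated by it under the stated parameter relations (that would require $\kappa K^2\lesssim K_1\widehat g(0)$, which is not assumed). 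Instead, this inner contribution must be absorbed into the error term $C\widehat g(0)K_2^2K_1^{-1}K^d$ using $K_2\le K_1$. The remainder estimates are fixable bookkeeping, but the missing pointwise decay of $\mathcal B$ is essential: without restoring it, the main-term comparison---the heart of the proof---does not go through.
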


\begin{proof}
The proof of the 3D and 2D cases can be found in \cite[Lemma C.1]{FS2} and \cite[Lemma C.5]{2DLHY}, respectively.
\end{proof}

\begin{lemma}\label{lem:calculation_LHYterm_int}
There exists a $C>0$ such that 
\begin{equation}
\frac{1}{2(2\pi)^d} \int_{\mathbb{R}^d} \Big( \sqrt{k^4 - 2k^2 \rho \widehat{g}(k)} -k^2 \rho \widehat{g}(k)- \rho^2 G_d(k) \Big)\dd k 
=\frac{\rho^2}{2} I^{\text{Bog}}_d \widehat{g}(0)\lambda_d^{\text{LHY}} + \mathcal{E}_d^{\text{int}}(\rho),
\end{equation}
where
\begin{equation}
|\mathcal{E}_d^{\text{int}}(\rho)| \leq \begin{dcases} C \rho^2 \widehat{g}(0)^3 \rho R^2 \log(\widehat{g}(0)), \quad &\text{if } d=2,\\
C\rho^2 \widehat{g}(0)^3 \rho R^2 \sqrt{\rho \widehat{g}(0)^3}, \quad &\text{if }  d=3.
\end{dcases}
\end{equation}
\end{lemma}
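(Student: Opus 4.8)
The strategy is the one indicated in the Remark following Theorem~\ref{thm:main}: rescale momenta by the natural length $\mu^{-1/2}$ with $\mu := \rho\,\widehat g(0)$, extract the universal Bogoliubov integral $I_d^{\rm Bog}$ as the main term, and absorb the difference between $\widehat g(k)$ and $\widehat g(0)$ into the error. (I use the sign-corrected integrand from the Remark, $\sqrt{k^4 + 2k^2\rho\widehat g(k)} - k^2 - \rho\widehat g(k) + \rho^2 G_d(k)$.) Substituting $k = \sqrt{\mu}\, q$ and setting $h(q) := \widehat g_{\R^d}(\sqrt{\mu}\, q)/\widehat g(0)$, one checks using \eqref{defG} that $\rho^2 G_d(\sqrt{\mu}\, q) = \tfrac{\mu}{2q^2}\bigl(h(q)^2 - \one_d(\ell_\delta\sqrt{\mu}\, q)\bigr)$, so that every term is homogeneous of the same degree in $\mu$. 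Factoring out $\mu^{1+d/2}$ (one power of $\mu$ from the bracket and $\mu^{d/2}$ from $\dd k = \mu^{d/2}\dd q$), the integral becomes
\[
\frac{\mu^{1+d/2}}{2(2\pi)^d}\int_{\R^d}\Bigl[\sqrt{q^4 + 2q^2 h(q)} - q^2 - h(q) + \frac{h(q)^2 - \one_d(\ell_\delta\sqrt{\mu}\, q)}{2q^2}\Bigr]\dd q .
\]

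The main term is obtained by freezing $\widehat g$ at the origin, i.e.\ replacing $h(q)$ by $1$, and replacing the renormalization scale $\ell_\delta\sqrt{\mu}$ by its limit. Indeed, since $g_{\R^d}$ is supported in the ball of radius $R$, the radial symmetry gives the second-moment bound $\lvert\widehat g_{\R^d}(k) - \widehat g(0)\rvert\leq CR^2\widehat g(0)\lvert k\rvert^2$, hence $h(q)\to 1$ as $\rho a^d\to 0$; and inserting $\widehat g(0) = 8\pi\delta$ (resp.\ $8\pi a$) together with $\ell_\delta = \tfrac12 a\,e^{1/(2\delta)}e^\Gamma$ gives $\ell_\delta\sqrt{\mu}\to\sqrt{2\pi}\,e^\Gamma$. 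The frozen bracket is then the integrand defining $I_d^{\rm Bog}$ in \eqref{eq:ibog} (noting $\sqrt{q^4+2q^2}=\sqrt{(q^2+1)^2-1}$), so its integral equals $(\pi/2)^{d/2}I_d^{\rm Bog}$. Using $\tfrac{(\pi/2)^{d/2}}{(2\pi)^d} = (8\pi)^{-d/2}$ and $\bigl(\tfrac{\rho\widehat g(0)}{8\pi}\bigr)^{d/2} = \rho\,\lambda_d^{\rm LHY}$ — which holds in both dimensions, since $(\rho a)^{3/2} = \rho\sqrt{\rho a^3}$ and $\rho\delta = \rho\delta$ — the prefactor collapses to $\tfrac{\rho^2}{2}\widehat g(0)\lambda_d^{\rm LHY}$, producing the claimed leading term $\tfrac{\rho^2}{2}I_d^{\rm Bog}\widehat g(0)\lambda_d^{\rm LHY}$.

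It remains to bound $\mathcal E_d^{\rm int}$, namely $\tfrac{\mu^{1+d/2}}{2(2\pi)^d}$ times the integral of the difference between the true bracket and its frozen version. The essential point is that this difference must be estimated \emph{as a whole}: bounding the square-root term, the linear term, and the $G_d$ term separately produces non-integrable tails, and only their combination decays (like $(\rho\widehat g(k))^3/k^4$ for large $\lvert k\rvert$). I would split $\R^d$ into $\lvert k\rvert\lesssim R^{-1}$, where the second-moment bound controls $h(q)-1$, and $\lvert k\rvert\gtrsim R^{-1}$, where one instead uses $0\leq\widehat g(k)\leq\widehat g(0)$ together with the decay and integrability of $\widehat g$ coming from $v\in L^2$; inserting these into the telescoped differences and integrating against the weights $1$ and $\lvert q\rvert^{-2}$ yields the stated powers of $\mu=\rho\widehat g(0)$ and of $R^2$. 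The residual discrepancy from replacing $\one_d(\ell_\delta\sqrt{\mu}\, q)$ by $\one_d(\sqrt{2\pi}e^\Gamma q)$ is the $\lvert q\rvert^{-2}$-integral over the thin annulus between the two radii, which vanishes in $d=3$ (where $\one_3\equiv 0$) and is controlled in $d=2$.

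The main obstacle is the two-dimensional error. There the weight $\lvert q\rvert^{-2}$ is exactly at the threshold of integrability in $\R^2$, so the intermediate region $1\lesssim\lvert q\rvert\lesssim(R\sqrt{\mu})^{-1}$ on which $\widehat g$ slowly departs from $\widehat g(0)$ contributes logarithmically; this is the origin of the $\log\widehat g(0)$ factor. One must moreover keep the constant in $\ell_\delta$ sharp, so that the renormalization point is reproduced exactly as $\sqrt{2\pi}\,e^\Gamma$ and no spurious $O(1)$ term survives in the cutoff comparison. In $d=3$ the integrand is genuinely integrable both near the origin and at infinity, and the same decomposition gives the cleaner algebraic remainder $\sqrt{\rho\widehat g(0)^3}$ without a logarithm.
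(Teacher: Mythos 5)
You follow exactly the route the paper itself takes and then delegates to \cite[Proposition C.3]{2DLHY} and \cite[Lemma C.2]{FS2}: change variables $k \mapsto \sqrt{\rho\widehat g(0)}\,k$, freeze $\widehat g$ at the origin to produce $\Ibog$, and control the discrepancy via the second-moment bound $|\widehat g(k)-\widehat g(0)|\leq CR^{2}\widehat g(0)k^{2}$ together with the combined (not term-by-term) decay of the integrand. Your prefactor algebra is correct, and you rightly work with the sign-corrected integrand --- including, implicitly, the sign in \eqref{eq:ibog}, where convergence of the $d=2$ integral forces $\frac{1}{2t^{2}}\big(1-\one_d(\sqrt{2\pi}e^{\Gamma}t)\big)$ rather than $1+\one_d$ --- so this is essentially the paper's own proof carried out in more detail.
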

\begin{proof}
The idea of the proof is to estimate the error made approximating $\widehat{g}(k)$ with $\widehat{g}(0)$ and then changing variables $k \mapsto \sqrt{\rho \widehat{g}(0)} k$ to reduce to $I_d^{\text{Bog}}$. The details can be found in \cite[Proposition C.3]{2DLHY} and \cite[Lemma C.2]{FS2} for dimension 2 and 3, respectively.
\end{proof}

\section{When $\rho_z$ is far from $\rho$}\label{app:rho_far}

Before establishing the lower bound when $\vert \rho -\rho_{z}\vert\geq \rho \varepsilon_+$, we first need the following intermediate lemma, which states that the elements corresponding to the soft pairs interaction in $\mathcal{Q}_3^{\text{ren}}$ can be bounded at the price of a small part of the kinetic energy. We recall the definition of $\mathcal{Q}_3^{\text{soft}}$ in \eqref{def:Q3soft} and the definition of the momenta spaces $\mathcal{P}_L$ and $\mathcal{P}_H$ in \eqref{def:PLPH}.

\begin{lemma}\label{lemma:Q3} There exists a universal constant $C>0$ such that, for any $z \in \mathbb C$, any $\varepsilon >0$, and any $\Phi \in \mathcal F_s(\rm{Ran}Q)$ satisfying
\begin{equation}
\langle n_+ \rangle_\Phi \leq \rho\vert \Lambda \vert ,\label{eq:phi}
\end{equation}
we have
\begin{align}\label{ine:Q3}
\Big\langle \frac{\varepsilon}{2}\sum_{k\in\mathcal P_H }k^{2}a_{k}^{\dagger}a_{k}+\mathcal Q_3^{\rm{soft}}(z) \Big\rangle_{\Phi}\geq - C\vert \Lambda\vert  \varepsilon^{-1} \rho \rho_z \widehat g(0) \frac{K_{\ell}^2}{ K_{H}^{2}}\frac{\langle n_+^L\rangle_{\Phi}}{N} K_{L}^{d}.
\end{align}
\end{lemma}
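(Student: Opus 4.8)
The plan is to treat $\mathcal Q_3^{\rm soft}(z)$ from \eqref{eq:Q3soft.z} as an off-diagonal cubic perturbation and bound it from below by completing the square against the given high-momentum kinetic energy $\frac{\varepsilon}{2}\sum_{k\in\mathcal P_H}k^2 a_k^\dagger a_k$, at the cost of an error supported on the low excitations. The structural point is that in each monomial $\bar z\, a_p^\dagger a_{p-k} a_k$ the absorbed momentum $k$ lies in $\mathcal P_H$, so its kinetic weight $k^2\geq K_H^2\ell^{-2}$ is large, while $p\in\mathcal P_L$ is a genuine low excitation contributing to $n_+^L$.

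First I would apply, for each pair $(k,p)$ and a weight $\mu_k>0$, the operator Cauchy--Schwarz inequality $CA+A^\dagger C^\dagger\geq -\mu_k A^\dagger A-\mu_k^{-1}CC^\dagger$ with $A=a_k$ and $C=|\Lambda|^{-1}\widehat g(k)\bar z\, a_p^\dagger a_{p-k}$, so that $CA$ reproduces the monomial and $CC^\dagger=|\Lambda|^{-2}\widehat g(k)^2|z|^2\, a_p^\dagger a_{p-k}a_{p-k}^\dagger a_p$. Summing the diagonal part $-\mu_k a_k^\dagger a_k$ over $p\in\mathcal P_L$ produces the factor $|\mathcal P_L|$, so choosing
\[ \mu_k=\frac{\varepsilon k^2}{2|\mathcal P_L|} \]
gives $\sum_{p\in\mathcal P_L}\mu_k a_k^\dagger a_k=\frac{\varepsilon}{2}k^2 a_k^\dagger a_k$, and the ensuing sum over $k\in\mathcal P_H$ cancels the kinetic term on the left-hand side of \eqref{ine:Q3} exactly. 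This reduces the lemma to estimating $-\sum_{k,p}\mu_k^{-1}CC^\dagger$.

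Next I would bound this error. On $\mathcal P_H$ I use $|\widehat g(k)|\leq\widehat g(0)$ (valid since $g\geq 0$) and $k^{-2}\leq \ell^2 K_H^{-2}$, and I normal-order $a_{p-k}a_{p-k}^\dagger=a_{p-k}^\dagger a_{p-k}+1$, so that $a_p^\dagger a_{p-k}a_{p-k}^\dagger a_p=n_p n_{p-k}+n_p$. For the leading quartic piece I use $\sum_{k\in\mathcal P_H}n_{p-k}\leq n_+$ and $\sum_{p\in\mathcal P_L}n_p\leq n_+^L$ (since $\mathcal P_L\subset\{|q|<K_H\ell^{-1}\}$), yielding $\langle n_+^L n_+\rangle_\Phi$, and then the a priori bound \eqref{eq:phi}, in the operator form $n_+\leq\rho|\Lambda|$ available on the relevant $N$-particle states, gives $\langle n_+^L n_+\rangle_\Phi\leq\rho|\Lambda|\,\langle n_+^L\rangle_\Phi$. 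Inserting $|z|^2=\rho_z|\Lambda|$, $|\mathcal P_L|\leq CK_L^d$, and the scaling $\ell^2=K_\ell^2/(\rho\widehat g(0))$ — whence $\widehat g(0)^2\ell^2=\widehat g(0)K_\ell^2/\rho$ — collapses the constants into precisely the right-hand side of \eqref{ine:Q3}. The residual quadratic term $\sum_{k,p}\mu_k^{-1}|\Lambda|^{-2}\widehat g(k)^2|z|^2 n_p$ cannot use the crude $k^{-2}$ bound (the sum $\sum_{\mathcal P_H}k^{-2}$ diverges); instead $\sum_{k\in\mathcal P_H}\widehat g(k)^2/k^2\approx 2|\Lambda|\widehat{g\omega}(0)$ by Lemma~\ref{lem:gomega.approx}, which exhibits this piece as lower order and absorbable into the same bound via the parameter relations.

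The main obstacle is the bookkeeping that simultaneously makes the cancellation of the kinetic term exact and the error sharp: the weight $\mu_k$ must be tuned so that the $p$-summation reproduces $\frac{\varepsilon}{2}k^2 a_k^\dagger a_k$ with the correct constant, and one must then verify that the single surviving quartic term, after the reductions $\sum_k n_{p-k}\leq n_+$, $\sum_p n_p\leq n_+^L$ and $n_+\leq\rho|\Lambda|$, combines with the $\ell$-scaling to produce exactly the power $K_\ell^2 K_H^{-2}K_L^d$. Handling the commutator term without a convergent $\sum k^{-2}$ — replacing it by $\widehat{g\omega}(0)$ and confirming it is subleading — is the only other point demanding care.
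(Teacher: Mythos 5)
Your overall strategy---absorbing $\mathcal Q_3^{\rm{soft}}(z)$ into the high-momentum kinetic energy by a quadratic completion, reducing the error to an $n_+^L n_+$ piece plus a normal-ordering piece handled via Lemma~\ref{lem:gomega.approx}---is the same as the paper's, and your treatment of the quartic piece (including the use of $n_+\leq\rho|\Lambda|$ as an operator bound on the relevant states, which the paper also invokes) lands exactly on the claimed error. The gap is in the normal-ordering (commutator) piece, and it breaks the argument as written. Because you apply Cauchy--Schwarz separately for each pair $(k,p)$ and tune the weight $\mu_k=\varepsilon k^2/(2|\mathcal P_L|)$ so that the $p$-sum of the diagonal terms reproduces $\tfrac{\varepsilon}{2}k^2a_k^\dagger a_k$, every error term inherits the factor $\mu_k^{-1}\propto|\mathcal P_L|\sim K_L^d$. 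For the quartic piece this factor is harmless, since it is present in the stated bound. But your residual quadratic piece becomes
\begin{equation*}
\sum_{k\in\mathcal P_H}\sum_{p\in\mathcal P_L}\mu_k^{-1}\frac{\widehat g(k)^2|z|^2}{|\Lambda|^2}\,a_p^\dagger a_p
=\frac{2|\mathcal P_L|\,|z|^2}{\varepsilon|\Lambda|^2}\Big(\sum_{k\in\mathcal P_H}\frac{\widehat g(k)^2}{k^2}\Big)\,n_+^L
\approx C\,\varepsilon^{-1}K_L^d\,\rho_z\,\widehat{g\omega}(0)\,n_+^L,
\end{equation*}
a factor $K_L^d$ larger than the corresponding term in the paper. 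Since the right-hand side of \eqref{ine:Q3} equals $C\varepsilon^{-1}\rho_z\widehat g(0)K_\ell^2K_L^dK_H^{-2}\langle n_+^L\rangle_\Phi$ (using $N=\rho|\Lambda|$), your residual exceeds it by the ratio $\frac{\widehat{g\omega}(0)}{\widehat g(0)}\,\frac{K_H^2}{K_\ell^2}$; with any admissible choice in \ref{app:parameters} (e.g.\ $K_H=K_\ell^{4+3\varepsilon}$, so $K_H^2K_\ell^{-2}=K_\ell^{6+6\varepsilon}$) this diverges in the dilute limit, while $\widehat{g\omega}(0)/\widehat g(0)$ is a fixed potential-dependent constant. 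So your claim that this piece is ``absorbable into the same bound via the parameter relations'' is false: it is larger than the allowed error by an unbounded factor, precisely because the weight needed to cancel the kinetic term exactly is too small by $|\mathcal P_L|$.

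The repair is to complete the square once per $k$ with the whole low-momentum sum inside the shifted operator, which is what the paper does: set $b_k=a_k+\frac{2}{\varepsilon|\Lambda|}\sum_{p\in\mathcal P_L}\frac{\widehat g(k)}{k^2}z\,a_{p-k}^\dagger a_p$, so that $\frac{\varepsilon}{2}k^2 b_k^\dagger b_k$ reproduces the kinetic term and the cubic cross terms with no cardinality loss in the weight. Then the only normal-ordering correction comes from $[a_{s-k},a_{p-k}^\dagger]=\delta_{s,p}$, i.e.\ it is diagonal in the low momenta and carries the weight $\varepsilon^{-1}|z|^2|\Lambda|^{-2}$ without any $|\mathcal P_L|$, giving $C\varepsilon^{-1}\rho_z\widehat{g\omega}(0)\,n_+^L$; this \emph{is} dominated by the right-hand side of \eqref{ine:Q3} because $K_\ell^2K_L^dK_H^{-2}\gg1$ under \eqref{KL_relations} and \eqref{eq:para.choice}. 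The price of the global completion is that the quartic error then contains off-diagonal terms $a_s^\dagger a_{p-k}^\dagger a_{s-k}a_p$, but these reduce to the diagonal ones you already control by one further Cauchy--Schwarz, exactly as in \eqref{eq:proofAtilde3}.
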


\begin{proof}
Introducing the operators
\begin{equation}
b_k := a_k + \frac{2}{\varepsilon \vert\Lambda\vert} \sum_{p \in \mathcal{P}_L} \, \frac{\widehat{g}(k)}{k^2} z a_{p-k}^{\dagger} a_p,
\end{equation}
and 
\begin{equation}
	\mathcal{K}^{\mathrm{diag}}_{\varepsilon}=\frac{\varepsilon}{2}\sum_{k\in\mathcal P_H }k^{2}a_{k}^{\dagger}a_{k}
\end{equation}
we can complete the square in the following expression, obtaining
\begin{align*}
\mathcal{K}^{\mathrm{diag}}_{\varepsilon}+\mathcal Q_3^{\rm{soft}} &= \sum_{k \in \mathcal{P}_H} \, \Big( \frac{\varepsilon}{2} k^2 b^{\dagger}_k b_k - \frac{2|z|^2}{\varepsilon\vert\Lambda\vert^{2}} \sum_{p,s \in \mathcal{P}_L}  \frac{\widehat{g}(k)^2}{k^2} a_s^{\dagger} a_{s-k} a^{\dagger}_{p-k}a_p\Big) \\
&\geq - \frac{2|z|^2}{\varepsilon\vert\Lambda\vert^{2}} \sum_{k \in \mathcal{P}_H} \, \frac{\widehat{g}(k)^2}{k^2} \sum_{p,s \in \mathcal{P}_L}  a_s^{\dagger}\big(a^{\dagger}_{p-k} a_{s-k} + [a_{s-k}, a^{\dagger}_{p-k}] \big)a_p.
\end{align*}
For the term without commutator, estimated on a state $\Phi$ which satisfies \eqref{eq:phi} and using the Cauchy-Schwarz inequality
\begin{equation}
a_s^{\dagger}a^{\dagger}_{p-k} a_{s-k}a_p \leq C (a_s^{\dagger}a^{\dagger}_{p-k} a_{p-k}a_s + a_p^{\dagger}a^{\dagger}_{s-k} a_{s-k}a_p)
\end{equation}
we have
\begin{align}
&\frac{2|z|^2}{\varepsilon\vert\Lambda\vert^{2}} \left\langle \sum_{k \in \mathcal{P}_H} \frac{\widehat{g}(k)^2}{k^2} \sum_{p,s\in \mathcal{P}_L} a_s^{\dagger}a^{\dagger}_{p-k} a_{p-k}a_s \right\rangle_{\Phi} \nonumber \\
&\leq C\varepsilon^{-1}\frac{\rho_{z}\widehat{g}(0)^{2}}{\vert \Lambda \vert} \sum_{k \in \frac{1}{2}\mathcal{P}_H} \sum_{s\in \mathcal{P}_L} \frac{1}{k^2} \langle a_s^{\dagger}a^{\dagger}_{k} a_{k}a_s \rangle_{\Phi} \Big (\sum_{p \in \mathcal{P}_L} 1 \Big ) \nonumber \\
&\leq C \varepsilon^{-1} \rho \rho_z \ell^2 \widehat{g}(0)^2 \frac{ \langle n_+^L \rangle_{\Phi} K_{L}^{d} }{K_{H}^{2}}, \label{eq:proofAtilde3}
\end{align}
where in the last line we used that the sum over $\frac{1}{2}\mathcal{P}_H$ of $a^{\dagger}_k a_k$ can be bounded by the number of bosons $N =\rho \lvert \Lambda\rvert$, while the sum over $\mathcal{P}_L$ of the $a^{\dagger}_s a_s$ can be bounded by $C\langle n_+^L \rangle_{\Phi}$ thanks to the assumptions on $\Phi$.

On the other hand, the commutator satisfies $a_s^{\dagger} [a_{s-k}, a^{\dagger}_{p-k}] a_p = \delta_{s=p} a_p^{\dagger}a_p $, so we get
\begin{multline}
\frac{2|z|^2}{\varepsilon\vert\Lambda\vert^{2}}\left\langle \sum_{k \in \mathcal{P}_H} \frac{\widehat{g}(k)^2}{k^2} \sum_{p,s\in \mathcal{P}_L} a_s^{\dagger} [a_{s-k}, a^{\dagger}_{p-k}] a_p \right\rangle_{\Phi} \\
\leq C\frac{|z|^2}{\varepsilon\vert\Lambda\vert^2} \sum_{k \in \mathcal P_H, p \in \mathcal P_L}\frac{\widehat g(k)^2}{k^2} \langle a_p^{\dagger} a_p \rangle_{\Phi} \leq C\varepsilon^{-1} \rho_z \widehat g(0) \langle n_+^L \rangle_{\Phi}, \label{eq:proofAtilde4}
\end{multline}
where we used Lemma~\ref{lem:gomega.approx}, and we obtain a term which is smaller than the error stated in the lemma provided $\frac{K_{\ell}^2 K_L^d}{K_H^2}\leq 1$. 

Combining the inequalities from \eqref{eq:proofAtilde3} and \eqref{eq:proofAtilde4} we get the estimate of the lemma.
\end{proof}

We are now ready to state the theorem which gives a lower bound for the expression \eqref{eq:c-numberHamiltonian} when $\vert \rho - \rho_z \vert \geq \rho \varepsilon_+$. 
We use the notation 
\begin{equation}
\Phi(z) := \langle z\vert \Psi \rangle, \qquad z \in \mathbb{C},
\end{equation}
where $\vert z \rangle$ belongs to the family of coherent states of the form \eqref{eq:coherent_z}, so that, from the c-number substitution, we can write
\begin{equation}\label{integral_expression_rhofar}
\langle \Psi,\mathcal{H} \Psi\rangle = \frac{1}{\pi} \int_{\mathbb{C}} \langle \Phi(z),(\mathcal{K}(z) + \mathcal{Q}_3^{\text{ren}}+ \mathcal{Q}_4^{\text{ren}} + \mathcal{R}_0)\Phi(z)\rangle_{} \mathrm{d}z.
\end{equation}
We further observe that, since $\Psi = \one_{[0,\mathcal{M}]}(n_+^L) \Psi$, we have
\begin{equation}\label{eq:numberPhiPsi}
\langle n_+^L\rangle_{\Phi(z)} \leq \mathcal{M} \lVert \Phi(z)\rVert^2,
\end{equation}
and the simpler
\begin{equation}
\langle n_+\rangle_{\Phi(z)} \leq N \lVert \Phi(z)\rVert^2.
\end{equation}

\begin{theorem}\label{thm:rhofar}
Assume $\vert \rho - \rho_z \vert \geq \rho \varepsilon_+$ and that the relations between the parameters in \ref{app:parameters} hold true. If there exists a $C>0$ such that $\rho a^d \leq C^{-1}$, then for any normalized, $N-$particle state $\Psi \in \mathscr{F}_s(L^2(\Lambda))$ satisfying \eqref{eq:assumption_lowE_psi} and $\Psi = \one_{[0,2\mathcal{M}]}(n_+^L)\Psi$,
the following lower bound holds,
\begin{equation*}
\langle \mathcal{K}(z) + \mathcal{Q}_3^{\rm{soft}}(z) +\mathcal{R}_0\rangle_{\Phi(z)} \\
\geq  \Big(\frac{1}{2} \rho^2 |\Lambda| \widehat{g}(0) + 2 E^{\rm{LHY}}_d\Big)\lVert \Phi(z)\rVert^2 - C\rho \widehat{g}(0) \langle n_+^H\rangle_{\Phi(z)}.
\end{equation*}
\end{theorem}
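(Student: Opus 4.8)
The plan is to prove a crude lower bound, exploiting that far from $\rho$ the classical gain $\tfrac12(\rho_z-\rho)^2\lvert\Lambda\rvert\widehat g(0)$ already dominates $2E_d^{\mathrm{LHY}}$, so none of the delicate Bogoliubov cancellations of the close regime are needed. First I would diagonalise $\mathcal Q(z)$ via the algebraic identity of Theorem~\ref{thm:bogdiag} (as in the proof of Proposition~\ref{prop:BogDiag}, which is valid for \emph{every} $z$ since only $\lvert\mathcal B_k\rvert<\mathcal A_k$ for $k\ne0$ is needed), obtaining $\mathcal Q(z)=\tfrac12\rho_z^2\lvert\Lambda\rvert\widehat g(0)+E_d^{\mathrm{LHY}}(\rho_z)+\mathcal K^{\mathrm{diag}}+\mathcal R_1^{(d)}$ with $\mathcal K^{\mathrm{diag}}\ge0$ and $E_d^{\mathrm{LHY}}(\rho_z)\ge0$. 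Collecting all $c$-number terms of $\mathcal K(z)$ and using $\tfrac12\rho_z^2-\rho\rho_z+\rho^2=\tfrac12(\rho_z-\rho)^2+\tfrac12\rho^2$, I obtain, after discarding the positive terms $E_d^{\mathrm{LHY}}(\rho_z)$ and $\mathcal Q_2^{\mathrm{ex}}(z)\ge0$ but keeping $\mathcal K^{\mathrm{diag}}$,
\[\mathcal K(z)\ge\tfrac12\rho^2\lvert\Lambda\rvert\widehat g(0)+\tfrac12(\rho_z-\rho)^2\lvert\Lambda\rvert\widehat g(0)+(\rho_z-\rho)n_+\widehat g(0)+\mathcal K^{\mathrm{diag}}+\mathcal R_1^{(d)}.\]
Since $\lvert\rho_z-\rho\rvert\ge\rho\varepsilon_+$, the key gain is $\tfrac12(\rho_z-\rho)^2\lvert\Lambda\rvert\widehat g(0)\ge\tfrac12\varepsilon_+^2\rho^2\lvert\Lambda\rvert\widehat g(0)$; because $\varepsilon_+\ge K_\ell^2K_L^{-1}$ is tuned in~\ref{app:parameters} so that $\varepsilon_+^2$ is a sufficiently large multiple of $\f$, this exceeds $2E_d^{\mathrm{LHY}}=\rho^2\lvert\Lambda\rvert\widehat g(0)\f\Ibog$ together with $\lvert\mathcal R_1^{(d)}\rvert$ and the remaining $o_d^{\mathrm{LHY}}$ errors produced below (for large $\rho_z$ the error $\mathcal R_1^{(d)}$, though growing like $\rho_z^2$, stays far below the even larger quadratic gain).

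Next I would dispose of $\mathcal Q_3^{\mathrm{soft}}(z)$ and $\mathcal R_0$. For $\mathcal Q_3^{\mathrm{soft}}(z)$ I would invoke Lemma~\ref{lemma:Q3}, whose hypothesis $\langle n_+\rangle_{\Phi(z)}\le\rho\lvert\Lambda\rvert$ holds since $n_+\le N$ and $\lVert\Phi(z)\rVert\le1$: a small fraction $\varepsilon$ of the high-momentum kinetic energy $\tfrac\varepsilon2\sum_{k\in\mathcal P_H}k^2a_k^\dagger a_k$ absorbs $\mathcal Q_3^{\mathrm{soft}}(z)$ up to the error $C\lvert\Lambda\rvert\varepsilon^{-1}\rho\rho_z\widehat g(0)K_\ell^2K_H^{-2}N^{-1}\langle n_+^L\rangle_{\Phi(z)}K_L^d$; inserting $\langle n_+^L\rangle_{\Phi(z)}\le 2\mathcal M\lVert\Phi(z)\rVert^2$ and choosing $\varepsilon$ according to~\ref{app:parameters} renders this $o_d^{\mathrm{LHY}}$. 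That kinetic energy is supplied by a fraction of $\mathcal K^{\mathrm{diag}}$: on $\mathcal P_H$ one has $\mathcal D_k\ge k^2$ and $\alpha_k$ negligible by Lemma~\ref{Lem:alphak}, so $b_k^\dagger b_k$ is comparable to $a_k^\dagger a_k$ and $\mathcal K^{\mathrm{diag}}$ controls $\tfrac\varepsilon2\sum_{k\in\mathcal P_H}k^2a_k^\dagger a_k$. Finally $\mathcal R_0$ is handled by the operator bound behind~\eqref{eq:R0_estimate}, $\mathcal R_0\ge-C\rho\widehat g(0)$, which is $o_d^{\mathrm{LHY}}$.

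The only genuinely negative operator that remains is $(\rho_z-\rho)n_+\widehat g(0)$, and its treatment is the crux. When $\rho_z\ge\rho$ it is non-negative and simply dropped. When $\rho_z<\rho$ I would split $n_+$ by momentum: the excitations with $\lvert k\rvert\le K_L\ell^{-1}$ are bounded by $n_+^L$, while those with $\lvert k\rvert>K_L\ell^{-1}$ are exactly $n_+^H$. Using $\lvert\rho_z-\rho\rvert\le\rho$, this gives $(\rho_z-\rho)n_+\widehat g(0)\ge-\rho\widehat g(0)n_+^L-\rho\widehat g(0)n_+^H\ge-2\mathcal M\rho\widehat g(0)\lVert\Phi(z)\rVert^2-\rho\widehat g(0)n_+^H$ on states with $n_+^L\le2\mathcal M$; the first term is $o_d^{\mathrm{LHY}}$ by the smallness of $\mathcal M/(N\f)$, and the second is precisely the tolerated error $-C\rho\widehat g(0)\langle n_+^H\rangle_{\Phi(z)}$. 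Collecting all contributions and absorbing the accumulated $o_d^{\mathrm{LHY}}$ errors into the surplus of the quadratic gain over $2E_d^{\mathrm{LHY}}$ yields the stated inequality.

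I expect the main obstacle to be exactly this last term when $\rho_z<\rho$. It cannot be absorbed by the kinetic energy, which supplies only $\sim\ell^{-2}=\rho\widehat g(0)K_\ell^{-2}$ per excitation against a cost $\sim\rho\widehat g(0)$; nor by the crude bound $n_+\le N$, which would collapse the leading term from $\tfrac12\rho^2$ to $\tfrac12\rho_z^2<\tfrac12\rho^2$ and destroy the estimate. The resolution hinges on the a priori cut-off $n_+^L\le2\mathcal M$ combined with permitting the harmless $n_+^H$-error on the right-hand side. A secondary point to get right is that $\varepsilon_+^2$ must be \emph{strictly} larger than $\f$ — merely $\varepsilon_+^2\ge\f$ does not suffice, since $\Ibog>\tfrac12$ — for the quadratic gain to beat $2E_d^{\mathrm{LHY}}$; this is precisely why $\varepsilon_+$ is defined as a maximum containing the factor $K_\ell^2K_L^{-1}$.
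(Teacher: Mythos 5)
Your overall strategy is the same as the paper's: give up all precision, let the quadratic gain $\tfrac12(\rho-\rho_z)^2\lvert\Lambda\rvert\widehat g(0)\geq\tfrac12\rho^2\varepsilon_+^2\lvert\Lambda\rvert\widehat g(0)$ pay for every error, control $\mathcal Q_3^{\rm soft}(z)$ by Lemma~\ref{lemma:Q3} with a small fraction of kinetic energy, and split $(\rho_z-\rho)n_+$ into an $n_+^L$-part (controlled through $\mathcal M$) and the tolerated $n_+^H$-part. However, your execution has genuine gaps, and the first is fatal as written. You diagonalize $\mathcal Q(z)$ via Proposition~\ref{prop:BogDiag} ``for every $z$''. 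The algebraic identity of Theorem~\ref{thm:bogdiag} does extend, but the quantitative content of Proposition~\ref{prop:BogDiag} does not: the theorem is a per-$z$ bound and $z$ ranges over all of $\mathbb{C}$, so $\rho_z$ can be arbitrarily large, and in that regime the error bound \eqref{eq:R1bound} reads (in $3$D) $C\lvert\Lambda\rvert\rho_z^2 a(\rho_z a^3)^{1/2}\log(\rho_z)K_\ell^{-1}$, i.e.\ it grows like $\rho_z^{5/2}\log\rho_z$ --- not like $\rho_z^2$ as you assert --- while the available gain $(\rho-\rho_z)^2\lvert\Lambda\rvert\widehat g(0)$ grows only like $\rho_z^2$. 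Once $\rho_z a^3$ is of order one or larger, the error swamps the gain. Similarly, Lemma~\ref{Lem:alphak}, which you invoke to compare $b_k^\dagger b_k$ with $a_k^\dagger a_k$ on $\mathcal P_H$, is stated only for $\lvert\rho_z-\rho\rvert\leq\tfrac12\rho$; for large $\rho_z$ one has $\alpha_k$ close to $1$ on part of $\mathcal P_H$ and the comparison degenerates. This is precisely why the paper does \emph{not} use Proposition~\ref{prop:BogDiag} here: it first splits off the fraction $\varepsilon k^2$ of the kinetic energy (to feed Lemma~\ref{lemma:Q3}), then diagonalizes the remaining quadratic form $\widetilde{\mathcal K}^{\rm Bog}$ from scratch, drops the positive diagonal part, and estimates the resulting scalar integral with Lemma~\ref{lem:Bog_int_parameters}, whose error terms (note the $\min$ in its statement, applied with $K_2=\rho_z$, $K_1=\tfrac12\rho_z$) are at most quadratic in $\rho_z$, uniformly in $z$.

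Two further steps fail. First, supplying $\tfrac{\varepsilon}{2}\sum_{k\in\mathcal P_H}k^2a_k^\dagger a_k$ ``from a fraction of $\mathcal K^{\rm diag}$'' is not a one-line comparison: inverting $a_k=(b_k-\alpha_k b_{-k}^\dagger)/\sqrt{1-\alpha_k^2}$ produces off-diagonal terms $b_k^\dagger b_{-k}^\dagger+b_{-k}b_k$ and normal-ordering constants which must themselves be estimated, and whose control again breaks down for large $\rho_z$; the paper's order of operations (kinetic splitting \emph{before} diagonalization) is exactly what makes this issue disappear. Second, your claim that $2\mathcal M\rho\widehat g(0)\lVert\Phi(z)\rVert^2$ is $o^{\rm LHY}_d$ ``by the smallness of $\mathcal M/(N\f)$'' is false: condition \eqref{eq:boundonM1} forces $\mathcal M\gg\ell^2\rho^2\lvert\Lambda\rvert\widehat g(0)\f=K_\ell^2 N\f$, so $\mathcal M/(N\f)\gg K_\ell^2\gg1$ and this term is much \emph{larger} than the LHY precision. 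It is not negligible; in the paper it is absorbed into the quadratic gain, using \eqref{eq:boundonM2} in the form $\mathcal M/N\ll(K_\ell/K_L)^4\ll K_\ell^4K_L^{-2}\leq\varepsilon_+^2$. This absorption mechanism --- not the comparison of $\tfrac14$ with $\Ibog$ that you emphasize as the ``secondary point'' --- is the main reason the definition \eqref{eq:defe+} of $\varepsilon_+$ carries the factor $K_\ell^2K_L^{-1}$: several error terms in this regime (the $\mathcal M$-term, the kinetic fraction $\varepsilon\rho_z^2$, the $\mathcal Q_3^{\rm soft}$ remainder) are far above LHY order and can only be beaten by $(\rho-\rho_z)^2$.
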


\begin{proof}
We start by proving the following lower bound 
\begin{align}\label{eq:intermediate_rough_lb}
&\langle \mathcal K(z) + \mathcal{Q}_3^{\text{soft}}\rangle_{\Phi(z)}  \nonumber \\
&\geq|\Lambda| \widehat{g}(0) \Big( \frac{1}{2}\rho^2_z+ \rho^2 - \rho \rho_z- C K_{\ell}^2 K_L^{-1} ( \rho \rho_z + \rho_z^2+ \rho^2 )
- C \rho^2   \f \Big) \lVert \Phi(z)\rVert^2\nonumber \\
&\quad-C\rho \widehat{g}(0) \langle n_+^H\rangle_{\Phi(z)} .
\end{align}
We use Lemma~\ref{lemma:Q3}. Subtracting a small part of the kinetic energy from $\mathcal K(z)$, we get a bound on $\mathcal Q_3^{\rm{soft}}(z)$,
\begin{align}
 \frac{\varepsilon}{2\pi}\Big\langle \sum_{k\in\mathcal P_H }k^{2}a_{k}^{\dagger}a_{k}+\mathcal Q_3^{\rm{soft}}(z) \Big\rangle_{\Phi(z)} &\geq - C\vert \Lambda \vert  \varepsilon^{-1} \rho \rho_z \widehat g(0) \frac{K_{\ell}^2}{ K_{H}^{2}}\frac{\langle n_+^L\rangle_{\Phi(z)}}{N} K_{L}^{d}
\nonumber \\
&\geq -C\vert \Lambda \vert \varepsilon^{-1} \rho \rho_z \widehat{g}(0) \frac{K_{\ell}^6}{K_L^3 }\lVert \Phi(z) \rVert^2, \label{eq:Q3estim_rhofar}
\end{align}
where we used \eqref{eq:numberPhiPsi} and the assumption on $\Psi$ to have $\langle n_+^L \rangle_{\Phi(z)} \leq C \mathcal{M} \lVert \Phi(z)\rVert^2$ and the relations between the parameters. Choosing 
\begin{equation}\label{eq:choice_of_eps}
\varepsilon = \frac{K_{\ell}^4}{K_L^2} \ll 1,
\end{equation}
this term can be absorbed in the $K_{\ell}^2 K_L^{-1}$ term in \eqref{eq:intermediate_rough_lb}.

 Subtracting $\varepsilon/2 \sum k^2 a^{\dagger}_k a_k$ from $\mathcal K^{\text{Bog}}$, for $\varepsilon \ll 1$, 
 this is turned into
\begin{equation}\label{KBogtilde}
\widetilde{\mathcal{K}}^{\text{Bog}} =  \frac{1}{2} \sum_{k \neq 0} \widetilde{\mathcal{A}}_k  \big( a_k^\dagger a_k + a_{-k}^\dagger a_{-k} \big)
  + \frac{1}{2} \sum_{k \neq 0} \mathcal{B}_k \big( a_k^\dagger a_{-k}^\dagger + a_k a_{-k}\big),
\end{equation}
where
\begin{equation}
\widetilde{\mathcal{A}}_k:= (1-\varepsilon)k^2  + \rho_z \widehat{g}_k.
\end{equation}

The diagonalization procedure in Proposition~\ref{prop:BogDiag} can be adapted with the modified kinetic energy, and we find 
\begin{align}
\widetilde{\mathcal{K}}^{\text{Bog}} &\geq -\frac{1}{2} \sum_{k \neq 0} \Big(\widetilde{\mathcal{A}}_k - \sqrt{\widetilde{\mathcal{A}}_k^2 -\mathcal{B}_k^2} \Big)   \nonumber\\
&\geq -\frac{|\Lambda|}{2(2\pi)^2} \int_{\mathbb{R}^d} \Big(\widetilde{\mathcal{A}}_k - \sqrt{\widetilde{\mathcal{A}}_k^2 -\mathcal{B}_k^2} \Big) \mathrm{d}k  + o^{\text{LHY}}_d, 
\end{align}
where we approximated the series by the integral obtaining a small error absorbed in the last term. Since 
\begin{equation}
\widetilde{\mathcal{A}}_k \geq (1-\varepsilon) \Big[|k| - \sqrt{\rho \widehat{g}(0)} \Big]_+^2 + \frac{1}{2}\rho_z \widehat{g}(0),
\end{equation}
we satisfy the assumptions of Lemma~\ref{lem:Bog_int_parameters}, with $\kappa = (1-\varepsilon), K = \sqrt{\rho \widehat{g}(0)}$, $K_1 =  \frac{1}{2}\rho_z$, $K_2 = \rho_z$, and therefore we get the estimate 
\begin{align} 
&\frac{1}{2}\rho_z^2|\Lambda|\widehat{g \omega}(0) -\frac{|\Lambda|}{2(2\pi)^2} \int_{\mathbb{R}^d} \Big(\widetilde{\mathcal{A}}_k - \sqrt{\widetilde{\mathcal{A}}_k^2 -\mathcal{B}_k^2} \Big) \mathrm{d}k \nonumber \\
&\geq -C \varepsilon\rho_z^2 |\Lambda| \widehat{g \omega}(0)- C \rho \rho_z |\Lambda| \widehat{g}(0) \f \nonumber\\
&\quad - C \rho_z^2 |\Lambda| \widehat{g}(0)(1-\varepsilon)^{-1}(\f + R^2 \ell_{\delta}^{-2}\one_{d,2} ) +  o^{\text{LHY}}_d \nonumber \\
&\geq - C \rho_z^2 |\Lambda| \widehat{g}(0) (\varepsilon + R^2 \ell_{\delta}^{-2}\one_{d,2} + \f)- C \rho^2 |\Lambda| \widehat{g}(0) \f, \label{eq:int_estimate_rhofar}
\end{align}
where we reconstructed $\widehat{g\omega}(0)$ obtaining an error reabsorbed in the first term of the third line, and we used a Cauchy-Schwarz inequality on the second term in the second line.
Thanks to the choice of $\varepsilon$ made in \eqref{eq:choice_of_eps} and the relations between the parameters, we have that $\varepsilon$ is the dominant term in the first addend, and it can be reabsorbed in the $K_{\ell}^2 K_L^{-1}$ term in \eqref{eq:intermediate_rough_lb}, while the second addend is dominated by error term in \eqref{eq:intermediate_rough_lb}.

We bound by zero the positive terms in the quadratic elements in creation and annihilation operators 
\begin{align}
\langle(\rho_z-\rho) n_+  \widehat g(0) + \mathcal Q_2^{\rm{ex}}(z) \rangle_{\Phi(z)} &\geq -\rho \widehat{g}(0)\langle n_+ \rangle_{\Phi(z)} \nonumber \\
&\geq -C \rho \widehat{g}(0) \big( \mathcal{M}\lVert \Phi(z) \rVert^2  + \langle n_+^H\rangle_{\Phi(z)}\big),
\label{ine:Q2} 
\end{align}
where we used the simple bound $n_+ \leq C (n_+^L + n_+^H)$ and \eqref{eq:numberPhiPsi}.
The first term, thanks to \eqref{eq:boundonM2}, contributes to the $K_{\ell}^2 K_L^{-1}$ terms in \eqref{eq:intermediate_rough_lb}, and the last term to the relative $n_+^H$ term in \eqref{eq:intermediate_rough_lb}.

Collecting the inequalities \eqref{eq:Q3estim_rhofar},  \eqref{eq:int_estimate_rhofar} and \eqref{ine:Q2}, we deduce the lower bound in \eqref{eq:intermediate_rough_lb}.

By the simple algebraic equivalence 
\begin{equation}
\frac{1}{2} \rho_z^2 +\rho^2 - \rho \rho_z= \frac{1}{2}(\rho - \rho_z)^2 +\frac{1}{2}\rho^2,
\end{equation}
and using that the coefficients of the $K_{\ell}^2 K_L^{-1}$ in \eqref{eq:intermediate_rough_lb} can be bounded by 
\begin{equation}
C (\rho-\rho_z)^2 \widehat{g}(0)|\Lambda| + C \rho^2 \widehat{g}(0)|\Lambda|, 
\end{equation}
we get the bound
\begin{align}
\eqref{eq:intermediate_rough_lb} &\geq \Big(\frac{1}{2} \rho^2 |\Lambda|\widehat{g}(0) + \frac{1}{2} (\rho - \rho_z)^2|\Lambda| \widehat{g}(0) (1- C K_{\ell}^2 K_L^{-1}) \nonumber\\
&\qquad - C \rho^2 |\Lambda|\widehat{g}(0) K_{\ell}^2 K_L^{-1} - C\rho^2|\Lambda|\widehat{g}(0) \f\Big) \| \Phi(z) \|^2 -C \rho \widehat{g}(0) \langle n_+^H\rangle_{\Phi(z) } \nonumber\\
&\geq \Big(\frac{1}{2} \rho^2 |\Lambda|\widehat{g}(0) + \frac{1}{4} (\rho - \rho_z)^2|\Lambda| \widehat{g}(0) - C \rho^2 |\Lambda|\widehat{g}(0)K_{\ell}^2 K_L^{-1}\nonumber \\
&\qquad - C\rho^2 |\Lambda|\widehat{g}(0) \f \Big)\| \Phi(z) \|^2 - C \rho \widehat{g}(0) \langle n_+^H\rangle_{\Phi(z) } , 
\end{align}
and we can conclude using the assumption $|\rho - \rho_z| \geq \rho \varepsilon_+$, where $\varepsilon_+$ is chosen in order to dominate the $K_{\ell}^2 K_L^{-1}$terms and the error and to have that the second term in the previous expression positive and bigger than the Lee-Huang-Yang precision, to obtain the desired bound. 
\end{proof}

\section{A priori Bounds for the Number of Excited Bosons}
\label{app:C}
In this section we bound the number of excitations for states of suitably low energy. 

\begin{theorem}\label{thm:apriori_n+}
                         Assume the relations between the parameters in \ref{app:parameters} and that $\rho a^d$ is small enough. There exists a $C_B>0$ such that, if $\Psi \in L^2_{\text{sym}}(\Lambda^N)$ is a normalized state satisfying 
                        \begin{equation}\label{eq:assumption_lowE_psi}
                                                    \langle \mathcal{H}\rangle_{\Psi} \leq \frac{1}{2} \rho^2 \vert \Lambda \vert\widehat{g}(0)(1  + C_B \f),
                          \end{equation}
                          then there exists a $C>0$ such that      
                          \begin{equation}\label{eq:condensationestimate}
                                                    \langle n_+\rangle_{\Psi} \leq C \begin{dcases}
                                                                               C_B N  K_{\ell}^2   \widehat{g}(0), &d=2,\\
                                                                               C_B N K_{\ell}^2  \sqrt{\rho a^3}, &d=3. 
                                                    \end{dcases}
                          \end{equation}
                          \begin{equation}\label{bound_n_+H}
                          \langle n_+^{H}\rangle_{\Psi} \leq C \begin{dcases}
                                                    C_B N\, K_L^{-2} K_{\ell}^2   \widehat{g}(0) , &d=2,\\
                                                    C_B N\,K_L^{-2} K_{\ell}^2  \sqrt{\rho a^3}, &d=3. 
                          \end{dcases}
\end{equation}
\begin{equation}
\langle \mathcal Q_4^{\rm{ren}} \rangle_\Psi \leq C_B \rho^2  \vert \Lambda \vert \widehat{g}(0) \f. \label{eq:estimateprioriQ4}
\end{equation}
\end{theorem}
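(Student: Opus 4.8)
The plan is to deduce all three estimates from a single operator lower bound, valid on the whole $N$-particle space,
\[
\mathcal H \ \geq\ \tfrac12\rho^2|\Lambda|\widehat g(0)+E_d^{\rm{LHY}}+c\,\mathcal K^{\rm{diag}}+\tfrac12\mathcal Q_4^{\rm{ren}}-o_d^{\rm{LHY}},
\]
for some fixed $c>0$. Granting this, \eqref{eq:estimateprioriQ4} is immediate: keep $\tfrac12\mathcal Q_4^{\rm{ren}}$ on the right and insert the hypothesis \eqref{eq:assumption_lowE_psi}, so that $\langle\mathcal Q_4^{\rm{ren}}\rangle_\Psi\lesssim C_B\rho^2|\Lambda|\widehat g(0)\f$; the same computation, using $E_d^{\rm{LHY}}>0$, yields the key bound $\langle\mathcal K^{\rm{diag}}\rangle_\Psi\lesssim C_B\rho^2|\Lambda|\widehat g(0)\f$, from which the two number estimates will follow.

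To obtain the displayed lower bound I would first apply the $c$-number substitution (Proposition~\ref{prop:projonz}), so that $\langle\mathcal H\rangle_\Psi=\tfrac1\pi\int_{\mathbb C}\langle\mathcal K(z)\rangle_{\Phi(z)}\,\dd z+\langle\mathcal Q_3^{\rm{ren}}+\mathcal Q_4^{\rm{ren}}+\mathcal R_0\rangle_\Psi$ with $\mathcal R_0$ negligible by \eqref{eq:R0_estimate}. For $z$ with $|\rho_z-\rho|\le\rho\varepsilon_+$ I diagonalise $\mathcal Q(z)$ by Proposition~\ref{prop:BogDiag}, isolating $E_d^{\rm{LHY}}(\rho_z)$ and the positive $\mathcal K^{\rm{diag}}$, while $\mathcal Q_2^{\rm{ex}}\ge0$ is dropped and the completed-square density term $\tfrac12(\rho-\rho_z)^2|\Lambda|\widehat g(0)$ is kept positive. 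For $|\rho_z-\rho|\ge\rho\varepsilon_+$ I would instead use the operator form of the Bogoliubov bound (drop the positive $b_k^\dagger b_k$ in the diagonalisation identity) together with this density gap, whose strength is fixed precisely by the definition \eqref{eq:defe+} of $\varepsilon_+$. It is essential here that I do not invoke Theorem~\ref{thm:rhofar} or Proposition~\ref{prop:Q3loc}: both presuppose the localisation $n_+^L\le2\mathcal M$, which in turn rests (through Proposition~\ref{thm:excitationrestriction}) on the present estimate, so the argument must remain free of that restriction.

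The hard part is $\mathcal Q_3^{\rm{ren}}$, whose crude Cauchy--Schwarz bound costs $\rho\widehat g(0)\,n_+=K_\ell^2\ell^{-2}n_+$, a factor $K_\ell^2$ beyond what the kinetic gap can pay. I would exploit the soft-pair structure through Lemmas~\ref{lem:Q3loc1} and~\ref{lem:Q3loc2}, which hold for \emph{every} state and reduce $\mathcal Q_3^{\rm{ren}}$ to $\mathcal Q_3^{\rm{soft}}$ at the cost of $\tfrac14\mathcal Q_4^{\rm{ren}}$ and errors proportional to $\rho\widehat g(0)(\varepsilon_1 n_++(\varepsilon_1^{-1}+1)n_+^H)$ and $\rho\widehat g(0)(\varepsilon_2 K_H^d n_++\varepsilon_2^{-1}n_+n_+^L/N)$; then Lemma~\ref{lemma:Q3}, which needs only $n_+\le N$, completes the square on $\mathcal Q_3^{\rm{soft}}$ against a slice of the high-momentum kinetic energy, leaving an $o_d^{\rm{LHY}}$ remainder once $n_+^L\le N$ is inserted. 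The surviving number-operator errors must then be re-expressed through $\mathcal K^{\rm{diag}}$ (next paragraph) and absorbed into the retained fraction $c\,\mathcal K^{\rm{diag}}$ and the gap. This absorption, together with the $(\rho_z-\rho)n_+$ term from $\mathcal K(z)$, is where the full parameter hierarchy of~\ref{app:parameters} is needed, and I expect the bookkeeping—likely closed by a short self-consistent (bootstrap) inequality in $\langle n_+\rangle$ and $\langle n_+^H\rangle$—to be the main obstacle.

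Finally, to turn $\langle\mathcal K^{\rm{diag}}\rangle_\Psi\lesssim C_B\rho^2|\Lambda|\widehat g(0)\f$ into \eqref{eq:condensationestimate} and \eqref{bound_n_+H}, I would use $a_k=(b_k-\alpha_k b_{-k}^\dagger)/\sqrt{1-\alpha_k^2}$ to write, after a Cauchy--Schwarz on the cross terms,
\[
n_+=\sum_{k\neq0}a_k^\dagger a_k\ \le\ C\sum_{k\neq0}\frac{1}{1-\alpha_k^2}\big(b_k^\dagger b_k+\alpha_k^2\big),
\]
and then the elementary identity $\dfrac{1}{(1-\alpha_k^2)\mathcal D_k}=\dfrac{\mathcal A_k+\mathcal D_k}{2\mathcal D_k^2}\le\dfrac{1}{k^2}$ (with $\mathcal A_k,\mathcal D_k$ as in Proposition~\ref{prop:BogDiag}), together with $k^2\ge(2\pi/\ell)^2$, gives
\[
\langle n_+\rangle_\Psi\ \le\ C\ell^2\langle\mathcal K^{\rm{diag}}\rangle_\Psi+C\sum_{k\neq0}\frac{\alpha_k^2}{1-\alpha_k^2}.
\]
The first term is $\lesssim C_B N K_\ell^2\f$ because $\ell^2\rho\widehat g(0)=K_\ell^2$, and the second, the condensate depletion, is of lower order by the explicit control on $\alpha_k$ (Lemma~\ref{Lem:alphak}); this yields \eqref{eq:condensationestimate}. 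Restricting the same computation to $|k|>K_L\ell^{-1}$, where now $1/k^2\le\ell^2 K_L^{-2}$, produces the extra factor $K_L^{-2}$ and, after checking that the depletion tail is again subleading via the parameter relations, gives \eqref{bound_n_+H}.
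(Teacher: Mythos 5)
Your plan founders on the very circularity you flag in your second paragraph, because it infects not only Proposition~\ref{prop:Q3loc} and Theorem~\ref{thm:rhofar} but also the error terms of the lemmas you do allow yourself to use. Lemmas~\ref{lem:Q3loc1}, \ref{lem:Q3loc2} and~\ref{lemma:Q3} hold for every state, but their errors are proportional to $\langle n_+\rangle_\Psi$, $\langle n_+^H\rangle_\Psi$, $\langle n_+ n_+^L\rangle_\Psi/N$ and $\langle n_+^L\rangle_\Phi/N$, and without the restriction $n_+^L\leq\mathcal M$ (unavailable here, since Proposition~\ref{thm:excitationrestriction} rests on the present theorem) these can only be estimated via $n_+, n_+^L\leq N$. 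That is quantitatively fatal. Your own conversion shows that $c\,\mathcal K^{\rm{diag}}$ controls at most $c\,\ell^{-2}n_+$ (modulo depletion), since for low momenta $\mathcal D_k(1-\alpha_k^2)\geq k^2$ can be as small as $(2\pi/\ell)^2$. Hence in Lemma~\ref{lem:Q3loc2} the term $\rho\widehat g(0)\,\varepsilon_2 K_H^d\langle n_+\rangle = K_\ell^2 K_H^d\varepsilon_2\,\ell^{-2}\langle n_+\rangle$ forces $\varepsilon_2\ll K_\ell^{-2}K_H^{-d}$, while the companion term, bounded using $n_+\leq N$ by $\rho\widehat g(0)\,\varepsilon_2^{-1}\langle n_+^L\rangle = K_\ell^2\varepsilon_2^{-1}\ell^{-2}\langle n_+^L\rangle$, forces $\varepsilon_2\gg K_\ell^2$ even if one inserts the target bound for $\langle n_+^L\rangle$ in a bootstrap; these two requirements are contradictory, so no self-consistent inequality can close. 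Likewise, in Lemma~\ref{lemma:Q3} the substitution $n_+^L\leq N$ leaves an error of order $\varepsilon^{-1}\rho^2|\Lambda|\widehat g(0)\,K_\ell^2 K_L^d K_H^{-2}$, which with the choice \eqref{eq:para.choice} is $\gtrsim \rho^2|\Lambda|\widehat g(0)\,K_\ell^6$ in $d=3$ (and $\gtrsim \rho^2|\Lambda|\widehat g(0)K_\ell^{2-2\varepsilon}$ in $d=2$) --- larger than the \emph{leading} term, not $o_d^{\rm{LHY}}$ as you claim. The structural reason is simple: on the box $\Lambda$ the kinetic gap per excitation is only $\ell^{-2}$, while any treatment of $\mathcal Q_3^{\rm{ren}}$ that does not use the $\mathcal M$-localization costs at least $\rho\widehat g(0)\,n_+ = K_\ell^2\ell^{-2}n_+$; no rearrangement of the quantities appearing in your master inequality can bridge that factor $K_\ell^2$.

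This is precisely why the paper's proof abandons the box $\Lambda$ for this theorem. It localizes the kinetic energy into Gross--Pitaevskii boxes of side $\ell_{\rm{GP}}=(\rho\widehat g(0))^{-1/2}=\ell/K_\ell$ (Lemma~\ref{lem:loc}), retaining a global gap $b\,\ell^{-2}n_+$ on $\Lambda$, and imports from \cite{FourBEC,FS,2DLHY} the leading-order lower bound for the GP-box Hamiltonian: inside a GP box the spectral gap $\ell_{\rm{GP}}^{-2}=\rho\widehat g(0)$ is exactly large enough to pay the crude Cauchy--Schwarz cost of $\mathcal Q_3$, so no soft-pair analysis (and hence no $\mathcal M$-restriction) is needed there. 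Comparing with \eqref{eq:assumption_lowE_psi} then yields \eqref{eq:condensationestimate}, and the bounds \eqref{bound_n_+H} and \eqref{eq:estimateprioriQ4} follow from a second, rougher second-quantized argument on $\Lambda$ (gap $bK_L^2\ell^{-2}n_+^H$, the crude bound \eqref{n+H_Q3Q4}, and a rough Bogoliubov diagonalization) that consumes the already-proven $n_+$ estimate. To salvage your approach you would need an external input of this kind; the ingredients internal to the paper cannot produce it. For what it is worth, your closing computation is sound: the identity $\frac{1}{(1-\alpha_k^2)\mathcal D_k}=\frac{\mathcal A_k+\mathcal D_k}{2\mathcal D_k^2}\leq k^{-2}$ is correct, and converting a bound on $\langle\mathcal K^{\rm{diag}}\rangle$ into \eqref{eq:condensationestimate} and \eqref{bound_n_+H} would work --- though note that Lemma~\ref{Lem:alphak} only controls $\alpha_k$ for $|k|\geq K_H\ell^{-1}$, so the low-momentum depletion sum would still need a separate estimate.
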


In order to prove the Theorem~\ref{thm:apriori_n+}, we need to prove a lower bound on $\mathcal{H}$ localizing on boxes $B$ with Gross-Pitaevskii length scale $\ell_{\text{GP}}\ll \ell$, where
\begin{equation}
\ell_{\text{GP}} := \rho^{-1/2} \widehat{g}(0)^{-1/2}.
\end{equation}
We introduce the small box centered at $u \in \Lambda$ to be 
\begin{equation}
	B_u =    u + \Big[-\frac{\ell_{\text{GP}}}{2} , \frac{\ell_{\text{GP}}}{2}  \Big]^d.
\end{equation}
The associated localization functions are
\begin{equation}\label{eq:Blocfunction}
	\chi_{B_u}(x) := \chi \left( \frac{x - u}{\ell_{\text{GP}}} \right),
\end{equation}
where $\chi \in C^{\infty}(\mathbb{R}^d)$, $0 \leq \chi$, $\mathrm{supp }\, \chi \subseteq B_{\frac{1}{2}}(0)$, $\|\chi\|_{L^2} = 1$. 
We emphasize that 
\begin{equation}
	\int_{\Lambda}\int_{B_u} |\chi_{B_u}|^2 \dd x \dd u = |\Lambda|.
\end{equation}
We also introduce the projectors on the condensate in the small boxes $P_{B_u}$ and their complements $Q_{B_u}$,
\begin{equation}
P_{B_u} := \frac{1}{|B_u|} |\one_{B_u}\rangle \langle \one_{B_u}|, \qquad Q_{B_u}  := \one_{B_u} - P_{B_u}.
\end{equation}
In order to construct the small box Hamiltonian, we introduce the localized potentials 
\begin{align}\label{eq:SF_3.5}
	v^{B}(x) &:= \frac{v(x)}{\chi*\chi(x/\ell_{\text{GP}})},  &w_{B_u}(x,y) := \chi_{B_u} (x) v^B(x-y)\chi_{B_u}(y), \\
	v_1^{B}(x) &:= \frac{g(x)}{\chi*\chi(x/\ell_{\text{GP}})},  &w_{1,B_u}(x,y) := \chi_{B_u} (x) v_1^B(x-y)\chi_{B_u}(y),  \\
	v_2^{B}(x) &:= \frac{g(x)(1+\omega(x))}{\chi*\chi(x/\ell_{\text{GP}})},  &w_{2,B_u}(x,y) := \chi_{B_u} (x) v_2^B(x-y)\chi_{B_u}(y),
\end{align} 
where we see that $w_{B}, w_{1,B}, w_{2,B}$ are localized versions of $v, g, (1+ \omega) g$, respectively.

For the kinetic energy, the localization to the small boxes is contained in the lemma below.

\begin{lemma}\label{lem:loc}
There exists a constant $b>0$ such that, for $s>0$ small enough, the periodic Laplacian on $\Lambda$ satisfies
\begin{equation}
- \Delta \geq \vert B \vert^{-1} \int_{\Lambda} \mathcal T_u \,\dd u + \frac{b}{\ell^2} Q_{\Lambda},
\end{equation}
where $Q_{\Lambda}$ is the projector outside the condensate of the box $\Lambda$, and where the new kinetic energy has the form
\begin{equation}
\mathcal T_u := Q_{B_u} \chi_{B_u} \Big( -\Delta_{\R^d} - s^{-2}\ell_{\rm{GP}}^{-2} \Big)_+ \chi_{B_u} Q_{B_u} + b \ell^{-2}_{\rm{GP}} Q_{B_u}.
\end{equation}
\end{lemma}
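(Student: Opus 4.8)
The plan is to combine a continuous IMS localization onto the small boxes $B_u$ with a momentum-cutoff splitting of the one-box kinetic energy, and to extract the weak torus gap $\tfrac{b}{\ell^2}Q_\Lambda$ from a reserved fraction of the global kinetic energy. Since $\|\chi\|_{L^2}=1$ and $\chi_{B_u}(x)=\chi((x-u)/\ell_{\text{GP}})$, translation invariance upgrades the stated relation $\int_\Lambda\int_{B_u}|\chi_{B_u}|^2\,\dd x\,\dd u=|\Lambda|$ to the pointwise identity $|B|^{-1}\int_\Lambda\chi_{B_u}(x)^2\,\dd u=1$, so $\{|B|^{-1/2}\chi_{B_u}\}_{u\in\Lambda}$ is a continuous partition of unity. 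As each $\chi_{B_u}$ is supported in a ball of radius $\ell_{\text{GP}}/2\ll\ell$, the periodic Laplacian coincides with $-\Delta_{\R^d}$ on its range, and the standard IMS computation gives, as operators on $L^2(\Lambda)$,
\begin{equation*}
-\Delta = |B|^{-1}\int_\Lambda \chi_{B_u}(-\Delta_{\R^d})\chi_{B_u}\,\dd u \;-\; \ell_{\text{GP}}^{-2}\|\nabla\chi\|_{L^2}^2 ,
\end{equation*}
the subtracted error being a constant multiple of the identity of size $\ell_{\text{GP}}^{-2}$.

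For a fixed $u$, writing $c:=s^{-2}\ell_{\text{GP}}^{-2}$, I would use the exact spectral decomposition $-\Delta_{\R^d}=(-\Delta_{\R^d}-c)_+ + \min(-\Delta_{\R^d},c)$ to split $\chi_{B_u}(-\Delta_{\R^d})\chi_{B_u}$ into a high- and a low-momentum piece. For the high-momentum piece I insert $\one_{B_u}=P_{B_u}+Q_{B_u}$ (legitimate since $\chi_{B_u}$ lives in $B_u$) and keep only $Q_{B_u}\chi_{B_u}(-\Delta_{\R^d}-c)_+\chi_{B_u}Q_{B_u}$, the first term of $\mathcal T_u$; the discarded $P_{B_u}$-terms are controlled because $P_{B_u}$ is the rank-one projection onto $\one_{B_u}$ and $\|(-\Delta_{\R^d}-c)_+^{1/2}\chi_{B_u}\|$ is small for $s$ small, the bump $\chi_{B_u}$ having momentum scale $\ell_{\text{GP}}^{-1}\ll\sqrt c$. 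For the low-momentum piece I would prove a weighted Poincaré/Neumann gap: any $\psi$ obeys $\langle\psi,\chi_{B_u}\min(-\Delta_{\R^d},c)\chi_{B_u}\psi\rangle\geq b\,\ell_{\text{GP}}^{-2}\|Q_{B_u}\psi\|^2$ after discarding the large surplus (the first Neumann mode on scale $\ell_{\text{GP}}$ costs $\sim\ell_{\text{GP}}^{-2}$, and $\min(\cdot,c)$ does not cut it since $c\gg\ell_{\text{GP}}^{-2}$), which produces the second term $b\ell_{\text{GP}}^{-2}Q_{B_u}$, while the action of this same piece on the box-condensate direction reproduces the gradient energy $\ell_{\text{GP}}^{-2}\|\nabla\chi\|^2$ that, once integrated over $u$, matches and cancels the IMS error. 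Integrating the resulting one-box bound over $u$ then yields $|B|^{-1}\int_\Lambda\mathcal T_u\,\dd u$.

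To produce the additional global term $\tfrac{b}{\ell^2}Q_\Lambda$ I would reserve, before localizing, a small fraction $\kappa\sim(\ell_{\text{GP}}/\ell)^2$ of the kinetic energy: since the lowest nonzero eigenvalue of $-\Delta$ on the torus is $(2\pi/\ell)^2$, one has $\kappa(-\Delta)\geq\tfrac{b}{\ell^2}Q_\Lambda$, and the remaining $(1-\kappa)(-\Delta)$ is fed into the machinery above. The point is that $\kappa$ is negligible against unity, so the loss of $\kappa(-\Delta_{\R^d}-c)_+$ in the high-momentum channel is reabsorbed into the surplus thrown away in the Poincaré step, and the constants $s,b$ defining $\mathcal T_u$ are unaffected at leading order.

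I expect the main obstacle to be exactly this bookkeeping: proving the weighted Poincaré gap for $\chi_{B_u}\min(-\Delta_{\R^d},c)\chi_{B_u}$ with enough slack to simultaneously furnish $b\ell_{\text{GP}}^{-2}Q_{B_u}$, soak up the IMS gradient error $\ell_{\text{GP}}^{-2}\|\nabla\chi\|^2$ on the condensate direction, and carve out the weak torus gap $\tfrac{b}{\ell^2}Q_\Lambda$; and, in parallel, controlling the $P_{B_u}$–$Q_{B_u}$ cross terms in the high-momentum piece, which is precisely where the smallness of $s$ (through the rapid decay of $\hat\chi$ above $\sqrt c$) must be quantified. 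The routine parts are the IMS identity and the momentum decomposition; the delicate part is making all error absorptions consistent so that $\mathcal T_u$ appears with the stated coefficients.
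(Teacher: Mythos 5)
Your outline cannot be compared with an internal argument of the paper, because the paper gives none: the proof of Lemma~\ref{lem:loc} is a citation to \cite[Lemma 3.3]{FourBEC}. The structural fact that makes the statement tractable there is one your proposal never uses: the averaged operator $\mathcal T := \vert B \vert^{-1}\int_{\Lambda}\mathcal T_u\,\dd u$ commutes with torus translations (translating $u$ shifts $B_u$, and the $u$-integral over the torus is invariant), so both sides of the claimed inequality are Fourier multipliers and the lemma is equivalent to a pointwise symbol inequality $\vert k \vert^2 \geq M(k) + b\ell^{-2}\one_{\{k \neq 0\}}$, $k \in \Lambda^*$, with separate regimes $0 < \vert k \vert \lesssim \ell_{\rm GP}^{-1}$ and $\vert k\vert$ large. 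Some of your ingredients are sound and would reappear in such a proof (the continuous IMS identity with constant error $\ell_{\rm GP}^{-2}\Vert \nabla \chi \Vert_{L^2}^2$, the exact splitting $-\Delta_{\R^d} = (-\Delta_{\R^d}-c)_+ + \min(-\Delta_{\R^d},c)$ with $c = s^{-2}\ell_{\rm GP}^{-2}$, the control of the rank-one $P_{B_u}$ cross terms via the decay of $\widehat{\chi}$ above $\sqrt{c}$). But your box-by-box \emph{weighted Poincar\'e gap} is false as stated: since $\mathrm{supp}\,\chi_{B_u} \subseteq B^d(u, \ell_{\rm GP}/2)$ does not cover the cube $B_u$, any $\psi$ supported in $B_u \setminus B^d(u,\ell_{\rm GP}/2)$ and orthogonal to $\one_{B_u}$ satisfies $\chi_{B_u}\psi = 0$, hence $\langle \psi, \chi_{B_u}\min(-\Delta_{\R^d},c)\chi_{B_u}\psi\rangle = 0$ while $\Vert Q_{B_u}\psi \Vert = \Vert \psi \Vert > 0$. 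So no bound of the form $\chi_{B_u}\min(-\Delta_{\R^d},c)\chi_{B_u} \geq b\,\ell_{\rm GP}^{-2} Q_{B_u}$ can hold for a single $u$; the term $b\,\ell_{\rm GP}^{-2}Q_{B_u}$ in $\mathcal T_u$ is dominated only after averaging over $u$, which is precisely why the argument has to be run on the translation-invariant operator $\mathcal T$, not box by box.

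The second genuine gap is your extraction of $\frac{b}{\ell^2}Q_\Lambda$. With $\kappa \sim (\ell_{\rm GP}/\ell)^2 = K_\ell^{-2}$, the torus spectral gap gives only $\kappa(-\Delta) \geq (2\pi)^2 K_\ell^{-2}\ell^{-2}Q_\Lambda$, i.e. a ``constant'' $b \sim K_\ell^{-2}$ that degenerates as $K_\ell \to \infty$; this is weaker than the lemma and would also ruin its application, since $b$ enters the condensation estimates \eqref{eq:condensationestimate}--\eqref{bound_n_+H} of Theorem~\ref{thm:apriori_n+}, where a loss of $K_\ell^2$ is not affordable given the relations in \ref{app:parameters}. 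If instead you take $\kappa$ to be a universal constant, the leftover $(1-\kappa)(-\Delta)$ can no longer dominate the first term of $\mathcal T_u$, which carries coefficient exactly $1$: testing on a plane wave $e_k$ with $\vert k \vert$ large, the symbol of $Q_{B_u}\chi_{B_u}(-\Delta_{\R^d}-c)_+\chi_{B_u}Q_{B_u}$ averaged over $u$ is $\vert k \vert^2 - c + O(\ell_{\rm GP}^{-2})$, so one needs $\kappa \vert k \vert^2 \lesssim c$, which fails for $\vert k \vert^2 > c/\kappa$. Your proposed rescue --- reabsorbing the loss $\kappa(-\Delta_{\R^d}-c)_+$ into the surplus of the Poincar\'e step --- cannot work, because $\min(-\Delta_{\R^d},c) \leq c$ makes that surplus $O(s^{-2}\ell_{\rm GP}^{-2})$ uniformly in $k$, while $\kappa \vert k \vert^2$ is unbounded. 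The correct mechanism, visible only at the symbol level, is that the global gap is harvested exclusively at low momenta $0 < \vert k \vert \lesssim \ell_{\rm GP}^{-1}$, where the localized terms contribute essentially nothing and $\vert k \vert^2 \geq (2\pi/\ell)^2$ leaves room for $b\ell^{-2}$, whereas at high momenta the needed slack comes from the subtraction of $c$ inside $(\,\cdot\,-c)_+$ (this is where the smallness of $s$ enters), not from a multiplicative fraction of the kinetic energy.
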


\begin{proof}
The proof can be found in  \cite[Lemma 3.3]{FourBEC}.
\end{proof}

Since we do not know how the particles distribute in the boxes, we introduce a chemical potential $\rho_{\mu}$. We will impose $\rho_{\mu} = \rho$ to be coherent with the original density. In this way we can define the grand canonical large box Hamiltonian, on the sector with $n$ bosons, as  
\begin{equation}
	\mathcal{H}_{\Lambda}(\rho_{\mu})_n := \sum_{j=1}^n \Big(-\Delta_j  - \rho_{\mu} \int_{\mathbb{R}^d}  g(x_j-y)\mathrm{d}y\Big) + \sum_{i<j}^n v(x_i-x_j).
\end{equation}
The small-box Hamiltonian $\mathcal{H}_B$ which acts on $\mathscr{F}_s(L^2(B_u))$ is 
\begin{equation}
	\mathcal{H}_{B_u}(\rho_{\mu})_n := \sum_{j=1}^n \Big(\mathcal{T}_{j,u}  - \rho_{\mu} \int_{\mathbb{R}^d}  w_{1,B_u}(x_j,y)\mathrm{d}y\Big) + \sum_{i<j}^n w_{B_u}(x_i,x_j).
\end{equation}

Joining Lemma~\ref{lem:loc} and a direct calculation for the potential, we obtain the relation between the last two Hamiltonians in the theorem below.

\begin{theorem}
	\begin{equation}\label{eq:lowerlambda_by_B}
		\mathcal{H}_{\Lambda}(\rho_{\mu})_n \geq \sum_{j=1}^n \frac{b}{\ell^2} Q_{\Lambda,j} + \frac{1}{\lvert B \rvert}\int_{\Lambda} \, \mathcal{H}_{B_u}(\rho_{\mu})_n \mathrm{d} u.
	\end{equation}
\end{theorem}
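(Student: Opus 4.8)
The plan is to handle the kinetic and the potential parts of $\mathcal{H}_{\Lambda}(\rho_{\mu})_n$ separately. The kinetic energy is lowered to the small boxes by Lemma~\ref{lem:loc} (which is the only inequality used), while the two potential contributions are reproduced \emph{exactly} by averaging the localized potentials over the box centres $u \in \Lambda$.

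First I would apply Lemma~\ref{lem:loc} on the one-particle space to each variable $x_j$ and sum over $j$, obtaining
\begin{equation*}
\sum_{j=1}^n (-\Delta_j) \geq \sum_{j=1}^n \frac{b}{\ell^2} Q_{\Lambda,j} + \frac{1}{|B|} \int_{\Lambda} \sum_{j=1}^n \mathcal{T}_{j,u} \,\dd u .
\end{equation*}
This already isolates the spectral-gap term and the localized kinetic operators $\mathcal{T}_{j,u}$ that appear in $\mathcal{H}_{B_u}(\rho_{\mu})_n$.

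The core of the argument is a single averaging identity. Writing $\chi_{B_u}(x) = \chi((x-u)/\ell_{\rm{GP}})$ and substituting $u \mapsto x - \ell_{\rm{GP}} r$, one finds, using $|B| = \ell_{\rm{GP}}^d$ and the radial symmetry of $\chi$ (so that the autocorrelation $\int \chi(r)\chi(r-z)\,\dd r$ coincides with $(\chi*\chi)(z)$),
\begin{equation*}
\frac{1}{|B|}\int_{\Lambda} \chi_{B_u}(x)\chi_{B_u}(y)\,\dd u = (\chi*\chi)\!\left( \frac{x-y}{\ell_{\rm{GP}}} \right),
\end{equation*}
understood on the torus. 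Inserting this together with the definitions $v^B = v/(\chi*\chi)(\cdot/\ell_{\rm{GP}})$ and $v_1^B = g/(\chi*\chi)(\cdot/\ell_{\rm{GP}})$ cancels the convolution factor and yields the exact identities
\begin{equation*}
\frac{1}{|B|}\int_{\Lambda} w_{B_u}(x_i,x_j)\,\dd u = v(x_i - x_j), \qquad \frac{1}{|B|}\int_{\Lambda} \int_{\mathbb{R}^d} w_{1,B_u}(x_j,y)\,\dd y\,\dd u = \int_{\mathbb{R}^d} g(x_j - y)\,\dd y,
\end{equation*}
the second after an extra integration in $y$ and Fubini. Summing over $i<j$, respectively over $j$ with the factor $-\rho_{\mu}$, reproduces precisely the interaction and chemical-potential parts of $\mathcal{H}_{\Lambda}(\rho_{\mu})_n$.

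Combining the kinetic lower bound with these two identities, the right-hand side reorganizes into $\sum_{j} \frac{b}{\ell^2} Q_{\Lambda,j} + |B|^{-1}\int_{\Lambda} \mathcal{H}_{B_u}(\rho_{\mu})_n \,\dd u$, which is the claim. I expect the only delicate points to be the validity of the averaging identity on the torus, i.e.\ that the periodic wrap-around of the supports of $\chi_{B_u}$ introduces no spurious overlap, and the well-definedness of $v^B, v_1^B$, which requires $(\chi*\chi)(\cdot/\ell_{\rm{GP}})$ to stay bounded away from $0$ on $\mathrm{supp}\, v$. Both are guaranteed by the scale separation $R \ll \ell_{\rm{GP}} \ll \ell$ in the dilute regime, together with the normalization $\|\chi\|_{L^2} = 1$, which gives $(\chi*\chi)(0) = 1$.
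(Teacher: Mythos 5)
Your proposal is correct and is essentially the paper's own argument: the paper proves this theorem precisely by ``joining Lemma~\ref{lem:loc} and a direct calculation for the potential,'' which is exactly your combination of the per-particle kinetic localization with the sliding-average identity $\frac{1}{|B|}\int_{\Lambda}\chi_{B_u}(x)\chi_{B_u}(y)\,\dd u = (\chi*\chi)\big(\tfrac{x-y}{\ell_{\rm{GP}}}\big)$ that makes the potential terms reproduce exactly. Your added remarks on the torus wrap-around and on $(\chi*\chi)(\cdot/\ell_{\rm{GP}})$ being bounded away from zero on $\mathrm{supp}\,v$ (valid since $R \ll \ell_{\rm{GP}} \ll \ell$) correctly fill in the details the paper leaves implicit.
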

A lower bound for $\mathcal{H}_{B_u}$ gives a lower bound for $\mathcal{H}_{\Lambda}(\rho_{\mu})_n$ still conserving the contribution from the spectral gap. In the next proposition we give a lower bound for $\mathcal{H}_{B_u}$. The proof, that we omit, is identical to the one given in \cite{FourBEC} for the $3D$ case (see also \cite[Appendix B]{FS} and \cite[Appendix D]{2DLHY}).
\begin{proposition}
	Assume the conditions in \ref{app:parameters} are true, then there exists a constant $C_B>0$ such that, for sufficiently small values of $\rho_{\mu} a^d$,
	\begin{equation}
		\mathcal{H}_{B}(\rho_{\mu})_n \geq -\frac{1}{2}\rho_{\mu}^2 \lvert B \rvert\widehat{g}(0) - C_B \rho_{\mu}^2 \lvert B \rvert \widehat{g}(0) \lambda^{\rho_{\mu}}_d,  
	\end{equation}
	where $\lambda^{\rho_{\mu}}_d$ has the same expression as $\f$, with $\rho_{\mu}$ in place of $\rho$.
\end{proposition}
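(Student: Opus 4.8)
The plan is to reproduce, on the Gross--Pitaevskii scale $\ell_{\text{GP}}$, a simplified version of the strategy used in the bulk of the paper, keeping only the precision needed to make the remainder of the claimed LHY order $\lambda_d^{\rho_\mu}$ (the sharp constant is irrelevant here). The Hamiltonian $\mathcal H_{B_u}(\rho_\mu)_n$ is built from the gapped kinetic energy $\mathcal T_u$ of Lemma~\ref{lem:loc} and the box-localized potentials $w_{B_u}, w_{1,B_u}$, so that the effective diluteness parameter is $\rho_\mu a^d$, small by hypothesis. First I would split the interaction using the box projectors $P_{B_u}, Q_{B_u}$ exactly as in Lemma~\ref{lem:potential_splitting}, renormalizing $v$ into $g$ so that the scattering length appears, and collecting the positive remainder in an analogue of $\mathcal Q_4^{\rm{ren}}$.

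The leading term is then produced algebraically. After the $c$-number substitution of Proposition~\ref{prop:projonz} (with $\rho_z$ now the box condensate density) the condensate--condensate piece $\mathcal Q_0^{\rm{ren}}$ contributes $\tfrac12 \rho_z^2 |B|\widehat g(0)$, while the chemical-potential term $-\rho_\mu\int w_{1,B_u}$ contributes $-\rho_\mu\rho_z|B|\widehat g(0)$ to leading order. Minimising the quadratic expression $\tfrac12\rho_z^2 - \rho_\mu\rho_z$ over $\rho_z$ yields $-\tfrac12\rho_\mu^2|B|\widehat g(0)$, the claimed main term, attained near $\rho_z=\rho_\mu$. The quadratic excitation Hamiltonian is next diagonalised by Theorem~\ref{thm:bogdiag}; the resulting Bogoliubov energy, together with the $\widehat{g\omega}(0)$ contribution, is controlled by the integral estimate of Lemma~\ref{lem:Bog_int_parameters}, which with the GP-scale parameter choices bounds it by $C_B\,\rho_\mu^2|B|\widehat g(0)\lambda_d^{\rho_\mu}$, giving the error-order term.

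The remaining and principal difficulty is to dispose of the cubic ($3Q$) and quartic terms. The quartic remainder is positive and discarded, while the $3Q$ terms are handled by Cauchy--Schwarz against the excitation number, in the spirit of Lemmas~\ref{lem:Q3loc1}--\ref{lem:Q3loc2} and~\ref{lemma:Q3}. What makes these estimates close on the box --- without needing the localization-of-large-matrices device of Appendix~\ref{sec:largeM} --- is that $\mathcal T_u$ carries a genuine spectral gap of order $\ell_{\text{GP}}^{-2} = \rho_\mu\widehat g(0)$; this allows a self-consistent (bootstrap) a priori bound on the number of excitations directly from the positivity of the leading energy, and feeding that condensation bound back into the cubic estimates absorbs them into the gap and the positive diagonal Bogoliubov operator up to errors of order $\rho_\mu^2|B|\widehat g(0)\lambda_d^{\rho_\mu}$. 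The detailed error accounting is exactly the $3D$ computation of \cite{FourBEC} (see also \cite[Appendix B]{FS} and \cite[Appendix D]{2DLHY}), to which I would defer the lengthy but routine bookkeeping.
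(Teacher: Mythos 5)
Your proposal is correct and takes essentially the same route as the paper: the paper itself omits this proof and defers to \cite{FourBEC} (see also \cite[Appendix B]{FS} and \cite[Appendix D]{2DLHY}), and your sketch --- obtaining the main term by completing the square in the condensate density against the chemical potential, diagonalizing the quadratic part with the non-sharp integral bound of Lemma~\ref{lem:Bog_int_parameters}, discarding the positive quartic term, and closing the cubic estimates using the spectral gap of $\mathcal{T}_u$ from Lemma~\ref{lem:loc} together with a bootstrap condensation bound --- is precisely the mechanism of those references. Since on the Gross--Pitaevskii box only the LHY \emph{order} with an unspecified constant $C_B$ is needed (the allowed error there is comparable to a single gap unit $\ell_{\rm{GP}}^{-2}$, and low-energy states carry only $O(1)$ excitations), this non-sharp treatment without soft pairs or localization of large matrices indeed suffices.
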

Plugging the result of this last proposition into \eqref{eq:lowerlambda_by_B}, and since all the $\mathcal{H}_{B_u}$ are unitarily equivalent, we get a lower bound for the large box Hamiltonian, contained in the next theorem.
\begin{theorem}
	\label{thm:small_largeHam_lowerbound}
	We have the following lower bound for the large box Hamiltonian
	\begin{align}
		\mathcal{H}_{\Lambda}(\rho_{\mu})_n \geq \frac{b}{2 \ell^2} n_+  -  \rho_{\mu}^2 |\Lambda| \widehat{g}(0)  \Big(\frac{1}{2}  + C_B \lambda_d^{\rho_{\mu}} \Big).\label{eq:roughlowerboundlarge}
	\end{align}
\end{theorem}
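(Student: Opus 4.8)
The plan is to simply concatenate the two preceding results: the box-localization bound \eqref{eq:lowerlambda_by_B} and the small-box lower bound of the preceding proposition. First I would start from
\begin{equation}
\mathcal{H}_{\Lambda}(\rho_{\mu})_n \geq \sum_{j=1}^n \frac{b}{\ell^2} Q_{\Lambda,j} + \frac{1}{\lvert B \rvert}\int_{\Lambda} \mathcal{H}_{B_u}(\rho_{\mu})_n \,\dd u,
\end{equation}
and use that, on the $n$-particle sector, $\sum_{j=1}^n Q_{\Lambda,j} = n_+$, so that the first term equals $\frac{b}{\ell^2} n_+ \geq \frac{b}{2\ell^2} n_+$. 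Discarding the surplus half of the large-box spectral gap costs nothing for a lower bound and leaves a convenient margin for the applications in Theorem~\ref{thm:apriori_n+}.

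Next I would insert the small-box lower bound into the integral. The key observation is that the right-hand side of the preceding proposition, namely $-\tfrac12\rho_\mu^2\lvert B\rvert\widehat g(0) - C_B\rho_\mu^2\lvert B\rvert\widehat g(0)\lambda_d^{\rho_\mu}$, is independent of the box center $u$, and every small box shares the same volume $\lvert B_u\rvert = \lvert B\rvert$ (indeed the operators $\mathcal H_{B_u}$ are all unitarily equivalent under translation). Hence
\begin{equation}
\frac{1}{\lvert B \rvert}\int_{\Lambda}\mathcal H_{B_u}(\rho_\mu)_n\,\dd u \geq \frac{1}{\lvert B \rvert}\Big(-\tfrac12\rho_\mu^2\lvert B\rvert\widehat g(0) - C_B\rho_\mu^2\lvert B\rvert\widehat g(0)\lambda_d^{\rho_\mu}\Big)\int_\Lambda \dd u = -\rho_\mu^2\lvert\Lambda\rvert\widehat g(0)\Big(\tfrac12 + C_B\lambda_d^{\rho_\mu}\Big),
\end{equation}
where I used $\int_\Lambda\dd u = \lvert\Lambda\rvert$, so that the two volume factors $\lvert B\rvert^{-1}$ and $\lvert B\rvert$ cancel exactly. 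Combining the two displays yields precisely the claimed bound \eqref{eq:roughlowerboundlarge}.

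There is no genuine analytic obstacle at this level, since all the hard work is hidden in the box localization of Lemma~\ref{lem:loc} (hence in \eqref{eq:lowerlambda_by_B}) and in the Gross-Pitaevskii-scale lower bound of the preceding proposition, both of which are invoked as black boxes. The only points requiring care are bookkeeping: verifying the cancellation of the volume factors $\lvert B\rvert$, the identification $\sum_{j=1}^n Q_{\Lambda,j} = n_+$ on the $n$-particle sector, and the fact that the per-box estimate is uniform in the center $u$, which is what makes the $u$-integral trivial. One should also recall that the chemical-potential parameter $\rho_\mu$ is kept general at this stage and is only specialized to $\rho_\mu = \rho$ later.
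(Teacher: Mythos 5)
Your proof is correct and is essentially the paper's own argument: the paper obtains Theorem~\ref{thm:small_largeHam_lowerbound} precisely by plugging the small-box lower bound into \eqref{eq:lowerlambda_by_B}, using the unitary equivalence of the $\mathcal{H}_{B_u}$ (uniformity in $u$) so that the $u$-integral just produces the factor $\lvert \Lambda\rvert/\lvert B\rvert$, and identifying $\sum_{j=1}^n Q_{\Lambda,j}=n_+$. The only cosmetic difference is that you make explicit the harmless relaxation from $b\,\ell^{-2}n_+$ to $\tfrac{b}{2}\ell^{-2}n_+$.
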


To lower bound the large box Hamiltonian by the spectral gap plus the energy contribution up to the Lee-Huang-Yang level, allows us to finally prove the bound on the number of excitations for states of low energy.
\begin{proof}[Proof of Theorem~\ref{thm:apriori_n+}] We only sketch the proof, details can be found in \cite[Appendix D]{2DLHY} and \cite[Appendix B]{FS}.
	Choosing $\rho_{\mu} = \rho$ we have  that the original large box Hamiltonian can be expressed, in relation to the grand canonical one, as 
	\begin{equation}
	\mathcal{H}_N = \mathcal{H}_{\Lambda}(\rho)_N + \rho\widehat{g}(0) N.
	\end{equation} Therefore, comparing the upper bound from the assumption \eqref{eq:assumption_lowE_psi} on $\Psi$ and the lower bound from Theorem~\ref{thm:small_largeHam_lowerbound}, we get
	\begin{equation}
		\frac{b}{2 \ell^2} \langle n_+\rangle_{\Psi}  + \frac{1}{2} \rho^2 |\Lambda| \widehat{g}(0)  -C_B\rho^2|\Lambda|\widehat{g}(0)\f 
		\leq \langle \mathcal{H}\rangle_{\Psi} \leq \frac{1}{2} \rho^2 |\Lambda| \widehat{g}(0) + C_B \rho^2 |\Lambda|\widehat{g}(0)\f, 
	\end{equation}
	which yields, for $n_+$, 
	\begin{equation}
		\frac{b}{2 \ell^2} \langle n_+\rangle_{\Psi} \leq 2 C_B \rho^2 |\Lambda| \widehat{g}(0) \f,
	\end{equation}
	giving the desired bound.
	
	The bound of $n_+^H$ follows from the one of $n_+$ and a lower bound on the Hamiltonian in the large box $\Lambda$, and we give a sketch of the proof below.
	
	We write the Laplacian in second quantization and on the $N$ boson space as
	\begin{equation}
	-\Delta = \sum_{k \in \Lambda^*} \tau_k a_k^{\dagger} a_k + b\frac{K_L^2}{\ell^2}n_+^H, 
	\end{equation}
where, for a $b<\frac{1}{100}$,
\begin{equation}
\tau_k := |k|^2 -b\one_{[K_L \ell^{-1}, +\infty)}(k) \frac{K_L^2}{\ell^2}, 
\end{equation} 
isolating, in this way, the spectral gap for high momenta. Thanks to this observation and Proposition~\ref{prop:Hsecondquant}, the Hamiltonian acting on the $N$ Fock space sector can be bounded as
\begin{align*}
\mathcal{H}_n &\geq \mathcal{K}_{\text{quad}}  + b\frac{K_L^2}{\ell^2}n_+^H + \frac{n_0(n_0-1)}{2 \vert \Lambda \vert} \big( \widehat g (0) + \widehat{g \omega}(0) \big)  \nonumber \\
& \quad + \mathcal Q_3^{\rm{ren}} + \mathcal Q_4^{\rm{ren}}- C n\widehat g(0)   \frac{ n_+}{\vert \Lambda \vert },
\end{align*}
where by $\mathcal{K}_{\text{quad}}$ we denoted the quadratic part of the Hamiltonian in $a_k^{\#}$:
\begin{equation}
\mathcal{K}_{\text{quad}} := \sum_{k \in \Lambda^*} \tau_k a_k^{\dagger} a_k + \frac{1}{2|\Lambda|} \sum_{k \in \Lambda^*} \widehat g_k (a_0^\dagger a_0^\dagger a_k a_{-k} + h.c.).
\end{equation}
Here we do not need to reach the Lee-Huang-Yang precision, therefore we do not have to work with soft pairs and the bound on $\mathcal{Q}_3^{\text{ren}}$ and $\mathcal{Q}_4^{\text{ren}}$ is easier. It is obtained by an application of a Cauchy-Schwarz inequality on $\mathcal{Q}_3^{\text{ren}}$ and estimating the missing terms to reconstruct $\mathcal{Q}_4^{\text{ren}}$ in a similar way as in \eqref{eq:PQPQ_n+estimate}:
\begin{equation}\label{n+H_Q3Q4}
\mathcal{Q}_3^{\text{ren}} + \frac{1}{2}\mathcal{Q}_4^{\text{ren}} \geq -C \frac{n_0}{|\Lambda|}n_+ \widehat{g}(0).
\end{equation} 
We introduce a new pair of creation and annihilation operators
\begin{equation}
b_k:= a_0^{\dagger} a_k, \qquad b^{\dagger}_k := a_0 a^{\dagger}_k,
\end{equation}
and adding and subtracting 
\begin{equation}
A_0 := \frac{\widehat{g}(0) }{2\lvert \Lambda \rvert }\sum_{k \in \Lambda^*}(b_k^{\dagger} b_k + b_{-k}^{\dagger} b_{-k}), 
\end{equation}
where $|A_0| \leq C N \widehat{g}(0) \frac{n_+}{|\Lambda|}$, we get
\begin{align*}
\mathcal{K}_{\text{quad}} + A_0 &\geq   \frac{1}{2\lvert \Lambda \rvert} \sum_{k \in \Lambda^*}\Big( \mathcal{A}_k (b_k^{\dagger} b_k + b_{-k}^{\dagger} b_{-k})+ \widehat{g}_k (b_k^{\dagger} b_{-k}^{\dagger} + b_{-k}b_k)  \Big) 
\end{align*}
where $\mathcal{A}_k := \frac{|\Lambda|}{(N+1)} \tau_k + \widehat{g}(0)$. By the standard Bogoliubov theory of diagonalization and recalling the definition of $G_d$ in \eqref{defG}, we bound the previous expression by the Bogoliubov integral
\begin{equation}
\mathcal{K}_{\text{quad}} + A_0 \geq  I(d)  -\frac{N(N+1)}{2\lvert \Lambda \rvert} \widehat{g\omega}(0),
\end{equation}
with
\begin{equation}
I(d) := -\frac{N}{2(2\pi)^d} \int_{\mathbb{R}^d} \Big(\mathcal{A}_k -\sqrt{\mathcal{A}_k^2- \widehat{g}_k^2} - \frac{N+1}{\lvert \Lambda \rvert}G_d(k) \Big) \mathrm{d}k. 
\end{equation}
We calculate the integral in a similar way as in Lemma~\ref{lem:Bog_int_parameters}, splitting into two regions for momenta higher or lower than $K_L\ell^{-1}$, obtaining, since $K_{\ell} \ll K_L$, that there exists a $C >0$, such that 
\begin{align}\label{n+H_bogint}
I(3) \geq - C \frac{N(N+1)}{\lvert \Lambda \rvert} \widehat{g}(0) \sqrt{\rho \widehat{g}(0)^3} \frac{K_L}{K_{\ell}},\qquad
I(2) \geq -  C \frac{N(N+1)}{\lvert \Lambda \rvert} \widehat{g}(0)^2,
\end{align}
Collecting the inequalities \eqref{n+H_bogint}, the bound on $A_0$ and \eqref{n+H_Q3Q4}, using the bound we obtained for $n_+$ and considering the quadratic form of the $N-$particle state $\Psi$ from the assumptions, we get the following lower bound for the Hamiltonian: 
\begin{equation}
\langle \mathcal{H} \rangle_{\Psi} \geq b\frac{ K_L^2}{ \ell^2}\langle n_+^H\rangle_{\Psi} + \frac{1}{2}\langle \mathcal{Q}_4^{\text{ren}}\rangle_{\Psi}+ \frac{1}{2} \rho N \widehat{g}(0)  \times \begin{dcases}
\Big(1 - C \sqrt{\rho a^3} \frac{K_{L}}{K_{\ell}}\Big), \quad &\text{for } d=3,\\
\Big(1 - C \widehat{g}(0) \Big), \quad &\text{for } d=2,
\end{dcases}
\end{equation}
which, together with the assumption \eqref{eq:assumption_lowE_psi} on $\Psi$, gives the bounds
\begin{align}
&\langle \mathcal{Q}_4^{\text{ren}}\rangle_{\Psi} \leq C \rho N \widehat{g}(0) \sqrt{\rho a^3}, \\
& b\frac{ K_L^2}{2 \ell^2} \langle n_+^H \rangle_{\Psi} \leq C \rho N \widehat{g}(0)\times \begin{dcases}
C \sqrt{\rho a^3} \frac{K_{L}}{K_{\ell}}, \quad &\text{for } d= 3,\\
C \widehat{g}(0), \quad &\text{for }d=2, 
\end{dcases}
\end{align}
from which the bounds on $n_+^H$ and $\mathcal{Q}_4^{\text{ren}}$ follow.
\end{proof}

\section{Parameters}\label{app:parameters}
In this appendix we list the parameters needed in the proof and the relations they have to satisfy. Finally, in \eqref{eq:para.choice} below we give a concrete choice satisfying those conditions.
Throughout all the paper, the following parameters are used 
\begin{equation}\label{eq:eps}
 \varepsilon_K, \varepsilon_{\rm{gap}} \ll 1 \ll \mathcal{M}, K_{\ell}, K_L, K_H,
\end{equation}
We use the notation $A \ll B$ to mean 
\begin{equation}
A \ll B \; \Leftrightarrow \; \begin{cases} A \leq  C (\rho a^3)^{\zeta} B, & \text{if } d=3, \\
A \leq  C \delta^{\zeta} B, & \text{if } d=2.
\end{cases}
\end{equation}
for a constant $C>0$ and a $\zeta >0$.

Recall that $K_L$ and $K_H$ define the sets of low and high momenta respectively. They must satisfy 
\begin{equation}\label{KL_relations}
	K_{\ell} \ll K_{\ell}^4 \ll 	 K_L \ll K_H.
\end{equation}
The chain of conditions
is important in many inequalities throughout all the paper.
$\mathcal M$ is the bound on $n_+^L$ that we allow our states to satisfy. Our localization result on $n_+^L$, Theorem~\ref{thm:excitationrestriction}, respectively requires in equation \eqref{eq:d1d2.01} and in equation \eqref{eq:M}
\begin{align}\label{eq:boundonM}
\mathcal M \gg  \ell \rho^{1/2} K_H^{1/2} \Vert v \Vert_1^{1/2},
\end{align}
and 
\begin{align}\label{eq:boundonM1}
	\mathcal M \gg  \ell^{2} \rho^{2}  \vert \Lambda \vert\widehat{g}(0)\lambda_{d}^{\rm{LHY}}.
\end{align}
The parameter $\mathcal{M}$ has  to be smaller than the total number of particles according to the following condition
\begin{equation} \label{eq:boundonM2}
\frac{\mathcal{M}}{N} \ll \Big(\frac{K_{\ell}}{K_L}\Big)^4\ll 1,
\end{equation}
where the last inequality follows from \eqref{KL_relations} using \eqref{eq:eps}.
The errors when localizing the 3Q terms in Proposition~\ref{prop:Q3loc} require the following condition
\begin{align}\label{M-KH_relation}
\frac{\mathcal{M}}{N}K_H^d \ll 1.
\end{align}
When dealing with the 3Q terms, we need a small fraction $\varepsilon_K \ll 1$ of $\mathcal K_H^{\rm{diag}}$ to control some errors. This coefficient needs to be large enough,
\begin{align} 
\varepsilon_K^2 &\gg \ell^{8-d} \rho^3 \widehat g(0)^4 K_H^{-8} K_L^d \mathcal M. \label{eq:rel_epsK_KM3}
\end{align}
Other errors from 3Q are controlled by $n_+^H$ using
\begin{equation}
K_\ell^2 \gg \ell^{4-d} \rho \widehat g(0)^2 K_H^{-2} K_L^d \mathcal M , \label{eq:rel_epsK_KM2} 
\end{equation}
or by a fraction $\varepsilon_{\rm{gap}} \ll 1$ of the spectral gap, which needs to satisfy
\begin{align}
\rho^3 \widehat g(0)^4 \ell^{8-d} K_H^{d-6} &\ll \varepsilon_{\rm{gap}}, \label{eq:epsgap1} \\
\varepsilon_K \ell^2 \rho \widehat g(0) + \mathcal E_d \ell^2 \rho +  K_\ell^2 K_H^{-1} &\ll \varepsilon_{\rm{gap}}, \label{eq:epsgap2}\\
\frac R \ell K_L \ell^2 \rho \widehat g(0) &\ll \varepsilon_{\rm{gap}}, \label{eq:epsgap3}
\end{align}
where $\mathcal E_d$ is the error from Lemma~\ref{lem:gomega.approx}.

We explain here how to get explicit choices of parameters, starting from any box $\Lambda$ satisfying
\begin{equation} \label{condition.on.Kl} K_\ell \ll \begin{cases}
	\delta^{-\frac {1}{26}}, \; &\text{if} \; d=2, \\
	(\rho a^3)^{-\frac 1 {28}}, \; &\text{if} \; d=3.
\end{cases}
\end{equation}
Given such a $K_\ell$, there exists an $\varepsilon \in (0,1)$ small enough such that
\begin{equation}
\begin{cases}
	K_\ell^{-26-19 \varepsilon} \gg \delta, \; &\text{if} \; d=2, \\
	K_{\ell}^{-28-16 \varepsilon} \gg \rho a^3, \; &\text{if} \; d=3.
\end{cases}
\end{equation}
Then, with the choice
\begin{equation}\label{eq:para.choice}
\begin{matrix}
	K_L = K_\ell^{4+2\varepsilon}, &
	K_H = K_\ell^{4+3\varepsilon}, &
	\frac{\mathcal M }{N} = K_\ell^{-12 -10\varepsilon}, \\
	\varepsilon_{\rm{gap}} = K_\ell^{-2},&  \varepsilon_K = K_\ell^{-18 +2d +(d-16)\varepsilon},  &
\end{matrix}
\end{equation}
all the conditions \eqref{KL_relations}, \eqref{eq:boundonM}, \eqref{eq:boundonM2}, \eqref{M-KH_relation}, \eqref{eq:rel_epsK_KM2}, \eqref{eq:rel_epsK_KM3}, \eqref{eq:epsgap1}, \eqref{eq:epsgap2}, \eqref{eq:epsgap3} are satisfied, for potentials satisfying
$	\| v \|_1 \leq C $ and $ \rho \widehat g(0) R^2  \leq K_\ell^{-9}$.

\bibliographystyle{hplain}
\bibliography{procrefs}

\end{document}